\documentclass[acmsmall,nonacm]{acmart}

\setcopyright{none} 
\setcctype{by}
\acmDOI{10.1145/3808328}
\acmYear{2026}
\acmJournal{PACMPL}
\acmVolume{10}
\acmNumber{PLDI}
\acmArticle{250}
\acmMonth{6}
\acmSubmissionID{pldi26main-p595-p}
\received{2025-11-14}
\received[accepted]{2026-04-03}

\newcommand\appref[1]{App.~\ref{#1}}

\usepackage{subcaption} 

\usepackage{prftree} 
\usepackage{tikz}
\usetikzlibrary{shapes, positioning, arrows, positioning, shapes.multipart, calc, decorations.pathmorphing, fit, backgrounds}
\usepackage{bm}
\hypersetup{hypertexnames=false} 

\DeclareMathOperator*{\mcup}{\text{\raisebox{0.25ex}{\scalebox{0.8}{$\bigcup$}}}}
\DeclareMathOperator*{\motimes}{\text{\raisebox{0.25ex}{\scalebox{0.8}{$\bigotimes$}}}}

\newcommand{\const}[1]{\mathsf{#1}}
\newcommand{\hasrt}[1]{\blue{[#1]}}
\newcommand{\hasrtl}[1]{\blue{[#1}}
\newcommand{\hasrtr}[1]{\blue{#1]}}
\newcommand{\while}[2]{\resv{while}\ #1\ \resv{do}\ #2\ \resv{od}}
\NewDocumentCommand{\Mono}{ooo}{
	\IfNoValueTF{#1}
	{\const{mono}} 
	{\const{mono}(#1,#2,#3)} 
}
\NewDocumentCommand{\hAdd}{ooo}{
	\IfNoValueTF{#1}
	{\oplus} 
	{#1 = #2 \oplus #3} 
}
\NewDocumentCommand{\hMonadAdd}{ooo}{
	\IfNoValueTF{#1}
	{\oplus} 
	{\IfNoValueTF{#3}
	{#1 \oplus #2} 
	{#1 = #2 \oplus #3} 
	}
}
\NewDocumentCommand{\addStates}{ooo}{
	\IfNoValueTF{#1}
	{\oplus} 
	{#1 = #2 \oplus #3} 
}
\NewDocumentCommand{\plw}{ooo}{
	\IfNoValueTF{#1}
	{\const{plw}} 
	{\const{plw}(#1,#2,#3)} 
}
\NewDocumentCommand{\prw}{ooo}{
	\IfNoValueTF{#1}
	{\const{prw}} 
	{\const{prw}(#1,#2,#3)} 
}
\NewDocumentCommand{\pw}{ooo}{
	\IfNoValueTF{#1}
	{\const{pw}} 
	{\const{pw}(#1,#2,#3)} 
}
\NewDocumentCommand{\Res}{oo}{
	\IfNoValueTF{#1}
	{\sqsupseteq} 
	{#1\sqsupseteq#2} 
}
\NewDocumentCommand{\Sem}{oo}{
	\IfNoValueTF{#1}
	{\const{sem}} 
	{\const{sem}(#1,#2)} 
}
\NewDocumentCommand{\md}{o}{
	\IfNoValueTF{#1}
	{\const{md}} 
	{\const{md}(#1)} 
}
\NewDocumentCommand{\fv}{o}{
	\IfNoValueTF{#1}
	{\const{fv}} 
	{\const{fv}(#1)} 
}
\NewDocumentCommand{\lowa}{oo}{ 
	\IfNoValueTF{#1}
	{\const{low}} 
	{\IfNoValueTF{#2}
	 {\const{low}(#1)} 
	 {\const{low}(#1,#2)} 
	}
}
\NewDocumentCommand{\List}{oo}{
	\IfNoValueTF{#1}
	{\const{list}} 
	{\const{list}(#1,#2)} 
}
\NewDocumentCommand{\Lseg}{ooo}{
	\IfNoValueTF{#1}
	{\const{lseg}} 
	{\const{lseg}(#1,#2,#3)} 
}
\NewDocumentCommand{\len}{o}{
	\IfNoValueTF{#1}
	{\const{len}} 
	{\const{len}(#1)} 
}
\NewDocumentCommand{\Rel}{ooo}{
	\IfNoValueTF{#1}
	{\const{R}} 
	{\const{R}_{#1}(#2,#3)} 
}
\NewDocumentCommand{\F}{o}{
	\IfNoValueTF{#1}
	{\mathcal{F}} 
	{\mathcal{F}(#1)} 
}
\NewDocumentCommand{\Pow}{o}{
	\IfNoValueTF{#1}
	{\mathcal{P}} 
	{\mathcal{P}(#1)} 
}
\NewDocumentCommand{\R}{o}{
	\IfNoValueTF{#1}
	{\mathcal{R}} 
	{\mathcal{R}(#1)} 
}
\NewDocumentCommand{\baseok}{o}{
	\IfNoValueTF{#1}
	{\text{no }\sigma(\er:\sth)} 
	{\text{no }\sigma(\er:\sth)\text{ in }#1} 
}
\newcommand{\resv}[1]{\textnormal{\textbf{#1}}}
\newcommand{\la}{\langle}
\newcommand{\ra}{\rangle}
\newcommand{\tb}[1]{\la #1\ra} 
\newcommand{\fs}[1]{\forall\tb{#1}\ldotp} 
\newcommand{\es}[1]{\exists\tb{#1}\ldotp} 
\newcommand{\fn}[1]{\forall #1\ldotp} 
\newcommand{\en}[1]{\exists #1\ldotp} 
\newcommand{\sth}{\_}
\newcommand{\Skip}{\resv{skip}}
\newcommand{\Assign}[2]{#1\coloneqq#2}
\newcommand{\Havoc}[1]{#1\coloneqq\mathrm{nonDet}()}
\newcommand{\Assume}[1]{\resv{assume}\ #1}
\newcommand{\Assert}[1]{\resv{assert}\ #1}
\newcommand{\Alloc}[1]{#1\coloneqq\mathrm{alloc}()}
\newcommand{\Write}[2]{[#1]\coloneqq#2}
\newcommand{\Read}[2]{#1\coloneqq[#2]}
\newcommand{\Free}[1]{\mathrm{free}(#1)}
\newcommand{\If}[3]{\resv{if}\ #1\ \resv{then}\ #2\ \resv{else}\ #3\ \resv{fi} }
\newcommand{\While}[2]{\resv{while}\ #1\ \resv{do}\ #2\ \resv{od}}
\newcommand{\cHavoc}[3]{#1\coloneqq\mathrm{randInt}(#2,#3)}
\newcommand{\cEr}{\resv{Aborts}\erl}
\newcommand{\cDv}{\resv{Aborts}\dvl}
\newcommand{\okl}{_{\mathrm{ok}}}
\newcommand{\erl}{_{\mathrm{er}}}
\newcommand{\ukl}{_{\mathrm{uk}}}
\newcommand{\dvl}{_{\mathrm{dv}}}
\newcommand{\ok}{{\mathrm{ok}}}
\newcommand{\er}{{\mathrm{er}}}
\newcommand{\uk}{{\mathrm{uk}}}
\newcommand{\el}{_{\epsilon}}
\newcommand{\elp}{_{\epsilon'}}
\newcommand{\PVars}{\const{PVars}}
\newcommand{\GVars}{\const{DVars}}
\newcommand{\SVars}{\const{SVars}}
\newcommand{\defiff}{\overset{\text{def}}{\iff}}
\newcommand{\emp}{\mathrm{emp}}
\newcommand{\UNIV}{\const{UNIV}}

\newtheorem{lemma}{Lemma}
\newtheorem{theorem}{Theorem}

\usepackage{listings}

\definecolor{darkgreen}{rgb}{0.0,0.6,0.0}

\newcommand\secref[1]{Sect.~\ref{#1}}
\newcommand\figref[1]{Fig.~\ref{#1}}
\newcommand\thmref[1]{Thm.~\ref{#1}}

\newcommand{\wrt}{{{w.r.t.\@}}}

\newcommand{\eg}{{{e.g.,~}}}
\newcommand{\ie}{{{i.e.,~}}}

\newcommand{\greenify}[1]{\textcolor{darkgreen}{#1}}

\usepackage{mathtools}

\makeatletter

\newskip \point

\def \premisespacing{\quad}

\def \RulePremisesNewlineMore[#1]#2.#3#4{\@ifnextchar\bgroup{\RulePremisesNewlineMore[#1]{#2}.{#3\premisespacing#4}}{\@ifnextchar.{\RulePremisesNewline[#1]{{\begin{array}{c}#2\\#3\premisespacing#4\end{array}}}}{\RuleMultiPremise[#1]{{\begin{array}{c}#2\\#3\end{array}}}{#4}}}}

\def \RulePremisesNewline[#1]#2.#3{\@ifnextchar\bgroup{\RulePremisesNewlineMore[#1]{#2}.{#3}}{\@ifnextchar.{\RulePremisesNewline[#1]{{\begin{array}{c}#2\\#3\end{array}}}}{\RuleMultiPremise[#1]{#2}{#3}}}}

\def \RuleMultiPremise[#1]#2#3{\@ifnextchar\bgroup{\RuleMultiPremise[#1]{#2\premisespacing#3}}{\@ifnextchar.{\RulePremisesNewline[#1]{#2\premisespacing#3}}{\prooftree #2\justifies#3 \using{#1}\endprooftree}}}

\def \RuleWithName[#1]#2{\@ifnextchar\bgroup {\RuleMultiPremise[#1]{#2}}{\@ifnextchar.{\RulePremisesNewline[#1]{#2}}{\prooftree \justifies #2 \using{#1} \endprooftree}}}

\def \RuleWithInfo[#1]{\@ifnextchar[{\RuleWithNameAndCondition[#1]}{\RuleWithName[(#1)]}}

\def \RuleWithNameAndCondition[#1][#2]{\RuleWithName[(#1)^{#2}]}

\def \Inf{\proofrulebaseline=2ex \abovedisplayskip12\point\belowdisplayskip12\point \abovedisplayshortskip8\point\belowdisplayshortskip8\point \@ifnextchar[{\RuleWithInfo}{\RuleWithName[ ]}}

\makeatother

\newcommand{\simpleHoare}[3]{\ensuremath{ \blue{ [ #1 ] } \; #2 \; \blue{ [ #3 ] } }}

\newcommand{\largeHoare}[3]{
    \begin{center}
        $\blue{\left[ #1 \right]}$ \\
        $#2$ \\
        $\blue{\left[ #3 \right]}$
    \end{center}
}

\newcommand{\hoare}[3]{\ensuremath{{\models} \simpleHoare{#1}{#2}{#3} }}
\newcommand{\nothoare}[3]{\ensuremath{{\centernot{\models}} \simpleHoare{#1}{#2}{#3} }}
\newcommand{\shoare}[3]{\ensuremath{{\vdash} \simpleHoare{#1}{#2}{#3} }}

\newcommand{\normalHoare}[3]{\ensuremath{\brown{\{ #1 \}} \; #2 \; \brown{\{ #3 \}} }}

\newcommand{\update}[3]{\ensuremath{#1[#2 \mapsto #3]}}

\newcommand{\acc}[1]{\ensuremath{#1 \mapsto \_}}
\newcommand{\pointsto}[2]{\ensuremath{#1 \mapsto #2}}
\newcommand{\pointstobot}[1]{\ensuremath{#1 \mapsto \bot}}

\newcommand{\cseq}{ \mathbf{;} \; }
\newcommand{\cassign}[2]{\mathit{#1} \coloneqq \mathit{#2}}

\newcommand{\cfree}[1]{\ensuremath{\mathbf{free}(#1)}}
\newcommand{\calloc}[1]{\cassign{#1}{\mathbf{alloc}()}}
\newcommand{\cread}[2]{\cassign{#1}{[#2]}}

\definecolor{royalblue}{rgb}{0.25, 0.41, 0.88}
\newcommand{\blue}[1]{{\color{royalblue}{#1}}}

\definecolor{somebrown}{rgb}{0.8, 0.35, 0.1}
\newcommand{\brown}[1]{{\color{somebrown}{#1}}}

\usepackage{pftools}
\usepackage{mathpartir}

\newcommand{\applyruleOld}[3]{\ensuremath{\infer*[right=\ref{#1}]{#2}{#3}}}

\NewDocumentCommand{\applyrule}{m O{} O{} O{} O{} O{} m}{
  \IfBlankTF{#2}{
    \applyruleOld{#1}{}{#7}
  }{
    \IfBlankTF{#3}{
      \applyruleOld{#1}{#2}{#7}
    }{
      \IfBlankTF{#4}{
        \applyruleOld{#1}{#2 \\ #3}{#7}
      }{
        \IfBlankTF{#5}{
          \applyruleOld{#1}{#2 \\ #3 \\ #4}{#7}
        }{
            \IfBlankTF{#6}{
                \applyruleOld{#1}{#2 \\ #3 \\ #4 \\ #5}{#7}
            }{
                \applyruleOld{#1}{#2 \\ #3 \\ #4 \\ #5 \\ #6}{#7}
            }
        }
      }
    }
  }
}

\NewDocumentCommand{\newrule}{m m O{} O{} O{} O{} O{} m}{
  \IfBlankTF{#3}{
    \inferhref{#1}{#2}{}{#8}
  }{
    \IfBlankTF{#4}{
      \inferhref{#1}{#2}{#3}{#8}
    }{
      \IfBlankTF{#5}{
        \inferhref{#1}{#2}{#3 \\ #4}{#8}
      }{
        \IfBlankTF{#6}{
          \inferhref{#1}{#2}{#3 \\ #4 \\ #5}{#8}
        }{
            \IfBlankTF{#7}{
                \inferhref{#1}{#2}{#3 \\ #4 \\ #5 \\ #6}{#8}
            }{
                \inferhref{#1}{#2}{#3 \\ #4 \\ #5 \\ #6 \\ #7}{#8}
            }
        }
      }
    }
  }
}

\makeatletter
\newcommand{\leqnomode}{\tagsleft@true}
\newcommand{\reqnomode}{\tagsleft@false}
\makeatother

\usepackage{marvosym} 

\begin{document}

\title{Hyper Separation Logic}

\author{Trayan Gospodinov}
\orcid{0009-0007-6169-7939}
\affiliation{%
	\institution{INSAIT, Sofia University “St. Kliment Ohridski”}
	\city{Sofia}
	\country{Bulgaria}
}
\email{trayan.gospodinov@insait.ai}

\author{Peter Müller}
\orcid{0000-0001-7001-2566}
\affiliation{%
	\institution{ETH Zurich}
	\city{Zurich}
	\country{Switzerland}
}
\email{peter.mueller@inf.ethz.ch}

\author{Thibault Dardinier}
\orcid{0000-0003-2719-4856}
\affiliation{%
	\institution{ETH Zurich}
	\city{Zurich}
	\country{Switzerland}
}
\email{thibault.dardinier@inf.ethz.ch}

\begin{abstract}
Many important functional and security properties---including non-interference, determinism, and generalized non-interference (GNI)---are hyperproperties, i.e., properties relating multiple executions of a program. Existing separation logics allow one to reason about specific classes of hyperproperties, e.g., $\forall\forall$-hyperproperties such as non-interference and $\exists\exists$-properties such as non-determinism. However, they do not support quantifier alternation, which is for instance needed to express GNI. The only existing logic that can reason about such properties is Hyper Hoare Logic, but it does not support heap-manipulating programs and, thus, is not applicable to common imperative programs.

This paper introduces Hyper Separation Logic (HSL), the first program logic that supports modular reasoning about hyperproperties with arbitrary quantifier alternation over programs that manipulate the heap. HSL generalizes Hyper Hoare Logic with a novel hyper separating conjunction that lifts the standard separating conjunction to sets of states, enabling a generalized frame rule for hyperproperties. We prove HSL sound in Isabelle/HOL and demonstrate its expressiveness for hyperproperties that lie beyond the reach of existing separation logics.
\end{abstract}

\begin{CCSXML}
	<ccs2012>
	<concept>
	<concept_id>10003752.10003790.10002990</concept_id>
	<concept_desc>Theory of computation~Logic and verification</concept_desc>
	<concept_significance>500</concept_significance>
	</concept>
	<concept>
	<concept_id>10003752.10003790.10011742</concept_id>
	<concept_desc>Theory of computation~Separation logic</concept_desc>
	<concept_significance>500</concept_significance>
	</concept>
	<concept>
	<concept_id>10003752.10003790.10011741</concept_id>
	<concept_desc>Theory of computation~Hoare logic</concept_desc>
	<concept_significance>500</concept_significance>
	</concept>
	</ccs2012>
\end{CCSXML}

\ccsdesc[500]{Theory of computation~Logic and verification}
\ccsdesc[500]{Theory of computation~Separation logic}
\ccsdesc[500]{Theory of computation~Hoare logic}

\keywords{Hyperproperties, Program Logic, Incorrectness Logic, Generalized Non-Interference, Hyper Separating Conjunction, Labeled Separating Conjunction}

\maketitle

\section{Introduction}
\label{sec:introduction}
Many important functional and security properties are \emph{hyperproperties}~\cite{clarksonHyperproperties2008},
\ie properties that relate \emph{multiple} executions of a program.
For example, information flow security, which requires that confidential inputs do not leak through public outputs,
can be formalized as \emph{non-interference} (NI)~\cite{goguenSecurityPoliciesSecurity1982}:
NI requires that any two executions with the same public inputs (but potentially different confidential inputs)
produce the same public outputs.
NI is a $\forall\forall$-hyperproperty (also called \emph{2-hypersafety}), \ie a property of all \emph{pairs} of executions.
Important functional hyperproperties
include
determinism ($\forall \forall$),
transitivity ($\forall \forall \forall$, important for custom comparators~\cite{sousaCartesianHoareLogic2016}),
or associativity ($\forall \forall \forall \forall$, important for parallel reductions).

Many hyperproperties, however, fall outside the class of $\forall^k$-hyperproperties because they require existential quantification over executions.
For example, reachability is an $\exists$-hyperproperty, as it asserts the existence of an execution
that reaches a given state, while non-determinism is an $\exists\exists$-hyperproperty, requiring two
executions with the same inputs that produce different outputs.
More expressive hyperproperties require \emph{quantifier alternation} (\ie $\forall^* \exists^*$ or $\exists^* \forall^*$).
An important example is \emph{generalized non-interference} (GNI)~\cite{mcculloughNoninterferenceComposabilitySecurity1988},
a weakening of NI for non-deterministic programs.
GNI is a $\forall\forall\exists$-hyperproperty: it permits two executions $\tau_1$ and $\tau_2$ with the same low inputs
to have different low outputs, provided that there exists a third execution $\tau$ with the same low inputs,
the same high inputs as $\tau_1$, and the same low outputs as $\tau_2$.
Hyperproperties with $\exists^* \forall^*$ prefixes are also important, for example to express the existence of a minimum
or a violation of GNI.

As reasoning about hyperproperties has gained importance,
many Hoare-like program logics for hyperproperties have been proposed.
Relational Hoare Logic~\cite{bentonSimpleRelationalCorrectness2004} extends Hoare Logic~\cite{floydAssigningMeaningsPrograms1967, hoareAxiomaticBasisComputer1969} to $\forall \forall$-hyperproperties, and has later been extended to $\forall^k$-hyperproperties~\cite{sousaCartesianHoareLogic2016,dosualdoProvingHypersafetyCompositionally2022}.
A complementary line of work has developed program logics for $\exists$ \cite{devriesReverseHoareLogic2011,ohearnIncorrectnessLogic2019} and $\exists \exists$~\cite{murrayUnderApproximateRelationalLogic2020} hyperproperties,
targeting reachability and incorrectness.
Recent work has either unified safety and incorrectness reasoning in a single logic~\cite{zilbersteinOutcomeLogicUnifying2023},
or has developed logics for $\forall^* \exists^*$-hyperproperties~\cite{dickersonRHLEModularDeductive2022,antonopoulosAlgebraAlignmentRelational2023,beutnerAutomatedSoftwareVerification2024} (such as GNI).
The only program logic that handles arbitrary quantifier alternation is Hyper Hoare Logic (HHL)~\cite{dardinierHyperHoareLogic2024}.

None of these logics apply to heap-manipulating programs.
Separation Logic (SL)~\cite{reynoldsSeparationLogicLogic2002} supports reasoning about such programs
in a modular, local way via its \emph{frame rule}.
Thus, unsurprinsingly, many Hoare logics for hyperproperties have been extended to support SL reasoning.
Separation logics for $\forall^k$-hyperproperties include Relational Separation Logic~\cite{yangRelationalSeparationLogic2007}
(for $k = 2$), and LGTM~\cite{gladshteinMechanisedHypersafetyProofs2024} (for unbounded $k$),
while separation logics for $\exists^k$-hyperproperties include Incorrectness Separation Logic~\cite{raadLocalReasoningPresence2020,raadConcurrentIncorrectnessSeparation2022}
and Sufficient Incorrectness Separation Logic~\cite{ascariSufficientIncorrectnessLogic2024}
(for $k = 1$), and InsecSL~\cite{murrayCompositionalVulnerabilityDetection2023} (for $k = 2$).
Additionally, Outcome Separation Logic~\cite{zilbersteinOutcomeSeparationLogic2024} and Exact Separation Logic~\cite{maksimovicExactSeparationLogic2023}
unify safety and reachability reasoning for \emph{single} executions, \ie they support a mix of $\forall$ and $\exists$ properties.

However, no existing program logic supports reasoning about
$\forall^+ \exists^+$, $\exists^+ \forall^+$, or even $\exists^k$ (for $k > 2$)
hyperproperties for heap-manipulating programs. Without a logic that supports such properties, it is extremely difficult to prove useful properties such as GNI for realistic imperative programs.

\paragraph{This work}
We present \emph{Hyper Separation Logic} (HSL), the first program logic that supports proving hyperproperties with arbitrary quantifier alternation for heap-manipulating programs.
HSL builds on the foundations of HHL:
It establishes \emph{hyper-triples} of the form $\simpleHoare{P}{C}{Q}$,
where $P$ and $Q$ are \emph{hyper-assertions}, \ie arbitrarily-quantified predicates over sets of states (including a heap).
HSL supports local, modular reasoning via a novel \emph{generalized frame rule} for hyper-assertions:
From a local triple $\simpleHoare{P}{C}{Q}$,
one can derive a new triple $\simpleHoare{P \star F}{C}{Q \star F}$ (under suitable side conditions),
which can be applied in a larger context.
Here, $\star$ is a novel \emph{hyper separating conjunction},
which generalizes the separating conjunction $*$ from SL\@.
Intuitively, the hyper-assertion $P \star F$ expresses that a set $S$ of states can be, per state, split into two sets of states $S_P$ (satisfying $P$) and $S_F$ (satisfying $F$), such that each state in $S$ is a disjoint union of a state in $S_P$ and a state in $S_F$.

\paragraph{Contributions}
Our main contributions are:
\begin{itemize}
	\item We present \emph{Hyper Separation Logic} (HSL), the first separation logic for hyperproperties with arbitrary quantifier alternation.
	\item We introduce a novel \emph{hyper separating conjunction} connective between hyper-assertions, which preserves upper bounds (for universally quantified executions) and lower bounds (for existentially quantified executions) of sets of states.
	\item We introduce a novel definition of hyper-triple validity, which guarantees soundness of the generalized frame rule by construction. Our definition is based on a novel decomposition of $\forall^*$-hyperproperties into $\forall$-properties.
	\item To support reachability reasoning, we introduce a novel notion of \emph{unknown} states,
	and weaken the definition of hyper-triple validity accordingly.
	\item We introduce and prove sound a proof system for HSL, including rules for basic heap-manipulating commands, a generalized frame rule, and rules to reason about errors.
	\item To ease reasoning, we introduce a syntax for hyper-assertions, as well as a novel notion of \emph{scaffold} variables, and use the latter to derive a more expressive rule to read from the heap. 
	\item Finally, we showcase our logic on a diverse set of examples, including for $\forall^* \exists^*$ and $\exists^* \forall^*$ hyperproperties, which are beyond the reach of existing separation logics.
\end{itemize}

\textbf{All formal results about HSL presented in this paper have been formalized in Isabelle/HOL~\cite{nipkowIsabelleHOLProof2002},
and our mechanization is publicly available \citep{HyperSeparationLogic_artifact}.}

\paragraph{Outline}
\secref{sec:tour} gives a high-level overview of some of the main novelties of HSL, in particular its hyper-triples, hyper separating conjunction, and generalized frame rule.
\secref{sec:key-ideas} then presents our solutions to key technical challenges,
including how we define the hyper separating conjunction, how we ensure the soundness of the frame rule with a novel decomposition of $\forall^*$-hyperproperties,
and how our unknown states enable reachability reasoning.
\secref{sec:Hyper-Separation-Logic} presents HSL formally,
while we discuss related work in \secref{sec:related}, and conclude in \secref{sec:conclusion}.

\section{A Tour of Hyper Separation Logic}
\label{sec:tour}

\newcommand{\fsok}[1]{\ensuremath{%
(\fs{\sigma} \sigma(#1))
}}

\newcommand{\lowOne}[1]{\ensuremath{%
(\fs{\sigma_1} \fs{\sigma_2} \sigma_1(#1) = \sigma_2(#1))
}}

\newcommand{\lowTwo}[2]{\ensuremath{%
(\fs{\sigma_1} \fs{\sigma_2} \sigma_1(#1) = \sigma_2(#1) \land \sigma_1(#2) = \sigma_2(#2))
}}

\newcommand{\pureGNI}[2]{\ensuremath{%
(\fs{\sigma_1} \fs{\sigma_2} \es{\sigma}
\sigma(#1) = \sigma_1(#1) \land \sigma(#2) = \sigma_2(#2))
}}

\newcommand{\pureGNIDouble}[3]{\ensuremath{%
(\fs{\sigma_1} \fs{\sigma_2} \es{\sigma}
\sigma(#1) = \sigma_1(#1) \land \sigma(#2) = \sigma_2(#2) \land \sigma(#3) = \sigma_2(#3))
}}

\newcommand{\LowOne}[2]{\ensuremath{%
\fs{\sigma_1} \fs{\sigma_2} 
\exists #2 \ldotp
\sigma_1(#1 \mapsto #2) \land \sigma_2(#1 \mapsto #2)
}}

\newcommand{\LowTwo}[4]{\ensuremath{%
\fs{\sigma_1} \fs{\sigma_2}
\exists #2, #4 \ldotp
\sigma_1(#1 \mapsto #2 * #3 \mapsto #4) \land \sigma_2(#1 \mapsto #2 * #3 \mapsto #4)
}}


\newcommand{\precut}{\ensuremath{%
\LowOne{l}{u}
}}

\newcommand{\frameA}{\ensuremath{%
\fsok{\acc{h}}
}}

\newcommand{\postcut}{\ensuremath{%
\LowTwo{x}{v}{y}{w}
}}

\newcommand{\intermediateAssertion}{\ensuremath{%
\fs{\sigma_1} \fs{\sigma_2}
\exists v, w \ldotp
\sigma_1(x \mapsto v * y \mapsto w * \acc{h}) \land \sigma_2(x \mapsto v * y \mapsto w * \acc{h})
}}

\newcommand{\frameBOld}{\ensuremath{%
\fsok{y \mapsto \delta_y}
\land
\lowOne{\delta_y}
}}

\newcommand{\frameB}{\ensuremath{%
\LowOne{y}{w}
}}

\newcommand{\preComp}{\ensuremath{%
\fs{\sigma_1} \fs{\sigma_2}
\exists v \ldotp
\sigma_1(x \mapsto v * \acc{h}) \land \sigma_2(x \mapsto v * \acc{h})
}}

\newcommand{\postComp}{\ensuremath{%
\fs{\sigma_1} \fs{\sigma_2} \es{\sigma}
\exists u, v \ldotp
\sigma_1(\acc{o} * \pointsto{h}{v})
\land
\sigma(\pointsto{o}{u} * \pointsto{h}{v})
\land
\sigma_2(\pointsto{o}{u} * \acc{h})
}}

\newcommand{\fullPre}{\ensuremath{
\fs{\sigma_1} \fs{\sigma_2}
\exists u \ldotp
\sigma_1(l \mapsto u * \acc{h}) \land \sigma_2(l \mapsto u * \acc{h})
}}

\newcommand{\fullPost}{\ensuremath{
\fs{\sigma_1} \fs{\sigma_2} \es{\sigma}
\exists u, v \ldotp
\sigma_1(\pointsto{h}{u})
\land
\sigma(\pointsto{h}{u} \land r = v)
\land
\sigma_2(\acc{h} \land r = v)
}}

\newcommand{\fullPostBefore}{\ensuremath{
\fs{\sigma_1} \fs{\sigma_2} \es{\sigma}
\exists u, v, w \ldotp
\sigma_1(\pointsto{h}{u} * \acc{o} * \acc{y})
\land
\sigma(\pointsto{h}{u} * \pointsto{o}{v} * \pointsto{y}{w})
\land
\sigma_2(\acc{h} * \pointsto{o}{v} * \pointsto{y}{w})
}}

\newcommand{\functionCut}{\ensuremath{x, y := \mathit{split}(l)}}
\newcommand{\functionComp}{\ensuremath{o := \mathit{compute}(x, h)}}
\newcommand{\functionFinal}{\ensuremath{r := [o] + [y]}}

\newcommand{\fullStatement}{\ensuremath{\functionCut{} \cseq \functionComp{} \cseq \functionFinal{}}}



\subsection{Hyper-Triples: Hyperproperties and Ownership }
\label{subsec:hyper-triples-tour}

HSL's judgments are \emph{hyper-triples} of the form
$\simpleHoare{P}{C}{Q}$,
where $C$ is a program statement,
and $P$ and $Q$  are \emph{hyper-assertions},
\ie predicates over sets of states.
Intuitively, the hyper-triple $\simpleHoare{P}{C}{Q}$ holds
iff for every set of initial states $S$ satisfying $P$,
the set $S'$ of states reachable by executing $C$ in any state from $S$ satisfies $Q$.
HSL judgments can express arbitrary program hyperproperties, that is, properties relating the initial and final states of multiple executions of a program, such as determinism, generalized non-interference, and the existence of a leak of confidential data. Like HHL, HSL obtains this expressiveness by allowing hyper-assertions to contain arbitrary alternations of  universal and existential quantification over states.

However, HSL goes substantially beyond HHL by supporting heap-manipulating programs. In contrast to HHL, states in HSL contain a heap,
and HSL assertions may express properties about the heap using standard SL operators, in particular, the points-to predicate $\mapsto$ and separating conjunction $*$.
For example, the triple
$$\simpleHoare{\fs{\sigma} \sigma(\pointsto{x}{5} * \acc{y})}{\cassign{[y]}{[x]+1}}{\fs{\sigma} \sigma(\pointsto{x}{5} * \pointsto{y}{6})}$$
is equivalent to the standard SL triple $\normalHoare{\pointsto{x}{5} * \acc{y}}{\cassign{[y]}{[x] + 1}}{\pointsto{x}{5} * \pointsto{y}{6}}$.
Both the pre- and the postcondition quantify universally over all states, thereby expressing a property of a single execution of the assignment (we will show examples of proper hyperproperties below). 
The precondition expresses that all initial states $\sigma$ own the heap locations $x$ (containing the value $5$) and $y$. Since HSL assertions generally relate states from multiple executions, assertions explicitly indicate which state they refer to, as in $\sigma(\ldots)$. 
The postcondition expresses that all reachable states own the same locations, and that the value stored at location $y$ is now $6$. 
The standard separating conjunction $*$ enforces that $x$ and $y$ point to disjoint parts of the heap, such that $x$'s value is not affected by the assignment.
As we will see in the next subsection, at the core of HSL lies a hyper separating conjunction $\star$, which generalizes the standard $*$ to express properties of an entire set of heaps rather than a single heap.

The hyper-triple $\simpleHoare{\es{\sigma} \sigma(p)}{C}{\es{\sigma} \sigma(q)}$
corresponds to
the sufficient separation incorrectness logic (SSIL)~\cite{ascariRevealingSourcesMemory2025} triple $\normalHoare{p}{C}{q}$,
which expresses that executing $C$ in any state satisfying $p$ can (non-deterministically) reach a state satisfying $q$.
For example, the hyper-triple
$$\simpleHoare{\es{\sigma} \sigma(r \mapsto 1)}
{\cread{x}{r} \cseq \cassign{y}{\mathit{randInt(1, 6)}} \cseq \cassign{z}{x + y}
}
{\es{\sigma} \sigma(r \mapsto 1 * z = 6)}$$
expresses that executing the statement
in an initial state with heap location $r$ allocated and set to $1$,
can reach a final state satisfying additionally $z = 6$.

A key use case of reachability reasoning as in the previous example is to prove the presence of a bug. To this end, following Incorrectness~\cite{ohearnIncorrectnessLogic2019} and Outcome Logic~\cite{zilbersteinOutcomeLogicUnifying2023},
our assertions distinguish \emph{normal states} $\sigma$, for which we write $\sigma(\ok: \ldots)$,
from \emph{error states}, for which we write $\sigma(\er: \ldots)$.
For example, 
the hyper-triple
$$
\simpleHoare{\es{\sigma} \sigma(\ok: r \mapsto \_ )}
{\cfree{r} \cseq \cfree{r}}
{\es{\sigma} \sigma(\er: \top)}
$$
expresses the reachability of an error state, that is, the presence of a bug.
For this statement, one could also write a stronger triple expressing that all executions \emph{will} fail, by weakening the precondition and using universal quantification in the postcondition.
To avoid clutter, we write $\sigma(p)$ short for $\sigma(\ok:p)$.

\paragraph{Expressing hyperproperties}

Let us now illustrate how HSL can express hyperproperties that relate values stored on the heap.
The following triple expresses that the function $\mathit{split}$, which takes as input one pointer and returns two pointers, is deterministic (\ie two calls with the same value at heap location $l$ will result in the same values at heap locations $x$ and $y$):
\largeHoare{\precut{}}{\functionCut{}}{\postcut{}}
The precondition expresses that all initial states have ownership of the heap location $l$,
and that all states have the same value $u$ stored at location $l$.
The postcondition then expresses that all reachable states have ownership of the two returned pointers $x$ and $y$, with the same values $v$ and $w$, resp.
Thus, these output values depend only on the input stored at $l$.

A key feature of HSL is that universal and existential quantification can be combined to express, for instance, $\forall^* \exists^*$ and $\exists^* \forall^*$ hyperproperties. An important example is generalized non-interference (GNI)
\cite{mcculloughSpecificationsMultiLevelSecurity1987,mcleanGeneralTheoryComposition1996}, a standard notion of secure information flow for non-deterministic programs which, to the best of our knowledge, no other separation logic can express.
GNI requires that for any two executions that agree on the public (low-sensitivity) inputs,
there exists a third execution with the same low inputs that has the same high inputs as the first execution and the same low outputs as the second execution.
The following triple expresses that the function $\mathit{compute}$, which takes two pointers $x$ (low) and $h$ (high)
and returns one pointer $o$, satisfies GNI:
\largeHoare{\preComp{}}{\functionComp{}}{%
\fs{\sigma_1} \fs{\sigma_2} \es{\sigma}
\exists u, v \ldotp
\sigma_1(\acc{o} * \pointsto{h}{v})
\land
\sigma(\pointsto{o}{u} * \pointsto{h}{v})
\land
\sigma_2(\pointsto{o}{u} * \acc{h})
}
In this triple, we assume for simplicity that $\mathit{compute}$ does not modify the value stored in $h$; this property could be expressed using logical variables, as usual~\cite{dardinierHyperHoareLogic2024}.

HSL can express and reason about arbitrary program hyperproperties, such as the examples above, uniformly within one logic. This expressiveness allows one in particular to combine different program hyperproperties within the same proof, a flexibility that other, more specialized logics do not offer. In the rest of this section, we will show that HSL also enables uniform separation-logic reasoning across different program hyperproperties.%

\subsection{Hyper Separating Conjunction}
\label{subsec:hyper-star-tour}

At the core of HSL is the \emph{hyper separating conjunction} $\star$,
which generalizes standard separating conjunction to hyper-assertions.
Hyper separating conjunction allows one to (de)compose the heaps of all states in a set of states uniformly for both ok and error states and for both universally- and existentially-quantified states. Intuitively, a set of states $S$ satisfies the hyper-assertion $P \star Q$
iff there exist two (possibly overlapping) sets $S_P \models P$ and $S_Q \models Q$ such that each state $\sigma \in S$ can be partitioned into two states $\sigma_p \in S_P$ and $\sigma_q \in S_Q$ whose heaps are disjoint.

Hyper separating conjunction lifts separation logic reasoning to sets of states, which enables concise proofs of program hyperproperties, as HSL's rule for memory allocation illustrates:
$$
\newrule{Alloc}{rule:Alloc-Key}
[\baseok[P]]
[x\notin\const{fv}(P)]
{
\hoare{P}
{\calloc{x}}
{P \star (\fs{\sigma} \sigma(\ok: \acc{x}))}
}
$$
This rule expresses that allocation preserves any hyperproperty $P$ and that all reachable states additionally have ownership of the newly-allocated location $x$
(we will discuss in \secref{subsec:formal:rules} why the preservation of $P$ is built into the rule).
The two premises ensure that $P$ does not mention error states (which we will motivate later) or $x$ (such that $P$ is not affected by the assignment).

The hyper separating conjunction in the postcondition elegantly composes the heap of each state constrained by $P$ with the newly-allocated memory. As an example, consider 
the precondition of the function $\mathit{split}$ discussed above, that is, $P \triangleq \precut{}$.
Using laws
(formally proven in our mechanization),
we can distribute the new ownership of $x$ over the quantified states $\sigma_1$ and $\sigma_2$ as follows:
\begin{align*}
&(\precut{}) \star (\fs{\sigma} \sigma(\acc{x})) \\
= \;
&\fs{\sigma_1} \fs{\sigma_2} 
\exists u \ldotp
(\sigma_1(\pointsto{l}{u} * \acc{x})) \land (\sigma_2(\pointsto{l}{u} * \acc{x}))
\end{align*}

On the other hand, if $P$ quantified existentially over states, for instance, $P \triangleq \es{\sigma} \sigma(\pointsto{y}{1})$ then the postcondition would entail
$\es{\sigma}\sigma(\pointsto{y}{1} * \pointsto{x}{\_})$, correctly expressing the reachability of a state owning both $y$ and $x$. These examples illustrate that $\star$ expresses the intended property uniformly for different quantifiers in the conjuncts. We will see in \secref{sec:key-ideas} how it achieves that.

\subsection{Local Reasoning and Framing}
\label{subsec:frame-rule-tour}


A key benefit of separation logic is that properties can be proven \emph{locally} considering only the relevant portion of the heap and then applied in a larger context via the frame rule.
HSL supports local reasoning via the following \emph{generalized frame rule}, which lifts framing to sets of states:
$$
\newrule{Frame}{rule:Frame-Key}
[\hoare{P}{C}{Q}]
[\text{No $\exists \tb{\_}$ in $F$}]
[F \models \fs{\sigma} \sigma(\ok: \top)]
[\fv[F]\cap\md[C]=\emptyset]
{
\hoare{P \star F}{C}{Q \star F}
}
$$
Analogously to the standard frame rule, this rule allows us to extend a valid triple $\hoare{P}{C}{Q}$ with a \emph{frame} $F$, which describes properties unaffected by $C$.
However, in contrast to the standard frame rule, the assertions $P$, $Q$, and $F$ in our generalized frame rule are all hyper-assertions and combined using our hyper separating conjunction, enabling framing in proofs of hyperproperties. 
This rule has three%
\footnote{As we will see in \secref{subsec:formal:syntax},
we need a fourth restriction for assertions containing \emph{scaffold} variables.}
side conditions to ensure soundness.
First, the frame $F$ is not allowed to contain existential quantification over states,
as such existential quantifiers in the postcondition would guarantee reachability and thus termination, 
while $C$ is not guaranteed to terminate in general.%
\footnote{We believe that this restriction could be lifted for terminating programs.}
Second, all states described by the frame must be $\ok$ states; this condition is explained when defining validity (\secref{subsec:formal:hyper-triples}).
The third restriction is standard: it ensures the frame’s free variables ($\fv$) aren’t modified ($\md$) by the command $C$.


\begin{figure}
\tiny
\leqnomode
\begin{align}
	    \hspace{10pt}
		&\hasrt{\fullPre{}} \label{line:fullPre} \\
\models &\hasrt{\left( \precut{} \right) \star \smash{\underbrace{\frameA{}}_{\text{frame 1}}}} \label{line:precut_frame} \\
		&\hasrt{\precut{}} \label{line:precut} \\
		&\functionCut{} \\
	    &\hasrt{\postcut{}} \label{line:postcut} \\
	    &\hasrt{\left( \postcut{} \right) \star \smash{\overbrace{\frameA{}}^{\text{frame 1}}}} \label{line:postcut_frame} \\
\models &\hasrt{\intermediateAssertion{}} \label{line:intermediate} \\
\models &\hasrt{\left( \preComp{} \right) \star \smash{\underbrace{(\frameB{})}_{\text{frame 2}}}} \\
		&\hasrt{\preComp{}} \label{line:precomp} \\
		&\functionComp{} \\
	    &\hasrt{\postComp{}} \label{line:postcomp} \\
	    &\hasrt{\left( \postComp{} \right) \star \smash{\overbrace{(\fs{\sigma_1}\fs{\sigma_2}\en{w}\ldots)}^{\text{frame 2}}}} \label{line:postcomp_frame}\\
\models &\hasrt{\fullPostBefore{}} \label{line:preassign} \\
		&\functionFinal{} \\
	    &\hasrt{\fullPost{}} \label{line:postassign}
\end{align}
\reqnomode
\caption{Local reasoning with hyper-triples. The proof contains two applications of our generalized frame rule, one with a $\forall$-hyperproperty and one with a $\forall\forall$-hyperproperty.}
\label{fig:local-reasoning-example}
\end{figure}

The example in \figref{fig:local-reasoning-example} illustrates our frame rule by proving that the program in black
satisfies generalized non-interference (GNI), a property that is beyond the reach of existing separation logics due to its quantifier alternation in the postcondition. We assume the value at input location $l$ has low-sensitivity (public)
and the value at input location $h$ has high-sensitivity (confidential).
Analogously to $\mathit{compute}$ above,
we verify the following triple:
\largeHoare{\fullPre{}}{\fullStatement}{\fullPost{}}

\emph{Step 1: Framing a $\forall$-property to verify the first call.}
To verify the call to $\mathit{split}$, we need to frame the ownership of the pointer $h$ via the frame rule:
We do so by rewriting the precondition on line \ref{line:fullPre} into the form on line \ref{line:precut_frame},
where we separate the part needed for calling $\mathit{split}$ (line \ref{line:precut}) from the part we frame, $\frameA{}$.
This allows us to use the specification of $\mathit{split}$ (lines \ref{line:precut}--\ref{line:postcut})
to obtain the postcondition on line \ref{line:postcut_frame}, where we add back the framed part $\frameA{}$.
This use of the frame rule is similar to the one in standard SL\@.
By distributing the frame over the conjuncts of the postcondition (line~\ref{line:intermediate}), we get that each execution has ownership of $x$, $y$, and $h$,
and that the values stored at $x$ and $y$ are the same across all executions.

\emph{Step 2: Framing a $\forall \forall$-hyperproperty to verify the second call.}
The call to $\mathit{compute}$ does not need ownership of $y$,
as it accesses only $x$ and $h$.
As before, we could extract ownership to $y$ via the frame $\fsok{\acc{y}}$, but this would lose the fact that all executions have the same value stored at $y$.
To preserve this information, we use the frame $\frameB{}$ instead, which is supported by our generalized frame rule.
Lines~\ref{line:precomp}--\ref{line:postcomp} then use the specification of $\mathit{compute}$ to obtain the postcondition after the call,
and line~\ref{line:postcomp_frame} adds back the frame to the postcondition.
As before, lines~\ref{line:postcomp_frame}--\ref{line:preassign} distribute the frame over the conjuncts of the postcondition. 

\emph{Step 3: Reading from the heap.}
The postcondition follows from line~\ref{line:preassign} and the read operation. In particular, since the values stored in $o$ and $y$ are the same in $\sigma$ and $\sigma_2$, their sum $r$ is also the same.

This example demonstrates that HSL can verify complex hyperproperties, such as GNI with its combination for universal and existential quantification. It combines $\forall$, $\forall \forall$, and $\forall \forall \exists$-hyperproperties within the same proof, demonstrating the logic's versatility. In particular, the generalized frame rule is used to frame both simple ownership information and hyperproperties. In the next section, we survey the key technical innovations that enable this flexibility, before formalizing the logic.

\section{Key Technical Ingredients}
\label{sec:key-ideas}
\setcounter{equation}{0}
\setcounter{parentequation}{0} 

\subsection{Hyper Separating Conjunction}
\label{subsec:key:hyper-star}

Existing separation logics for hyperproperties define separating conjunction pointwise over the elements of a(n ordered) k-tuple. This simple definition does not carry over to HSL, where assertions are expressed over (unordered) sets of states. 

To motivate our definition of hyper separating conjunction, consider the hyper-assertion 
\begin{equation}\label{eqn:forallexists}
\big(\fs{\sigma}\sigma(x\mapsto5)\big)\star\big(\es{\sigma}\sigma(y\mapsto5)\big)
\end{equation}
Intuitively, it should express that
(1)~both $\fs{\sigma}\sigma(x\mapsto5)$ and $\es{\sigma}\sigma(y\mapsto5)$ hold (that is, act as a conjunction), and
(2)~the resources denoted by the conjuncts are disjoint (that is, be separating).
In other words, in any set of states satisfying this hyper-assertion, all states should satisfy $\pointsto{x}{5}$, and at least one state should satisfy $\pointsto{x}{5} * \pointsto{y}{5}$.

This example demonstrates that the hyper separating conjunction needs to preserve both the \emph{upper bounds} on the set of states expressed by each conjunct (via universal quantification, as in our first conjunct) and the \emph{lower bounds} expressed via existential quantification, as in our second conjunct, ensuring the existence of at least one \emph{witness state} satisfying the existential condition. We achieve both with the following definition:
\begin{equation}\label{eqn:hyperstar}
S\models P\star Q\defiff\exists S_P\models P\ldotp\exists S_Q\models Q\ldotp S\subseteq S_P*S_Q\land\pw[S][S_P][S_Q]
\end{equation}

\paragraph{Preserving upper bounds}
The first conjunct of the definition ensures that the upper bounds expressed by $P$ and $Q$ are preserved: For sets of states $S_P$ and $S_Q$ that satisfy $P$ and $Q$, respectively, any set of states $S$ satisfying the hyper separating conjunction contains \emph{at most} the states in the (standard) separating conjunction of $S_P$ and $S_Q$\footnote{Our definition treats the sets of states $S_P$ and $S_Q$ as assertions, which allows us to apply separating conjunction.}. That is, each state $\sigma\in S$ can be decomposed 
into $\hAdd[\sigma][\sigma_P][\sigma_Q]$ such that $\sigma_P\in S_P$ and $\sigma_Q\in S_Q$, where $\hAdd$ denotes the standard state-combining predicate from separation logic: $\hAdd[\sigma][\sigma_P][\sigma_Q]$ holds whenever the three program states agree on the same program store, and the heap component of $\sigma$ results from combining the disjoint heaps of $\sigma_P$ and $\sigma_Q$.

Due to the first conjunct in~(\ref{eqn:hyperstar}), every state in a set of states satisfying the hyper-assertion~(\ref{eqn:forallexists}) includes $\pointsto{x}{5}$. Moreover, every set of states satisfying the hyper-assertion
$\big(\fs{\sigma}\sigma(x\mapsto5)\big)\star\big(\fs{\sigma}\sigma(y\mapsto5)\big)$
includes only states with $x\neq y$. Consequently, 
$\big(\fs{\sigma}\sigma(x\mapsto5)\big)\star\big(\fs{\sigma}\sigma(x\mapsto5)\big)$
is satisfied only by the empty set. These examples demonstrates that hyper separating conjunction implies non-aliasing, thereby preserving one of the key properties of separating conjunction. 

However, the first conjunct of our definition~(\ref{eqn:hyperstar}) does \emph{not} preserve the lower bounds of the conjuncts. For instance, the empty set of states satisfies it for the hyper-assertion~(\ref{eqn:forallexists}) (by choosing the empty set for $S_P$), but does not include the witness state required by the second conjunct of~(\ref{eqn:forallexists}).

\paragraph{Preserving lower bounds}
The purpose of the second conjunct $\pw[S][S_P][S_Q]$ of our definition~(\ref{eqn:hyperstar}) is to preserve the lower bounds expressed by $P$ and $Q$. It consists of two symmetrical conjuncts, which we abbreviate as $\plw[S][S_P][S_Q]$ and $\prw[S][S_P][S_Q]$\footnote{pw, plw, and prw stand for ``preserve witnesses'', ``preserve left witnesses'', and ``preserve right witnesses'', respectively.}.
This conjunct addresses the lower bound, ensuring that the states from $S_P$ and $S_Q$ witnessing the (potential) existential quantifiers from $P$ and $Q$, respectively, are preserved in $S$:
\begin{align*}
	\pw[S][S_P][S_Q]\defiff & (\forall\sigma_P\in S_P\ldotp\exists\sigma_Q\in S_Q\ldotp\exists\sigma\in S\ldotp\hAdd[\sigma][\sigma_P][\sigma_Q])
	\tag{$\plw[S][S_P][S_Q]$} \\
	\land\  & 				  (\forall\sigma_Q\in S_Q\ldotp\exists\sigma_P\in S_P\ldotp\exists\sigma\in S\ldotp\hAdd[\sigma][\sigma_P][\sigma_Q])
	\tag{$\prw[S][S_P][S_Q]$}
\end{align*}

The first preservation conjunct, $\plw$, guarantees that all states $\sigma_P$ (in particular, the witness states) from $S_P$
have a corresponding state $\sigma \in S$;
$\prw$ ensures a symmetric preservation for the states of $S_Q$.
In our example, $\prw$ ensures that any set $S$ satisfying the hyper-assertion~(\ref{eqn:forallexists}) contains a state $\sigma$ satisfying $\pointsto{x}{5} * \pointsto{y}{5}$.

While the preservation predicates are not necessary for the soundness of the logic, they play a pivotal role in enabling reachability reasoning.
In their absence, rules involving $\star$---such as \ruleref{rule:Alloc-Key}---would fail to propagate witness information.
In particular, for any satisfiable $P$, the postcondition, $P\star(\fs{\sigma}\sigma(\ok:x\mapsto\sth))$, would be trivially satisfied by $\emptyset$, eliminating any reachability guarantees.

\begin{figure}
\begin{center}
	\scalebox{0.6}{%
		\begin{tikzpicture}[every node/.style={draw, circle, thick}]
			\begin{scope}[shift={(-6,0)}] 
				\draw[blue] (0,0) ellipse (2.4cm and 1cm);
				\fill[green] (-1.5,0) circle (0.6cm);
				\fill[red] (-0.6,0.5) circle (0.3cm);
				\fill[cyan] (-0.5,-0.5) circle (0.4cm);
				\fill[yellow] (0.5,0.1) circle (0.7cm);
				\fill[gray] (1.2,-0.6) circle (0.2cm);
				\fill[pink] (1.7,0.1) circle (0.4cm);
				\node[draw=none] at (0,1.3) {$S$};
				
				\draw[thick] (0,-2) -- (0,-2.8);
				\draw[thick] (-1,-1.5) -- (0,-2);
				\draw[thick] (0,-2) -- (1,-1.5);
				\node[draw=none] at (0,-1.35) {$S\subseteq S_P*S_Q$};
				\node[draw=none] at (-1.05,-2.35) {$\plw[S][S_P][S_Q]$};
				\node[draw=none] at (1.08,-2.35) {$\prw[S][S_P][S_Q]$};
				
				\coordinate (L) at (-3,-3.5);
				\draw[blue] (L) ellipse (2.4cm and 1cm);
				\fill[green] ($(L)+(-1.5,0)$) -- ++(70:0.6cm) arc (70:320:0.6cm) -- cycle;
				\fill[red] ($(L)+(-0.6,0.5)$) circle (0.3cm);
				\fill[cyan] ($(L)+(-0.5,-0.5)$) -- ++(240:0.4cm) arc (240:360:0.4cm) -- cycle;
				\fill[yellow] ($(L)+(0.5,0.1)$) -- ++(90:0.7cm) arc (90:360:0.7cm) -- cycle;
				\node[draw=none, text=gray] at ($(L)+(1.2,-0.6)$) {$\varnothing$};
				\fill[pink] ($(L)+(1.7,0.1)$) -- ++(140:0.4cm) arc (140:320:0.4cm) -- cycle;
				\node[draw=none] at ($(L)+(0,1.3)$) {$S_P$};
				
				\coordinate (R) at (3,-3.5);
				\draw[blue] (R) ellipse (2.4cm and 1cm);
				\fill[green] ($(R)+(-1.5,0)$) -- ++(-40:0.6cm) arc (-40:70:0.6cm) -- cycle;
				\node[draw=none, text=red] at ($(R)+(-0.6,0.5)$) {$\varnothing$};
				\fill[cyan] ($(R)+(-0.5,-0.5)$) -- ++(0:0.4cm) arc (0:240:0.4cm) -- cycle;
				\fill[yellow] ($(R)+(0.5,0.1)$) -- ++(0:0.7cm) arc (0:90:0.7cm) -- cycle;
				\fill[gray] ($(R)+(1.2,-0.6)$) circle (0.2cm);
				\fill[pink] ($(R)+(1.7,0.1)$) -- ++(-40:0.4cm) arc (-40:140:0.4cm) -- cycle;
				\node[draw=none] at ($(R)+(0,1.3)$) {$S_Q$};
			\end{scope}
			
			\begin{scope}[shift={(6,0)}] 
				\draw[blue] (0,0) ellipse (2.4cm and 1cm);
				\fill[cyan] (-1,0) circle (0.6cm);
				\fill[cyan] (1,0) -- ++(70:0.6cm) arc (70:320:0.6cm) -- cycle;
				\node[draw=none] at (-1,0) {$\sigma$};
				\node[draw=none] at (0.8,-0.1) {$\sigma_1$};
				\node[draw=none] at (0,1.3) {$S$};
				
				\draw[thick] (0,-2) -- (0,-2.8);
				\draw[thick] (-1,-1.5) -- (0,-2);
				\draw[thick] (0,-2) -- (1,-1.5);
				\node[draw=none] at (0,-1.35) {$S\subseteq S_P*S_Q$};
				\node[draw=none] at (-1.05,-2.35) {$\plw[S][S_P][S_Q]$};
				\node[draw=none] at (1.08,-2.35) {$\prw[S][S_P][S_Q]$};

				\coordinate (L) at (-3,-3.5);
				\draw[blue] (L) ellipse (2.4cm and 1cm);
				\fill[cyan] ($(L)+(-1.4,0)$) circle (0.6cm);
				\fill[cyan] (L) -- ++(70:0.6cm) arc (70:320:0.6cm) -- cycle;
				\node[draw=none, text=cyan] at ($(L)+(1.4,0)$) {$\varnothing$};
				\node[draw=none] at ($(L)+(0,1.3)$) {$S_P$};
				\node[draw=none] at ($(L)+(-1.4,0)$) {$\sigma$};
				\node[draw=none] at ($(L)+(-0.2,-0.1)$) {$\sigma_1$};
				
				\coordinate (R) at (3,-3.5);
				\draw[blue] (R) ellipse (2.4cm and 1cm);
				\node[draw=none, text=cyan] at ($(R)+(-1.4,0)$) {$\varnothing$};
				\fill[cyan] (R) -- ++(-40:0.6cm) arc (-40:70:0.6cm) -- cycle;
				\fill[cyan] ($(R)+(1.4,0)$) circle (0.6cm);
				\node[draw=none] at ($(R)+(0,1.3)$) {$S_Q$};
				\node[draw=none] at ($(R)+(1.4,0)$) {$\sigma$};
				\node[draw=none] at ($(R)+(0.3,0.05)$) {$\sigma_2$};
			\end{scope}
			
			\draw[line width=1.5pt, dashed] (0,1.6) -- (0,-4.5);
	\end{tikzpicture}}
\end{center}

\caption{Illustrations of two applications of the hyper separating conjunction. Circular sectors denote the heap component of states; identical (resp.\ different) colors indicate identical (resp.\ different store components). $\varnothing$ denotes a state with empty heap.}
\label{fig:illustration-hyper-star}
\vspace{-0.5cm}
\end{figure}

\paragraph{Examples.}
\figref{fig:illustration-hyper-star} provides further intuition for our definition~(\ref{eqn:hyperstar}) of hyper separating conjunction. The diagram on the left shows a simple instance of the hyper separating conjunction, where each state in $S$ corresponds to exactly one composition of a state in $S_P$ and a state in $S_Q$, thereby satisfying both conjunctions of the definition. The diagram on the right shows more subtle cases. In particular, a state $\sigma\in S$ may be obtained in multiple ways ($\hAdd[\bm\sigma][\sigma][\varnothing]$, $\hAdd[\bm\sigma][\sigma_1][\sigma_2]$ and $\hAdd[\bm\sigma][\varnothing][\sigma]$). Moreover,
the states in $S_P$ (and $S_Q$) may participate in multiple combinations ($\hAdd[\sigma][\bm\sigma_{\bm1}][\sigma_2]$ and $\hAdd[\sigma_1][\bm\sigma_{\bm1}][\varnothing]$).
Note that in the right diagram, multiple sets $S$ satisfy definition~(\ref{eqn:hyperstar}) for the given $S_P$ and $S_Q$; for example, extending the depicted $S$ by $\varnothing$ still satisfies all requirements.
In contrast, in the left diagram, $S$ is uniquely determined by the given $S_P$ and $S_Q$. In particular, adding $\varnothing$ to $S$ would violate the first conjunct of definition~(\ref{eqn:hyperstar}) because the two $\varnothing$ states have different stores and therefore cannot be combined.

So far, we have illustrated hyper separating conjunction on conjuncts with a single quantifier, but our definition supports arbitrary quantifier alternations. For example, each set of states $S_P$ satisfying the first conjunct of the hyper-assertion
\[
\big(\fs{\sigma}\es{\sigma'}\en{n}\sigma(x=n)\land\sigma'(x=n+1)\big)\star\big(\fs{\sigma}(x\mapsto\sth)\big)
\]
is either empty or contains infinitely many states (with increasing values of $x$). Definition~(\ref{eqn:hyperstar}) retains these infinitely many witness states (via $\plw$) in $S$ and ensures (via its first conjunct) that each state in $S$ owns the heap location at that particular address.

\paragraph{Labeled States}
Thus far, we have presented the (hyper) separating conjunction for unlabeled states. In the following, we adapt the definition to HSL's labeled states (see \secref{subsec:hyper-triples-tour}). Existing solutions do not apply to our setting. \citet{raadLocalReasoningPresence2020} treat the label merely as a tag in the postcondition while applying the separating conjunction to unlabeled states; this approach does not apply to our hyper-assertions, which constrain sets of states with possibly different labels. \citet{zilbersteinOutcomeSeparationLogic2024} use an asymmetric definition where only one argument carries labels; in contrast, we aim for our hyper separating conjunction to be symmetric, which is especially important for future extensions to concurrency.

We define both standard and hyper separating conjunctions by
lifting the underlying state-combining predicate $\hAdd$ from unlabeled to labeled states.
$\hMonadAdd[(\sigma)\el][(\sigma_1)_k][(\sigma_2)_l]$ holds iff 
(1)~$\hAdd[\sigma][\sigma_1][\sigma_2]$ and
(2)~$\epsilon=\ok$ iff $k=\ok$ and $l=\ok$.
The case that all three labels are $\ok$ corresponds to the standard, unlabeled program states.
As a direct consequence, we obtain that $(\ok:p)*(\ok:q)=\ok:p*q$, where we overload $*$ to denote both labeled and unlabeled separating conjunctions.

To understand the intuition for the case $k=\er,l=\ok$, consider an execution that starts in an ok-state and results in an error state $(\sigma_1)_k$, for instance, by causing a runtime error. Extending both the initial and the final state with the ok-state $(\sigma_2)_l$ will not eliminate the runtime error, so the extended final state is still an error state\footnote{As we will discuss in \secref{subsec:key:uk-states}, some errors (specifically, \emph{domain violation errors}) do not behave this way.
Here, we discuss runtime errors such as assertion violations, null dereferences, or division by zero, which would also occur in larger heaps.}. The case $k=\ok,l=\er$ is symmetric. As a direct consequence, we obtain that $(\er:p)*(\ok:q)=(\ok:p)*(\er:q)=\er:p*q$.
A similar argument justifies that combining two error states yields an error state and hence $(\er:p)*(\er:q)=\er:p*q$.

\subsection{One for All: Enforcing $\forall$-Frames}
\label{subsec:key:embedding-fr}

Let us now define the validity of hyper-triples.
As discussed in \secref{sec:tour}, intuitively the triple $\simpleHoare{P}{C}{Q}$ is valid iff for every set of states $S$ satisfying the precondition $P$, executing $C$ from each state in $S$ yields a set of states satisfying the postcondition $Q$.
We denote this \emph{candidate} validity $\models_0 \simpleHoare{P}{C}{Q}$.

\paragraph{Sound framing}
Unfortunately, this candidate definition does not support the frame rule, as shown by the following \emph{invalid} example:%

{
\footnotesize
$$
\applyrule{rule:Frame-Key}
[\models_0 \simpleHoare{\es{\sigma}\sigma(\emp)}{\Alloc{x}}{\es{\sigma}\sigma(\acc{x} * x=42)}]
{\not{\models_0} \simpleHoare{
	\underbrace{(\es{\sigma}\sigma(\emp)) \star (\fs{\sigma} \sigma(\acc{42}))}_{%
	\supseteq(\es{\sigma}\sigma(\acc{42})) \land (\fs{\sigma} \sigma(\acc{42}))}
}{\Alloc{x}}%
{\underbrace{(\es{\sigma}\sigma(\acc{x} * x=42)) \star (\fs{\sigma} \sigma(\acc{42}))}_{\bot}}
}
$$
}

According to our candidate definition, the first triple is valid:
An execution starting with an empty heap might (non-deterministically) allocate address $42$, yielding a state satisfying the postcondition.
However, this specific address is ruled out when combined with the frame $\fs{\sigma}\sigma(\acc{42})$ in the second triple, as address $42$ is already allocated. Consequently, framing yields an invalid triple with a satisfiable precondition (\eg a singleton set with a state $\sigma$ satisfying $\acc{42}$), but an unsatisfiable postcondition.

This problem with the candidate definition is caused by the incompatibility between the non-local nature of allocation and the local reasoning embodied by the frame rule.
Existing work on unary reasoning (\ie not involving hyperproperties) has addressed this problem in two different ways.
\citet{ascariSufficientIncorrectnessLogic2024} fix a set of \emph{syntactic} rules and then prove that this set of rules guarantees the soundness of the proof system.
However, this syntactic approach is not easily extensible, as adding a new rule that is sound by itself (\wrt\ the semantic definition of validity) is not guaranteed to preserve the soundness of the overall system.
\citet{zilbersteinOutcomeSeparationLogic2024} instead instrument the semantics of the programming language with a notion of \emph{allocator} (a function that returns a set of heap locations that can be allocated in a given state), and then define a triple to be valid iff it holds for all allocators.
While this approach solves the particular problem with allocation in their setting, it requires instrumenting the semantics of the programming language (with the allocator), and does not provide a general solution for other constructs that cause similar issues with framing, as we explain next.

\paragraph{Non-local errors.}
The problem described above is neither specific to allocation nor to reachability reasoning,
but also occurs when operations lead to \emph{non-local} errors,
that is, errors due to insufficient resources that may not occur in a larger heap.
After adding the missing resources with the frame rule, these operations succeed, leading to an incorrect labeling of the resulting states:

{
\footnotesize
$$
\applyrule{rule:Frame-Key}
[\models_0 \simpleHoare{\fs{\sigma}\sigma(\ok: \emp)}{\Free{x}}{\fs{\sigma}\sigma(\er: \emp)}]
{\not{\models_0} \simpleHoare{
	\underbrace{
		(\fs{\sigma}\sigma(\ok: \emp)) \star (\fs{\sigma} \sigma(\ok: \acc{x}))
	}_{
		\fs{\sigma} \sigma(\ok: \acc{x})
	}
}{\Free{x}}%
{\underbrace{
		(\fs{\sigma}\sigma(\er: \emp)) \star (\fs{\sigma} \sigma(\ok: \acc{x}))
	}_{
		\fs{\sigma}\sigma(\er: \acc{x})
	}}
}
$$
}
\noindent
We call such errors, which arise from accessing memory locations that are not currently owned, \emph{domain violation errors}.
Besides for deallocation,
they may occur for heap read and write operations.
In contrast, all other errors (e.g., assertion violations and dereferencing null pointers) are \emph{local} (\ie if they occur in a state, they also occur in any larger state), and thus are compatible with the frame rule.

\paragraph{Baking in the frame rule}
To solve these problems, we take inspiration from existing unary separation logics (\eg \cite{birkedalSimpleModelSeparation2008, chargueraudSeparationLogicSequential2020}) and adapt it to hyperproperty reasoning: We require the validity of framing \emph{as part of} the definition of triple validity. Intuitively, we define the hyper-triple $\simpleHoare{P}{C}{Q}$ to be valid iff $\models_0 \simpleHoare{P \star F}{C}{Q \star F}$ holds for every admissible%
\footnote{That is, a frame $F$ that satisfies the premises of the rule \ruleref{rule:Frame-Key}, see \secref{subsec:frame-rule-tour}.} frame $F$.

This approach yields a simple semantic definition of triple validity that ensures the soundness of the frame rule, even in the presence of reachability reasoning and error states.
In particular, it rules out the incorrect derivations above, as both examples start from an \emph{invalid} triple. In the allocation example, the first triple does not preserve the frame $\fs{\sigma}\sigma(\ok:\acc{42})$; in the de-allocation example, the first triple does not preserve the frame $\fs{\sigma}\sigma(\ok:\acc{x})$.

\paragraph{Simplifying the meta theory}
The above definition of validity is sufficient to obtain a sound logic. It makes the soundness proof of the frame rule straightforward, but requires one to prove that every other rule yields a valid triple for arbitrarily complex frames $F$. 

We make a key observation to simplify these proofs: It is sufficient for validity to require the preservation of \emph{unary} frames, that is, frames with a single $\forall$-quantifier. Thus, intuitively, a hyper-triple $\simpleHoare{P}{C}{Q}$ is valid, denoted $\hoare{P}{C}{Q}$, iff
$\models_0 \simpleHoare{P \star (\fs{\sigma} \sigma(\ok: f))}{C}{Q \star (\fs{\sigma} \sigma(\ok: f))}$ holds for all (admissible) unary frames $f$. 

This weaker definition is sufficient to guarantee the preservation of all $\forall^+$-frames (and, thus, soundness of the frame rule) for the following reasons:

\begin{enumerate}
	\item Any $\forall^+$-hyperassertion $F$ can be decomposed into a (potentially infinite) disjunction of unary $\forall$-assertions
	$\bigvee_i(\fs{\sigma} \sigma(f_i))$.
	For example, $\fs{\sigma_1} \fs{\sigma_2} \en{n} \sigma_1(x=n) \land \sigma_2(x=n)$ is equivalent to $\bigvee_i(\fs{\sigma} \sigma(x=i))$.\footnote{Intuitively, this means that any safety hyperproperty can be seen as a (potentially infinite) disjunction of unary safety properties. Reducing safety hyperproperties to unary properties is known from product constructions; the difference here is that we do not enlarge the states, but instead reflect all possible combinations in a (potentially infinite) disjunction.}

	\item The hyper separating conjunction distributes over (infinite) disjunctions: $P \star \bigvee_i Q_i \equiv \bigvee_i(P \star Q_i)$.

	\item HSL's triples support the (infinite) disjunction rule. That is, if $\hoare{P_i}{C}{Q_i}$ holds for all $i$
	then $\hoare{\bigvee_i P_i}{C}{\bigvee_i Q_i}$ also holds.
\end{enumerate}

We formalize this argument as part of our soundness proof for the frame rule
(see \appref{sec:appendix-soundness}).
The soundness proof of \emph{every other rule} becomes significantly simpler, as they need to consider only unary $\forall$-frames.

\subsection{Local Reachability Reasoning in the Presence of Errors}
\label{subsec:key:uk-states}

Requiring frame preservation as part of triple validity has possibly surprising consequences for reasoning about errors.
As explained in the previous subsection,
our definition of triple validity \emph{correctly} rejects triples where error states originate only from non-local (domain violation) errors,
such as the following triple:
\begin{equation}\label{eqn:non-local-error1}
\simpleHoare{\es{\sigma}\sigma(\ok:\emp)}{\Free{x}}{\es{\sigma}\sigma(\er:\emp)}
\end{equation}
which expresses that de-allocating $x$ may lead to an error when executed in an empty heap.
This triple is \emph{not} valid because there are frames, such as $\fs{\sigma} \sigma(\ok: \acc{x})$, for which the resulting postcondition does \emph{not} hold: adding ownership of the location at $x$ causes the de-allocation to succeed, such that it can no longer reach an error state.
%
Non-local errors are merely an artifact of local reasoning in separation logic, but do not necessarily correspond to actual errors at runtime.

However, our definition of validity also rejects \emph{reasonable} triples, in fact, \emph{any} triple that expresses reachability properties for programs that may cause non-local errors.
For example, the triple 
\begin{equation}
\label{eqn:non-local-error2}
\simpleHoare{\es{\sigma}\sigma(\ok:\acc{x})}{\Free{x}}{\es{\sigma}\sigma(\ok:\pointsto{x}{\bot})} \end{equation}
expresses a reachability property, which clearly holds: starting from a state where $x$ is allocated, freeing $x$ succeeds and may lead to a state where $x$ is no longer owned (in HSL, de-allocated memory contains a special value $\bot$).
However, and perhaps surprisingly, this triple is also \emph{invalid} because it does not preserve the frame $\fs{\sigma} \sigma(\ok:\acc{y})$. The reason is subtle: The precondition (including the frame) guarantees that $\acc{x} * \acc{y}$ (and thus $x \neq y$) holds in at least one state%
; all other initial states are guaranteed only to satisfy $\acc{y}$. In particular,
$x$ and $y$ might alias in some other initial state,
such that freeing $x$ loses the ownership of $y$, which violates the framed conjunct of the postcondition.

Even if not apparent, this undesired side effect of our definition of triple validity is also caused by non-local errors.
To understand why, consider a set with two initial states $\sigma_w$ and $\sigma_u$ satisfying the precondition $\es{\sigma}\sigma(\ok:\acc{x})$:
$\sigma_w$ witnesses the existential quantifier, and thus owns $x$, whereas $\sigma_u$ does not witness the existential quantifier, and thus is unconstrained.
In particular, assume that $x$ and $y$ alias in
$\sigma_u$ (\ie $x=y$), and that $\sigma_u$ does not own $x$.
The executions of $\Free{x}$ from these two states are illustrated in \figref{fig:non-local-errors}:
The top left of \figref{subfig:nle:no-problem} shows the successful
 execution from $\sigma_w$ to $\sigma_w'$ (where ownership of $x$ is removed),
 while the top left of \figref{subfig:nle:problem} shows the erroneous execution from $\sigma_u$ (since $\sigma_u$ does not own $x$).
In the latter case, we record the state in which the (domain violation) error occurred, $\sigma_u$,
with a special label (which will be explained below).

\begin{figure}
	\newcommand{\dx}{5}   
	\newcommand{\dy}{1.1} 
	
	\tikzset{
		commdiagram/.style = {>=latex, x=\dx cm, y=\dy cm},
		state/.style = {},
		harr/.style  = {->},
		varr/.style  = {dashed, ->},
		ok/.style = {text=blue},
		er/.style  = {text=red},
		uk/.style  = {text=orange}
	}
	
	\begin{subfigure}{\textwidth}
		\begin{tikzpicture}[commdiagram]
			\node[state, ok] (A)  at (0,1) {$\ok: \sigma_w$};
			\node[state, ok] (B)  at (1,1) {$\ok: \sigma_w'$};
			\node[state, ok] (C)  at (2,1) {$\ok: \sigma_w''$};
			
			\node[state, ok] (A2) at (0,0) {$\ok: \sigma_w \oplus \sigma_f$};
			\node[state, ok] (B2) at (1,0) {$\ok: \sigma_w' \oplus \sigma_f$};
			\node[state, ok] (C2) at (2,0) {$\ok: \sigma_w'' \oplus \sigma_f$};
			
			\draw[harr] (A)  -- node[above]{$\Free{x}$} (B);
			\draw[harr] (B)  -- node[above]{$C$}        (C);
			\draw[harr] (A2) -- node[above]{$\Free{x}$} (B2);
			\draw[harr] (B2) -- node[above]{$C$}        (C2);
			
			\draw[varr] (A) -- node[right]{$\oplus \sigma_f$} (A2);
			\draw[varr] (B) -- node[right]{$\oplus \sigma_f$} (B2);
			\draw[varr] (C) -- node[right]{$\oplus \sigma_f$} (C2);
		\end{tikzpicture}
		\caption{Successful execution of de-allocation (top left) and a subsequent statement $C$ (top right). Both preserve frames,
				as visualized by the downward arrows.}
		\label{subfig:nle:no-problem}
	\end{subfigure}

	\begin{subfigure}{\textwidth}
		\begin{tikzpicture}[commdiagram]
			
			\node[state, ok] (A)  at (0,1) {$\ok: \sigma_u$};
			\node[state, er] (B)  at (1,1) {$\_: \sigma_u$};
			\node[state, er] (C)  at (2,1) {$\_: \sigma_u$};
			
			\node[state, ok] (A2) at (0,0) {$\ok: \sigma_u \oplus \sigma_f$};
			\node[state, ok] (B2) at (1,0) {$\ok: \sigma_{uf}'$};
			\node[state, ok] (C2) at (2,0) {$\ok: \sigma_{uf}''$ };
			
			\draw[harr] (A)  -- node[above]{$\Free{x}$} (B);
			\draw[harr] (B)  -- node[above]{$C$}        (C);
			\draw[harr] (A2) -- node[above]{$\Free{x}$} (B2);
			\draw[harr] (B2) -- node[above]{$C$}        (C2);
			
			\draw[varr] (A) -- node[right]{$\oplus \sigma_f$} (A2);
			\draw[varr] (B) -- node[right]{?} (B2);
			\draw[varr] (C) -- node[right]{?} (C2);
		\end{tikzpicture}
		\caption{An execution of de-allocation that produces a non-local error (top left). Consequently, the subsequent statement $C$ will \emph{not} be executed and, thus, does not change the state $\sigma_u$ (top right). We omit the label on the resulting states (in red) as they will be discussed later. The execution at the bottom shows the case where de-allocation (and the subsequent execution of $C$) succeeds due to a suitable frame. Neither execution preserves frames (indicated by the question marks).}
		\label{subfig:nle:problem}
	\end{subfigure}

	\begin{subfigure}{\textwidth}
		\begin{tikzpicture}[commdiagram]
			
			\node[state, ok] (A)  at (0,2) {$\ok: \sigma_u$};
			\node[state, uk] (B)  at (1,2) {$\uk: \sigma_u$};
			\node[state, uk] (C)  at (2,2) {$\uk: \sigma_u$};
			
			\node[state, uk] (B1)  at (1,1) {$\uk: \sigma_u \oplus \sigma_f$};
			\node[state, uk] (C1)  at (2,1) {$\uk: \sigma_u \oplus \sigma_f$};
			
			\node[state, ok] (A2) at (0,0) {$\ok: \sigma_u \oplus \sigma_f$};
			\node[state, ok] (B2) at (1,0) {$\ok: \sigma_{uf}'$};
			\node[state, ok] (C2) at (2,0) {$\ok: \sigma_{uf}''$ };
			
			\draw[harr] (A)  -- node[above]{$\Free{x}$} (B);
			\draw[harr] (B)  -- node[above]{$C$}        (C);
			\draw[harr] (A2) -- node[above]{$\Free{x}$} (B2);
			\draw[harr] (B2) -- node[above]{$C$}        (C2);
			\draw[harr] (B1) -- node[above]{$C$}        (C1);
			
			\draw[varr] (A) -- node[right]{$\oplus \sigma_f$} (A2);
			
			\draw[varr] (B) -- node[right]{$\oplus \sigma_f$} (B1);
			\draw[varr] (C) -- node[right]{$\oplus \sigma_f$} (C1);
			
			\draw[varr] (B1) -- node[right]{overapproximates} (B2);
			\draw[varr] (C1) -- node[right]{overapproximates} (C2);
		\end{tikzpicture}
		\caption{The scenario from \figref{subfig:nle:problem}, now correctly labeled as unknown states (in orange). The uk-state $\sigma_u$ in which the non-local error occurred, extended by the frame, overapproximates all states an execution might reach.}
		\label{subfig:nle:solution}
		\vspace{-0.2cm}
	\end{subfigure}
	
	\caption{Illustration of the problem of non-local errors.
		$\sigma_w$ denotes a state where $x$ is allocated, and $\sigma_u$ a state where $x$ is unallocated, but in which $x = y$.
		$\sigma_f$ corresponds to a state with ownership of $y$, that is compatible with $\sigma_w$ in the first diagram, and with $\sigma_u$ in the second and third diagram ($\sigma_f$ is different in each case).}
	\label{fig:non-local-errors}
	\vspace{-0.4cm}
\end{figure}

We now add the frame $\fs{\sigma} \sigma(\ok:\acc{y})$,
\ie we add states $\sigma_f$ with ownership of $y$ to both $\sigma_w$ and $\sigma_u$.%
\footnote{Note that this state $\sigma_f$ is different between $\sigma_w$ and $\sigma_u$.}
The bottom of \figref{subfig:nle:no-problem} shows that the frame $\sigma_f$ is preserved
when executed from the combined state $\sigma_w \oplus \sigma_f$, as it reaches the state $\sigma_w' \oplus \sigma_f$.
In contrast, the bottom of \figref{subfig:nle:problem} shows that the frame $\sigma_f$ is \emph{not} preserved
when executed from the combined state $\sigma_u \oplus \sigma_f$:
Deallocation now succeeds, and thus ownership of $y$ is removed,
as $\sigma_f$ provides ownership of $x$ (since $x$ and $y$ alias in $\sigma_u$).

In general, any execution that results in a non-local error might not preserve the frame because (1) resources in the frame have been consumed (like in our example), (2) heap locations have been modified for which the frame provides ownership, or (3) generally because the program executes subsequent statements rather than staying in a failed state (as illustrated by the statement $C$ in \figref{subfig:nle:no-problem} and \figref{subfig:nle:problem}).

To account for the fact that frames cannot be expected to be preserved once a non-local error has occurred, we adapt the definitions introduced so far in two ways. First, we allow only those states to be labeled as error states in which a \emph{local} error occurred (we will explain below how we label states with non-local errors). As a result, universally-quantified postconditions with ok- or er-labels exclude the occurrence of non-local errors and, thus, are compatible with framing.
Second, we adjust triple validity to \emph{relax} the requirement on \emph{existentially-quantified postconditions}%
\footnote{Technically, the requirement is relaxed for all postconditions that permit states resulting from non-local errors.
In practice, the only such useful postconditions are exactly the existentially-quantified ones.}
(which do not exclude the possibility of a non-local error),
by \emph{effectively} combining the universally-quantified frame only with
quantified ($\ok$ or $\er$) states,
\ie with all states for universal quantifiers,
but only with the witness states for existential quantifiers.
If the existential uses an ok- or er-label, those states cannot be affected by non-local errors and, thus, are also compatible with framing.

The first adjustment ensures that triple \eqref{eqn:non-local-error1} is correctly deemed invalid. The second adjustment interprets the framed postcondition 
$\es{\sigma}\sigma(\ok:\pointsto{x}{\bot}) \star \fs{\sigma} \sigma(\ok:\acc{y})$ of triple~(\ref{eqn:non-local-error2}) as
$\es{\sigma}\sigma(\ok:\pointsto{x}{\bot} *\acc{y})$, such that the triple is valid, as it should.

\paragraph{Unknown states.}

Technically, we handle non-local errors by introducing a third label $\uk$ for \emph{unknown} states, in addition to the existing $\ok$ and $\er$ labels. Executing an operation that causes a non-local error results in an unknown state (whereas local errors result in an error state).
This immediately provides the first adaptation discussed above.

Intuitively, when a statement leads to a non-local error, the resulting $\uk$-state records the state in which the error occurred.
For instance,
$\_: \sigma_u$ in \figref{subfig:nle:problem} is actually labeled as $\uk: \sigma_u$,
as shown in \figref{subfig:nle:solution},
both after executing $\Free{x}$ and after executing $C$ \emph{for any command $C$}.
After
adding the frame $\sigma_f$ to $\sigma_u$,
the de-allocation might or might not lead to a non-local error, depending on whether the frame provides ownership of $x$.
To reflect this uncertainty, we extend the definition of
our state-combining predicate $\hAdd$
(and thus of our hyper separating conjunction)
to yield a $\uk$-label when combining a $\uk$-label (the postcondition) with an $\ok$-label (the frame).
For example, $(\uk: \sigma_u) \hAdd (\ok: \sigma_f) = \uk: (\sigma_u \oplus \sigma_f)$, 
as shown in \figref{subfig:nle:solution}.

Moreover,
executing $\Free{x}; C$ from the larger state $\sigma_u \oplus \sigma_f$
might lead to
a completely different state (for instance, because $C$ modifies variables and allocates new memory), to multiple final states (when $C$ is non-deterministic), or not to any final state (when $C$ does not terminate). This shows that an unknown state $\uk: \sigma$ is an abstraction that may represent zero, one, or multiple concrete states, which may be completely different from $\sigma$.%
\footnote{We can only guarantee that these concrete states agree with $\sigma$ on the variables that do not syntactically appear in $C$.}

Our second adaption above (the relaxed requirement on postconditions)
actually corresponds to overapproximating the reachable possible states represented by unknown states, and thus is not overly conservative%
\footnote{Conservative in the sense that it forgets all information (except the values of logical variables) about $\uk$ states. Not overly conservative in the sense that no information about $\ok$ and $\er$ states is lost.}.
Our triples provide precise information for the standard cases of ok- and er-states; and treat $\uk$-states as overapproximating multiple different concrete states.
This allows HSL to prove strong hyperproperties involving successful and erroneous executions (for errors such as accesses to freed memory, null-pointer dereferences, or failed assertions).
Our adequacy theorem (\thmref{th:adequacy}) captures this formally: if all state-quantifiers in a hyper-assertion $Q$ are labeled with $\ok$ or $\er$, and $Q$ does not contain $\star$, then the relaxation of $Q$ is $Q$.

Even though uk-states are a coarse abstraction, recording the state in which a non-local error occurred (instead of simply recording \emph{that} a non-local error occurred as with \citet{zilbersteinOutcomeSeparationLogic2024}'s \emph{undef} state) is still
important:
When an unknown state $\uk: \sigma_u$ is combined with an \emph{incompatible} frame $\sigma_f$ (for instance because both provide ownership of the same location),
then it is sound to remove this unknown state from the resulting set of states,
as this execution could not have occurred (because the state $\sigma_u \oplus \sigma_f$ before the error is actually inconsistent).
In particular, this allows us to define state composition $\hAdd$ in a way that is associative, yielding an associative hyper separating conjunction $\star$, and thus improved reasoning principles.

\section{Hyper Separation Logic, Formally}
\label{sec:Hyper-Separation-Logic}
In this section, we formally define Hyper Separation Logic (HSL),
including its underlying semantic model and proof rules.
All formal definitions and theorems presented in this section have been mechanized and proven \citep{HyperSeparationLogic_artifact} in Isabelle/HOL~\cite{nipkowIsabelleHOLProof2002}.

\subsection{Programming Language and Semantics}
\label{subsec:formal:semantics}
We define Hyper Separation Logic for a simple language with the following commands:

\[
\begin{aligned}
	C \triangleq &\ \Skip\mid\Assign{x}{e}\mid\Havoc{x}\mid\Assume{b}\mid\Assert{b}\\
	\mid&\ \Alloc{x}\mid\Write{x}{e}\mid\Read{y}{x}\mid\Free{x}\mid(C;C)\mid(C+C)\mid C^*, 
\end{aligned}
\]
where $x$ and $y$ are non-negative%
\footnote{Restricted to non-negative values for simplicity, avoiding a separate type distinction between integers and heap locations. This is not a fundamental limitation, as integers and their operations can be encoded using non-negative ones.}
integer program variables,
$e$ ranges over the usual non-negative integer expressions, and $b$ is a boolean expression. 

\Skip, assignment $\Assign{x}{e}$, and sequential composition are standard.
The command $\Havoc{x}$ assigns a non-deterministic unsigned integer value to $x$.
The $\Assume{b}$ command discards the execution if the current program state does not satisfy $b$, otherwise it acts like $\resv{skip}$.
The $\Assert{b}$ command aborts the execution when $b$ is not satisfied, and otherwise behaves like $\resv{skip}$.
The command $\Alloc{x}$ allocates a heap location and stores it in $x$, $\Write{x}{e}$ updates a heap location,  $\Read{y}{x}$ reads from it, and
$\Free{x}$ deallocates the memory pointed to by $x$.
The $+$ command is nondeterministic choice and $^*$ is nondeterministic iteration.
Using these commands we can define the standard control structures and constrained non-determinism as follows:
\begin{align*}
	\If{b}{C_1}{C_2} &\triangleq(\Assume{b};C_1)+(\Assume{\neg b};C_2) \\
	\While{b}{C} &\triangleq(\Assume{b};C)^*;\Assume{\neg b}\\
	\cHavoc{x}{l}{u} &\triangleq\Havoc{x};\Assume{l\le x\le u}
\end{align*}

A \emph{program state} 
$\sigma=\la s,h\ra$ is a pair consisting of a program store $s$ (a total function from program variables in $\PVars$ to $\mathbb{N}$)
and a program heap $h$ (a finite partial function from heap locations in $\mathbb{N}$ to $\mathbb{N}\cup\{\bot\}$).
Heap location $0$ represents a null-pointer, and value $\bot\notin\mathbb{N}$ indicates that a memory location has been freed:
Following ISL~\cite{raadConcurrentIncorrectnessSeparation2022} and OSL~\cite{zilbersteinOutcomeSeparationLogic2024}, we distinguish between un-allocated ($x \not\in \mathrm{Dom}(h)$) and freed ($h(x)=\bot$) memory,
and we also rule out the re-allocation of freed memory\footnote{This design provides a practical balance, enabling address arithmetic, while keeping the \ruleref{ax:alloc} rule simple and tractable.}.
Arithmetic expressions $e$ are total functions from program stores to $\mathbb{N}$, and boolean expressions $b$ are total functions from program stores to booleans.

A \emph{program configuration} is defined as $c\Coloneqq\la C,\sigma\ra\mid\cDv\mid\cEr$, where $C$ is a command, $\sigma$ is a program state, and $\cDv$ and $\cEr$ are error configurations.
The \emph{small-step semantics} with judgment $c\to c'$ is defined in \appref{sec:appendix}.
The reflexive and transitive closures of $\to$ is denoted by $\to^*$.
Our semantics models failing executions as transitions that end in an error configuration.
As motivated in \secref{subsec:key:uk-states},
we distinguish two kinds of failures: (1)~accesses to un-allocated memory, which we call \emph{domain violations}, and (2)~all other errors, which include accesses to freed memory, null-pointer dereferences, and failed assertions.

\subsection{States, Hyper-Assertions, and Hyper Separating Conjunction}
\label{subsec:formal:state-model}

The hyper separating conjunction, which we presented in \secref{subsec:hyper-star-tour}, works at the level of \emph{hyper-assertions}, \ie sets of extended states.
\emph{Extended states} are program states with two additional pieces of information.
First, we mark the states with $\ok$, $\er$, or $\uk$ labels to indicate whether the corresponding execution is normal, erroneous, or unknown, respectively, as explained in \secref{sec:key-ideas}.
Second, we equip extended states with a \emph{logical store} to record the values of \emph{logical variables}, which remain fixed across executions.
Beyond their usual role of recording the initial values of program variables, logical variables play a crucial role in
relational reasoning (\eg to distinguish different executions, as explained by \citet{dardinierHyperHoareLogic2024}),
and, as we will see in \secref{subsec:formal:uk-res}, they are treated differently from program variables by the relaxation of postconditions.

We thus define an \emph{(extended) state} $\omega$ as a triple $\la\Lambda,\sigma\el\ra$ consisting of a
\emph{logical store} $\Lambda$ (a total function from logical variables in $\const{LVars}$ to $\mathbb{N}$),
an execution label $\epsilon\in\{\ok,\er,\uk\}$, and a \emph{program state} $\sigma$.
\emph{Assertions} are sets of states, and \emph{hyper-assertions} are sets of assertions.

To define the meaning of the (hyper) separating conjunction, we first 
define the \emph{state-combining predicate} $\oplus$,
building on the label combination principles motivated in \secref{sec:key-ideas},
as follows:
$$
\addStates%
[\la\Lambda, (s, h) \el\ra]
[\la\Lambda_1,(s_1, h_1)_{\epsilon_1}\ra]
[\la\Lambda_2,(s_2, h_2)_{\epsilon_2}\ra]
\!\!\defiff\!\!
s = s_1 = s_2 \land 
h = h_1 \uplus h_2 \land
\epsilon=\epsilon_1 \bowtie \epsilon_2\land\Lambda=\Lambda_1 = \Lambda_2
$$
where the \emph{label combination operator} $\bowtie$ is defined
as follows:
$\epsilon_1 \bowtie \epsilon_2$ is $\uk$ if either $\epsilon_1$ or $\epsilon_2$ is $\uk$
(propagating the lack of knowledge about the execution status);
otherwise it is $\er$ if either $\epsilon_1$ or $\epsilon_2$ is $\er$ (as $\ok$ frames can be added to erroneous executions without changing their erroneous status);
and it is $\ok$ if both $\epsilon_1$ and $\epsilon_2$ are $\ok$
(corresponding to the standard case).%
\footnote{While HSL allows framing only with ok-states, the four combinations not involving ok-states are relevant for future extensions, in particular, to concurrency.}
The \emph{(standard) separating conjunction}, $*$, is defined in the usual way:
$p*q\triangleq\{\omega\mid\omega_p\in p\land\omega_q\in q\land\addStates[\omega][\omega_p][\omega_q]\}$.
The usual properties of $*$ such as unit element $\{\la\Lambda,\la s,\emptyset\ra\okl\ra\mid\mathrm{true}\}$, commutativity, associativity and distributivity over (finite and infinite) disjunction are preserved by this extension.

As explained in \secref{subsec:key:hyper-star}, we define 
the hyper separating conjunction $\star$ as follows:
\begin{align*}
&\plw[S][S_1][S_2] \defiff
\forall\omega_1\in S_1\ldotp\exists\omega\in S\ldotp\exists\omega_2\in S_2\ldotp\addStates[\omega][\omega_1][\omega_2]
\defiff
\prw[S][S_2][S_1]
\\
&P\star Q\triangleq\{S\mid S\subseteq S_P*S_Q\land S_P\in P\land S_Q\in Q\land\plw[S][S_P][S_Q]\land\prw[S][S_P][S_Q]\}
\end{align*}
where the predicates $\plw$ and $\prw$
ensure that witnesses are preserved when combining states.

The hyper separating conjunction shares many of the key properties of the standard separating conjunction, including the existence of a unit element%
\footnote{Interestingly, the unit element in this setting is 
$\mathcal{P}(\{\la\Lambda,\la s,\emptyset\ra\okl\ra\mid\mathrm{true}\})$, i.e., the power set of the unit element of the standard separating conjunction.
As we will see, having a unit element is crucial to prove our adequacy theorem,
which requires eliminating the frame in the triple's validity.},
commutativity, associativity, and distributivity over both finite and infinite unions
(which is crucial to prove soundness of the frame rule, as explained in \secref{subsec:key:embedding-fr}).\footnote{Formal proofs of these properties are available in our Isabelle development.}
Moreover, standard and hyper separating conjunctions are closely related, as shown by the following properties:
\begin{equation}\label{eq:hyper-star-properties}
	\begin{aligned}
		&(\fs{\sigma}\sigma(\epsilon_1:p))\star(\fs{\sigma}\sigma(\epsilon_2:q))\equiv\fs{\sigma}\sigma(\epsilon_1 \bowtie \epsilon_2:p*q)\\
		&(\es{\sigma}\sigma(\epsilon_1:p))\star(\fs{\sigma}\sigma(\epsilon_2:q))\models\es{\sigma}\sigma(\epsilon_1 \bowtie \epsilon_2:p*q).
	\end{aligned}
\end{equation}

\subsection{Relaxing Postconditions with Unknown States}
\label{subsec:formal:uk-res}

As discussed informally in \secref{subsec:key:uk-states}, framing heap-dependent triples expressing reachability, such as \eqref{eqn:non-local-error2}, is generally unsound.
The reason is that frame disjointness is guaranteed solely in the witness state(s); hence, the frame may be altered in other states.
We address this by \emph{relaxing} the postcondition so that the frame applies only to the witness state(s).
To formalize this concept, we begin by defining overapproximations of states and assertions.

As motivated in \secref{subsec:key:uk-states}, unknown states, which come from domain violation errors,
represent our lack of knowledge about corresponding executions in larger states (\ie after a frame has been added).
For example, executing the command $\Free{x}$ in the initial state $\ok: \sigma_u$
(at the top of \figref{subfig:nle:solution}) leads to a domain violation error because the heap location at $x$ is unallocated.
We thus label this state $\uk: \sigma_u$ after the execution of $\Free{x}$.
After adding the frame $\ok: \sigma_f$, we get the state $\uk: \sigma_u \oplus \sigma_f$, which represents our lack of knowledge about the execution of $\Free{x}; C$ in 
the initial state $\ok: \sigma_u \oplus \sigma_f$ (shown at the bottom of \figref{subfig:nle:solution}):
This execution might lead to
an unrelated $\ok$ state,
to an unrelated $\er$ state (\eg if $C$ tries to deallocate a null pointer),
to no state at all (divergence), or to multiple states (non-determinism).

We formally capture this intuition as follows.
We say that a state $\la\Lambda,\sigma_\epsilon\ra$ \emph{overapproximates a state} $\la\Lambda',\sigma'_{\epsilon'}\ra$,
written $\Res[\la\Lambda,\sigma\el\ra][\la\Lambda',\sigma'\elp\ra]$,
iff 
(a) $\epsilon = \uk$ and $\Lambda = \Lambda'$, or
(b) $\epsilon \in \{\ok, \er\}$ and $\la\Lambda,\sigma_\epsilon\ra = \la\Lambda',\sigma'_{\epsilon'}\ra$.
As condition (a) reflects, unknown states overapproximate \emph{any} state sharing the same logical store: while the heap update may lead to arbitrary outcomes, program statements cannot change logical variables.%
\footnote{Preserving logical stores lets us retain logical variables that capture information relevant for relational reasoning (e.g., to distinguish different executions, as discussed by \citet{dardinierHyperHoareLogic2024}).
Alternatively, one could preserve program variables not modified by $C$ and use them in place of logical variables, but this proves too cumbersome in practice.}
In contrast, heap-extended executions for $\ok$ and $\er$ outcomes are fully determined; thus, they only ``overapproximate'' themselves, as reflected by condition (b).

Overapproximation extends naturally to sets of states:
A set $S_0$ \emph{overapproximates a set} $S$, written $\Res[S_0][S]$, iff
(A) all $\ok$ and $\er$ states in $S_0$ are also in $S$, i.e., $\{\la\Lambda,\sigma_\epsilon\ra \in S_0 \mid \epsilon \in \{\ok, \er\}\} \subseteq S$, and
(B)	all states in $S$ are overapproximated by some state in $S_0$, i.e., $\forall \omega \in S \ldotp \exists \omega_0 \in S_0 \ldotp \Res[\omega_0][\omega]$.	
Condition (A) ensures that all known states ($\ok$ and $\er$) in the overapproximating set $S_0$ are also present in the original set $S$,
since their outcomes are fully determined, while condition (B) allows extra states to be accounted for by the overapproximation.

Finally, we define \emph{the
relaxation of} $P$, denoted $\R[P]$, as the set of all sets of states $S$ which are overapproximated by some $S_0 \in P$:
$	\R[P]\triangleq\{S\mid S_0\in P\land \Res[S_0][S] \}$.
Under this relaxation, the framed post of \eqref{eqn:non-local-error2},
$\R[\big(\es{\sigma}\sigma(\ok:x\mapsto\bot)\big)\star\big(\fs{\sigma}\sigma(\ok:\acc{y})\big)]=\es{\sigma}\sigma(\ok:x\mapsto\bot*\acc{y})$,
adds the frame exclusively to the witness state, achieving the intended effect.

Importantly, existential quantification, e.g., $\es{\sigma}\sigma(\ok:\ldots)$, naturally allows $\uk$ states (in contrast to $\fs{\sigma}\sigma(\ok:\ldots)$, which excludes them).
This inclusion of $\uk$ states is essential for the relaxation to soundly overapproximate the program’s non-witnessed behaviors.
For example, \eqref{eqn:non-local-error2} holds because, regardless of which non-witness states appear in the initial set,
the relaxation can soundly overapproximate them via appropriate $\uk$ states, which are admitted by the existential postcondition.

\subsection{Validity of Hyper-Triples}
\label{subsec:formal:hyper-triples}

Intuitively, the hyper-triple $\simpleHoare{P}{C}{Q}$ expresses that for every set $S$ of initial states that satisfies $P$,
the resulting set of final states (after executing $C$ in all states from $S$) satisfies $Q$.
To capture this intuition, we first define the image of a set of states $S$ under the command $C$, denoted $\Sem[C][S]$, as
\begin{align*}
	&\{ \la\Lambda, \sigma'_\ok\ra \mid \la\Lambda, \sigma_\ok\ra \in S \land \la C, \sigma \ra \to^* \la \Skip, \sigma' \ra \} \\
	\cup
	\; &
	\{ \la\Lambda, \sigma'_\er\ra \mid \la\Lambda, \sigma_\ok\ra \in S \land \la C, \sigma \ra \to^* \la C', \sigma' \ra \land \la C', \sigma' \ra \to \cEr \} 
	\cup
	\{ \la\Lambda, \sigma_\er\ra \mid \la\Lambda, \sigma_\er\ra \in S \}
	\\
	\cup
	\; &
	\{ \la\Lambda, \sigma'_\uk\ra \mid \la\Lambda, \sigma_\ok\ra \in S \land \la C, \sigma \ra \to^* \la C', \sigma' \ra \land \la C', \sigma' \ra \to \cDv \}
	\cup
	\{ \la\Lambda, \sigma_\uk\ra \mid \la\Lambda, \sigma_\uk\ra \in S \}
\end{align*}
Successful executions lead to $\ok$ states, whereas
erroneous executions lead to either
$\er$ states (for non-domain violation errors) or
$\uk$ states (for domain violation errors).
Moreover,
$\er$ and $\uk$ states are 
preserved, but
not executed further.
Finally, logical stores are preserved across executions.

Using this definition, we can now define the validity of hyper-triples.
As explained in \secref{sec:key-ideas}, we additionally bake in the preservation of $\forall$-frames (to ensure the soundness of the frame rule),
and relax the postcondition (as explained above) to enable reachability reasoning for heap-manipulating commands.
We thus define the validity of a triple as follows:
$$
\hoare{P}{C}{Q}
\defiff
\textcolor{darkgreen}{\forall f\in\F[\md[C]]\ldotp}
	\forall S \in P \textcolor{darkgreen}{\,\star\,\mathcal{P}(f)} \ldotp
	\Sem[C][S] \in \textcolor{orange}{\R(}Q\textcolor{darkgreen}{\,\star\,\Pow[f]} \textcolor{orange}{)}
$$
The orange part shows the relaxation of the postcondition, as described above.
The green part shows the embedding of the frame rule:
The triple must preserve $\fs{\sigma}\sigma(f)$
(represented by $\Pow[f]$, the powerset of $f$)
for all \emph{admissible} frames $f$,
denoted by $\F[\md[C]]$.
A frame is \emph{admissible}
iff (1) $\fv[f]\cap\md[C]=\emptyset$ holds%
\footnote{Formally corresponding to
$\forall\Lambda,s,s',h,\epsilon\ldotp(\forall x\notin\md[C]\ldotp s(x)=s'(x))\Longrightarrow(\la\Lambda,\la s,h\ra\el\ra\in f\Longleftrightarrow\la\Lambda,\la s',h\ra\el\ra\in f)$.},
and (2) $f$ is satisfied by $\ok$ states only.
Condition (1) is standard, and condition (2)
is necessary, since framing with erroneous frames may prevent the execution of $C$:
$\hoare{\fs{\sigma}\sigma(\ok:x=0)}{\Assign{x}{5}}{\fs{\sigma}\sigma(\ok:x=5)}$ holds, but framing it with
$\fs{\sigma}\sigma(\er:\top)$ prevents all executions of $\Assign{x}{5}$, so the postcondition no longer accurately describes the outcome.

\subsection{Rules}
\label{subsec:formal:rules}

\begin{figure}
	\begingroup
	\newcommand{\scaleabortfig}{0.7}
	\scalebox{\scaleabortfig}{$\newrule{Frame}{rule:frame}[\text{no intersecting scaffold variables in }Q\text{ and }F][\hoare{P}{C}{Q}][F\models\fs{\sigma}\sigma(\ok:\top)][\fv[F]\cap\md[C]=\emptyset][\text{no }\exists\la\bm\cdot\ra\text{ in }F]{\hoare{P\star F}{C}{Q\star F}}$}
	\hspace{2pt}
	\scalebox{\scaleabortfig}{$\newrule{JoinTrue}{rule:join-true}[\hoare{P}{C}{Q}]{\hoare{P\otimes\top}{C}{Q\otimes\top}}$}
	\hfill
	\scalebox{\scaleabortfig}{$\newrule{Seq}{rule:seq}[\hoare{P}{C_1}{R}][\hoare{R}{C_2}{Q}]{\hoare{P}{C_1;C_2}{Q}}$}
	\\[\bigskipamount]
	\scalebox{\scaleabortfig}{$\newrule{Alloc}{ax:alloc}[\baseok[P]][x\notin\fv[P]]{\hoare{P}{x\coloneqq\mathrm{alloc}()}{(\fs{\sigma}\sigma(\ok:x\mapsto\sth))\star P}}$}\hfill 
	\scalebox{\scaleabortfig}{$\newrule{Read}{ax:read}[\baseok[P]][y\notin\fv[P]\cup\const{pvars}(e)\cup\{x\}]{\hoare{(\fs{\sigma}\sigma(\ok:x\mapsto e))\star P}{y\coloneqq[x]}{(\fs{\sigma}\sigma(\ok:x\mapsto e\land y=e))\star P}}$} 
	\\[\bigskipamount]
	\scalebox{\scaleabortfig}{$\newrule{Write}{ax:write}[\baseok[P]]{\hoare{(\fs{\sigma}\sigma(\ok:x\mapsto\sth))\star P}{\Write{x}{e}}{(\fs{\sigma}\sigma(\ok:x\mapsto e))\star P}}$}\hfill 
	\scalebox{\scaleabortfig}{$\newrule{Free}{ax:free}[\baseok[P]]{\hoare{(\fs{\sigma}\sigma(\ok:x\mapsto\sth))\star P}{\Free{x}}{(\fs{\sigma}\sigma(\ok:x\mapsto\bot))\star P}}$} 
	\\[\bigskipamount]
	\scalebox{\scaleabortfig}{$\newrule{Cons}{rule:cons}[\hoare{P}{C}{Q}][P'\subseteq P][Q\subseteq Q']{\hoare{P'}{C}{Q'}}$}\hfill 
	\scalebox{\scaleabortfig}{$\newrule{WhileSync}{rule:while-sync}[I\models\fs{\sigma_1}\fs{\sigma_2}\sigma_1(\ok:b)\Leftrightarrow\sigma_2(\ok:b)][I\models(\fs{\sigma}\sigma(\ok:\top))\lor(\fs{\sigma}\sigma(\er:\top))][\hoare{(\fs{\sigma}\sigma(\ok:b))\land I}{C}{I}]{\hoare{I}{\while{b}{C}}{((\fs{\sigma}\sigma(\ok:\lnot b))\land I)\lor((\fs{\sigma}\sigma(\er:\top))\land I)\lor(\fs{\sigma}\bot)}}$}
	\vspace{-0.2cm}
	\caption{Selected Rules of Hyper Separation Logic.} \label{fig:selected-rules}
	\vspace{-0.1cm}
	\endgroup
\end{figure}

We present a selection of HSL's rules in \figref{fig:selected-rules}, focusing on rules specific to our separation logic setting.
Many more HSL rules,
including those to reason about (local) errors,
are shown in \appref{sec:appendix-soundness}.
The rule \ruleref{rule:frame}, crucial for local reasoning, has already been illustrated in \secref{subsec:frame-rule-tour} and 
discussed in \secref{subsec:key:embedding-fr} and above.
Its last restriction, forbidding $\exists\la\bm\cdot\ra$ in the frame $F$, is necessary for soundness because reachability in the postcondition implies termination, while $C$ might not always terminate.
The scaffold restriction is syntax-specific, explained in \secref{subsec:formal:syntax}.

The four rules for basic heap-manipulating commands (\ruleref{ax:alloc}, \ruleref{ax:read}, \ruleref{ax:write}, and \ruleref{ax:free}) are uniformly formulated to internalize a \emph{strengthened} (in two ways) frame rule.
First, unlike the frame $F$ in \ruleref{rule:frame}, $P$ can existentially quantify over states, as these commands never diverge.
Second, the \emph{syntactic} restriction $\baseok[P]$, which prevents $P$ from mentioning $\er$ labels, is weaker and easier to check than the \emph{semantic} $\ok$ only restriction of \ruleref{rule:frame}. 
As a result, these rules may be applied with (internalized) frames $P$ such as $\es{\sigma}\sigma(\ok:\top)$,
while the \ruleref{rule:frame} rule rejects such a frame in two ways: it contains $\exists\la\bm\cdot\ra$ and allows non-$\ok$ states.
As illustrated in \secref{sec:tour}, the corresponding $\forall$-conjunct can be distributed over quantified states,
e.g., $(\fs{\sigma}\sigma(\ok:p)) \star (\fs{\sigma_1} \es{\sigma_2} \exists v \ldotp \sigma_1(\ok:q_1(v)) \land \sigma_2(\ok:q_2(v)))$
is equivalent to $\fs{\sigma_1} \es{\sigma_2} \exists v \ldotp \sigma_1(\ok:p * q_1(v)) \land \sigma_2(\ok:p * q_2(v))$.

To apply any of these rules, one must first rewrite the precondition in the form $(\fs{\sigma}\sigma(\ok:p))\star P$.
As we have seen in \secref{subsec:hyper-star-tour}, this rewriting is straightforward for $\forall^+ \exists^*$-preconditions, as all states are constrained, but is more involved
for $\exists^*$-preconditions.
For example, consider applying the rule \ruleref{ax:free} to derive triple \eqref{eqn:non-local-error2} from \secref{sec:key-ideas}:
We cannot rewrite directly
our precondition $\es{\sigma}\sigma(\ok:\acc{x})$
as $(\fs{\sigma}\sigma(\ok:\acc{x}))\star P$ for some $P$,
because $\acc{x}$ does not hold in \emph{all} states, only in the witness state.
To circumvent this issue,
we first rewrite $\es{\sigma}\sigma(\ok:\acc{x})$ as
$(\es{\sigma}\sigma(\ok:\acc{x})) \otimes \top$:
Intuitively, $P \otimes \top$ expresses that only a \emph{subset} of states has to satisfy $P$.%
\footnote{Here, $\otimes$ is the join operation $P\otimes Q\triangleq\{S_P\cup S_Q\mid S_P\in P\land S_Q\in Q\}$.}
Since $\otimes \top$ allows us to discard states,
we now have the equivalence with
$((\fs{\sigma} \sigma(\ok:\acc{x})) \star (\es{\sigma}\sigma(\ok:\top))) \otimes \top$:
We discard all states that do not satisfy $\acc{x}$,
retaining only those that do (including the witness).
Finally, we can eliminate the $\otimes \top$ using the \ruleref{rule:join-true} rule, as follows: 

{\scriptsize
$$
\applyrule{rule:join-true}
[
\applyrule{ax:free}[~]{%
\hoare{(\fs{\sigma}\sigma(\ok:\acc{x}))\star (\es{\sigma}\sigma(\ok:\top))}
{\Free{x}}
{(\fs{\sigma}\sigma(\ok:\pointstobot{x}))\star (\es{\sigma}\sigma(\ok:\top))}
}
]
{
\hoare%
{\underbrace{\Big((\fs{\sigma}\sigma(\ok:\acc{x}))\star (\es{\sigma}\sigma(\ok:\top))
\Big)\otimes\top}_{\es{\sigma}\,\sigma(\ok:\,\acc{x})}}%
{\Free{x}}%
{\underbrace{\Big((\fs{\sigma}\sigma(\ok: \pointstobot{x}))\star (\es{\sigma}\sigma(\ok:\top))\Big)\otimes\top}_{\es{\sigma}\,\sigma(\ok:\,\pointstobot{x})}}
}
$$}
The same pattern can be applied to other heap-manipulating commands when the precondition is not in the desired form,
and to use the rule \ruleref{rule:frame} with $\exists^*$-preconditions and postconditions.
While handling such cases may seem complex at first, these steps are largely routine and could be automated or incorporated into derived rules, so they do not pose a significant practical challenge.

The rule \ruleref{rule:while-sync} is analogous to that of \citet{dardinierHyperHoareLogic2024}, but generalized to account for local errors.
It applies to loops whose executions preserve identical control flow across all runs.
Note that $\sigma_1(\ok:b)\Leftrightarrow\sigma_2(\ok:b)$ is short for $(\sigma_1(\ok:b)\Rightarrow\sigma_2(\ok:b))\land(\sigma_2(\ok:b)\Rightarrow\sigma_1(\ok:b))$,
where $\sigma_1(\ok:b)\Rightarrow\sigma_2(\ok:b)$ is short for $\sigma_1(\ok:\lnot b)\lor\sigma_1(\er:\top)\lor\sigma_1(\uk:\top)\lor\sigma_2(\ok:b)$.

\subsection{Soundness}
\label{subsec:formal:soundness}

We have formalized and proven in Isabelle/HOL that Hyper Separation Logic is sound:
\begin{theorem}\label{th:soundness}
	The rules of Hyper Separation Logic, presented in \figref{fig:selected-rules} and in \appref{sec:appendix-soundness},
	 are sound with respect to the definition of validity of hyper-triples (\secref{subsec:formal:hyper-triples}).
\end{theorem}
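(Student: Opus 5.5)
The proof proceeds rule by rule. Each rule is shown to map valid premises to a valid conclusion under the definition of validity in \secref{subsec:formal:hyper-triples}. Before the rules themselves, I would establish a small library of semantic lemmas about $\star$, $\R$ and $\Sem$.

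\begin{itemize}
\item Associativity, commutativity and unit of $\star$, already mentioned in \secref{subsec:formal:state-model}.
\item Distributivity of $\star$ over arbitrary unions.
\item Monotonicity of $\R$, and idempotence up to composition: $\R[\R[Q]]\subseteq\R[Q]$.
\item Compatibility of relaxation with framing: $\R[Q]\star\Pow[f]\subseteq\R[Q\star\Pow[f]]$ for an admissible $f$. This should hold because a $\uk$ state combined with an $\ok$ frame state stays $\uk$ and keeps its logical store.
\item A \emph{frame lemma for the semantics}: for $\omega\oplus\omega_f$ with $\omega_f$ an $\ok$ state whose store is irrelevant to $\md[C]$, every outcome of executing $C$ from the combined state is either $\omega'\oplus\omega_f$ for an $\ok$/$\er$ outcome $\omega'$ of $\omega$, or is overapproximated by a $\uk$ outcome of $\omega$ extended with $\omega_f$. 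Conversely, every $\ok$/$\er$ outcome of $\omega$ combined with $\omega_f$ is reached.
\end{itemize}

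The semantic frame lemma is proved by induction on the small-step derivation. The key case distinction is between local errors, which persist in larger heaps, and domain violations, which produce $\uk$ and are absorbed by the $\ok$ clause of $\Res$. Together with $\plw$/$\prw$, this yields: if $S\subseteq S_P*\Pow[f]$ with witness preservation, then $\Sem[C][S]$ is overapproximated by a set in $\Sem[C][S_P]\star\Pow[f]$. Determinism of the frame under non-$\uk$ outcomes gives condition (A) of $\Res$.

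With these in hand, the structural rules are largely routine.
\begin{itemize}
\item \ruleref{rule:cons} is immediate from monotonicity of $\R$ and $\star$.
\item \ruleref{rule:seq} instantiates the second premise with the same $f$. It uses $\Sem[C_1;C_2][S]=\Sem[C_2][\Sem[C_1][S]]$ plus a lemma that running $C_2$ on a relaxed set stays within the relaxation. The latter holds because $\uk$ states are not executed further, and $\md$ of a sequence contains $\md$ of each component.
\item \ruleref{rule:join-true} uses that $\Sem$ distributes over unions and that $\star\Pow[f]$ and $\R$ commute with $\otimes\top$.
\item The loop rule \ruleref{rule:while-sync} follows the HHL proof: induct on the number of iterations, using synchronized guards so that every iterate is a single set satisfying $I$, and take the limit union. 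The $\er$ disjunct and $\fs{\sigma}\bot$ cover states that stop early and the empty set.
\end{itemize}

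The command axioms are checked directly against the semantics, with the framed precondition unfolded via associativity into $(\fs{\sigma}\sigma(\ok:x\mapsto\sth))\star(P\star\Pow[f])$. For \ruleref{ax:free}, \ruleref{ax:write} and \ruleref{ax:read}, every state owns $x$, so no domain violation occurs and the operation acts only on the $x$-cell. The decomposition witnessing $\star$ can therefore be transported pointwise, preserving $\plw$/$\prw$. The side condition $\baseok[P]$ ensures that $\er$-labelled states of $P$ are absent. This is the right condition, since such states would not be re-executed while their partners would be. For \ruleref{ax:alloc}, I would pick, for each state, the freshly allocated location disjoint from the whole combined heap; this is possible since heaps are finite and freed cells are never reused. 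That location then splits off as the new $x\mapsto\sth$ component. $x\notin\fv[P]$ and $x\in\md[C]$ keep $P$ and $f$ invariant.

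The main obstacle is \ruleref{rule:frame}, whose frame $F$ is an arbitrary $\forall^+$ hyper-assertion, while validity only guarantees preservation of unary frames $\Pow[f]$. I would follow the three-step argument of \secref{subsec:key:embedding-fr}.
\begin{itemize}
\item First, prove that any $F$ with no $\exists\tb{\cdot}$ that is $\ok$-only and independent of $\md[C]$ is downward closed. Hence it equals $\bigcup_{S_F\in F}\Pow[S_F]$, with each $S_F$ an admissible unary frame. Making the unary frames inherit the $\md[C]$-independence may require closing each $S_F$ under changes to modified variables. The scaffold restriction presumably enters exactly here.
\item Second, use distributivity of $\star$ over unions, and the fact that $\Pow[f]\star\Pow[g]$ is contained in $\Pow[f*g]$, so that for an outer frame $g$ one has $P\star F\star\Pow[g]\subseteq\bigcup_i P\star\Pow[f_i*g]$.
\item Third, apply the premise with frame $f_i*g$, and conclude with $\R$ of a union.
\end{itemize}
The delicate point is the reverse direction on the postcondition, where $\Pow[f_i*g]$ must be placed inside $F\star\Pow[g]$. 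This needs $\plw$/$\prw$ to be satisfiable with a suitably chosen subset. I expect most of the effort to go into checking that these witness-preservation conditions survive the decomposition and the relaxation.
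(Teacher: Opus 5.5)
Your treatment of \textsc{Frame} matches the paper's, up to inlining \textsc{IdxUnion}. You write the downward-closed $F$ as a union of powersets of $\md[C]$-closed, $\ok$-only sets, use that admissible frames are closed under $*$ and that $\Pow[f]\star\Pow[g]=\Pow[f*g]$, and then take the union. \textsc{Seq} and \textsc{Cons} are also fine.

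The genuine gap is in the heap axioms. You assert that $\baseok[P]$ excludes $\er$-labelled states of $P$, but it only says that $P$ does not \emph{mention} $\er$. For instance, $P=\es{\sigma}\sigma(\ok:\top)$, which the paper uses with exactly these rules, is satisfied by sets containing $\er$ and $\uk$ states. Since frames are $\ok$-only, these states keep their labels inside $S\in(\fs{\sigma}\sigma(\ok:x\mapsto\sth))\star P\star\Pow[f]$. $\Sem$ does not execute them, so after $\Free{x}$ they still map $x$ to a number and admit no decomposition with an $\fs{\sigma}\sigma(\ok:x\mapsto\bot)$-component. The same happens for $y=e$ in \textsc{Read} and for the fresh cell in \textsc{Alloc}. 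So pointwise transport fails, and $\uk$ inputs are not handled at all. The missing idea is to use the relaxation $\R$. Take as witness the image of the $\ok$ part together with all $\er$ and $\uk$ states of $S$, relabelled $\uk$ and modified as the command would modify them. This set overapproximates $\Sem[C][S]$, since condition (A) only concerns $\ok$/$\er$ states and a $\uk$ state overapproximates anything with the same logical store. It also has the required $\star$-decomposition, because relabelling $\er$ to $\uk$ in the $P$-component preserves $P$ whenever $P$ does not mention $\er$. That is the actual role of $\baseok[P]$: for $P=\fs{\sigma}\sigma(\er:\top)$ these rules would be unsound, because $\er$ states must be reproduced verbatim.

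Some auxiliary claims are also false, although your rule sketches do not really rely on them. Your hyper-level frame property, read with witness preservation on the right (i.e.\ as membership in $\{\Sem[C][S_P]\}\star\Pow[f]$), fails for $\Alloc{x}$ when every state of $f$ owns location $42$. An outcome of $S_P$ that allocates $42$ then has no compatible frame state, so plw fails and the right-hand side is empty. This is exactly the counterexample of \secref{subsec:key:embedding-fr}, and the reason frames are baked into validity: structural rules just pass $f$ through, and atomic rules are computed with $f$ included. Likewise, $\R[Q]\star\Pow[f]\subseteq\R[Q\star\Pow[f]]$ fails when a $\uk$ state of $Q$ clashes with every state of $f$ while the states it overapproximates do not. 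For \textsc{JoinTrue}, $\star\Pow[f]$ does not commute with $\otimes\top$ on the postcondition side. Outcomes of the part of $S$ not covered by $P$ need not split off a frame state, so you again need the $\uk$ trick: overapproximate them by the $\uk$-relabelled copy of that part of $S$, which does split. Finally, the scaffold side condition of \textsc{Frame} is not used in decomposing $F$. It is needed when passing from the semantic rule back to syntax, via property~\eqref{testtest} of Lemma~\ref{lem:syntax-properties}.
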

We discuss the proof in \appref{sec:appendix-soundness}.
Moreover, while our definition of validity of hyper-triples is rather involved (since we embed all $\forall$-frames and relax the postcondition), we prove the following adequacy theorem, showing that hyper-triples imply the expected behavior:
\begin{theorem}\label{th:adequacy}
	If $\hoare{P}{C}{Q}$ holds, and the set $S$ of initial states satisfies $P$ (\ie $S\models P$), then:
	\begin{enumerate}
		\item The set $\Sem[C][S]$ of reachable states satisfies the relaxation of $Q$ (\ie $\Sem[C][S] \models \R[Q]$).
		\item Additionally, if $Q$ does not mention any $\uk$ label, contains no $\star$, and constrains its existentially-quantified states to be $\ok$ or $\er$,%
\footnote{This last restriction is required because
$\R[\es{\sigma} \sigma(\ok: p)] = \es{\sigma} \sigma(\ok: p)$, but $\R[\es{\sigma} \top] = \top$.}
		then $\Sem[C][S]$ satisfies $Q$ (\ie $\Sem[C][S] \models Q$).
	\end{enumerate}
\end{theorem}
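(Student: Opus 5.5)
Part (1) follows from the definition of validity by instantiating the frame with the unit. Take $f \triangleq \{\la\Lambda,\la s,\emptyset\ra\okl\ra\mid\mathrm{true}\}$, the unit of the standard separating conjunction. This $f$ is admissible, i.e., $f\in\F[\md[C]]$: it contains only $\ok$ states, and membership in it does not depend on the store, so it is insensitive to $\md[C]$. By the remark in \secref{subsec:formal:state-model}, $\Pow[f]$ is a unit of $\star$, so $P\star\Pow[f]=P$ and $Q\star\Pow[f]=Q$. Hence $S\in P\star\Pow[f]$, and validity gives $\Sem[C][S]\in\R(Q\star\Pow[f])=\R[Q]$. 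If we want to avoid relying on the unit law, we can instead check directly that $S\subseteq S*f$ and that $\plw$ and $\prw$ hold, taking the empty-heap state with the same store and logical store as the partner of each $\omega\in S$.

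For part (2), it suffices to show $\R[Q]\subseteq Q$ under the stated syntactic restrictions. Let $S\in\R[Q]$, so there is $S_0\in Q$ with $\Res[S_0][S]$.

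The first fact we need is that $S_0$ contains no $\uk$ state. Otherwise every state of $S$ must be overapproximated by something in $S_0$, but $\uk$ states in $S_0$ overapproximate anything with the same logical store. I would therefore first try to show that the relevant $S_0$ can be chosen $\uk$-free. Since $Q$ mentions no $\uk$ label, a $\uk$ state in $S_0$ can only occur where $Q$ is unconstrained, for instance at universally quantified positions that are not required to be $\ok$ or $\er$. The restriction on existentially quantified states is what prevents those states from being $\uk$.

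The argument then goes by structural induction on the syntax of $Q$. We prove the stronger statement: for every $S_0\in Q$ and every $S$ with $\Res[S_0][S]$, we have $S\in Q$. The valuation of state variables should be generalized so that the induction passes through the quantifiers.

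\begin{itemize}
\item \emph{Atomic assertions} $\sigma(\epsilon:p)$ with $\epsilon\in\{\ok,\er\}$. The state bound to $\sigma$ is taken from $S$, and it is matched with a state of $S_0$.
\item \emph{Existential quantifiers} $\es{\sigma}$. A witness in $S_0$ labelled $\ok$ or $\er$ belongs to $S$ by condition (A), so it still witnesses the quantifier.
\item \emph{Universal quantifiers} $\fs{\sigma}$. Each $\omega\in S$ is overapproximated by some $\omega_0\in S_0$. If $\omega_0$ is known, then $\omega=\omega_0$ and we reuse the hypothesis. If $\omega_0$ is $\uk$, we must show that $\omega$ still satisfies the body.
\end{itemize}

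This $\uk$ case under a universal quantifier is the main obstacle. A $\uk$ state vacuously satisfies any body that constrains only $\ok$ or $\er$ states (e.g.\ $\sigma(\ok:p)$ is false for it, forcing the body to be false unless guarded), but an arbitrary $\omega$ need not. So we need an invariant stating that a $\uk$ state in $S_0$ under $\forall$ satisfies only those formulas that every state with the same logical store satisfies. This is plausible because atomic assertions without a $\uk$ label are false on $\uk$ states, while logical-variable constraints are preserved since overapproximation keeps $\Lambda$. The cleanest approach may be to define a semantic property, namely that $Q$ is upward closed under $\Res$, and prove it holds for every $\star$-free formula meeting the restrictions. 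The exclusion of $\star$ is needed because $\star$ quantifies over decompositions $S_P,S_Q$ that need not interact well with relaxation.
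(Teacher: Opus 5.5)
Part (1) is correct and is the paper's argument: instantiate the embedded frame with the unit $f$ of $*$, and use that $\Pow[f]$ is the unit of $\star$. The frame $f$ is admissible because it is $\ok$-only and store-independent.

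In part (2), reducing to $\R[Q]\subseteq Q$ is the right move, but you leave the decisive step open, and your preliminary idea is false. In general $S_0$ cannot be chosen $\uk$-free. Only $\uk$ states overapproximate $\uk$ states, so as soon as $S$ contains a $\uk$ state, every $S_0\sqsupseteq S$ contains one too. This is the typical situation, since $\Sem[C][S]$ contains $\uk$ states after domain violations.

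More importantly, the ``main obstacle'' under $\fs{\sigma}$ is only an obstacle because you never state the induction hypothesis. The missing idea is that the assertion language has no negation. State it over pairs of valuations: $I_0$ into $S_0$ and $I$ into $S$, with $I_0(\sigma)\sqsupseteq I(\sigma)$ for every bound state variable. Then prove that $S_0,I_0\models Q$ implies $S,I\models Q$.

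\begin{itemize}
\item Atom $\sigma(\epsilon:p)$ with $\epsilon\in\{\ok,\er\}$. If $I_0(\sigma)$ satisfies it, then $I_0(\sigma)$ carries a known label, hence $I(\sigma)=I_0(\sigma)$. If $I_0(\sigma)$ is a $\uk$ state, the premise is simply false. This is the only place the absence of $\uk$ labels is used; logical stores play no role.
\item Remaining connectives. Every other case is a direct use of the hypothesis, because all connectives are monotone. There is no formula that a $\uk$ binding satisfies while another binding violates it, which answers your worry that ``an arbitrary $\omega$ need not''.
\item Quantifiers. $\es{\sigma}$ uses the $\ok$/$\er$ restriction together with condition (A). $\fs{\sigma}$ uses condition (B).
\end{itemize}

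You also omit $\otimes$, which $Q$ may contain. Split $S$ according to whether an overapproximating state lies in the left or the right part of $S_0$. The known states of each part belong to $S$ and overapproximate themselves, so condition (A) holds for each part.

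Finally, satisfaction goes through intermediate states, so the relation must be lifted there. Do this by giving each $\omega\in S$ the scaffold store of one of its overapproximators.
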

As explained in \secref{subsec:formal:state-model}, we obtain (1) by eliminating the embedded frame using the unit element of $\star$.
Point (2)
shows that
HSL is precise for postconditions that do not mention $\uk$ labels,
\ie for hyperproperties involving successful and erroneous executions (for errors such as accesses to freed memory, null-pointer dereferences, or failed assertions); the $\star$ can usually%
\footnote{Similarly to $*$, $\star$ does not generally distribute over $\land$.} be eliminated by distributing it over the different connectives
(\eg as done in \secref{subsec:hyper-star-tour}).

\subsection{Syntactic Hyper-Assertions and Scaffold Variables}
\label{subsec:formal:syntax}

To ease reasoning,
we have formalized the following syntax for hyper-assertions:
$$
P
\Coloneqq
\top\mid\bot\mid\sigma(\epsilon: p)\mid\en{n}P\mid\fn{n}P\mid\es{\sigma}P\mid\fs{\sigma}\textbf{}P\mid P\lor P\mid P\land P\mid P\otimes P\mid P\star P
$$
where $P$ ranges over (syntactic) hyper-assertions,
$\sigma$ over state variables in $\SVars$,
$\epsilon$ over labels $\{\ok, \er, \uk\}$,
$p$ over standard separation logic assertions with scaffold variables (as we explain next),
and $n$ over program variables%
\footnote{For simplicity, we represent those as De Bruijn indices~\cite{debruijnLambdaCalculusNotation1972} in our formalization.}.
We define their interpretation in \appref{sec:appendix},
and 
prove \citep{HyperSeparationLogic_artifact}
many of the equivalences and entailments used throughout the paper,
such as
$(\fs{\sigma} P) \star (\fs{\sigma} Q) \models \fs{\sigma} (P \star Q)$,
$(\fs{\sigma} P) \star (\es{\sigma} Q) \models \es{\sigma} (P \star Q)$,
or $\sigma(\epsilon_1: p) \star \sigma(\epsilon_2: q) \equiv \sigma(\epsilon_1 \bowtie \epsilon_2: p * q)$;
we show more in \appref{sec:appendix}.
We have also used this syntax to formalize the syntactic requirements for different rules,
for example the $\baseok[P]$ condition in the rules \ruleref{ax:alloc}, \ruleref{ax:read}, \ruleref{ax:write}, and \ruleref{ax:free},
the absence of $\exists\la\bm\cdot\ra$ in the frame $F$ of the \ruleref{rule:frame} rule,
or the suitability condition for the second point of the adequacy theorem.%
\footnote{In practice, we prove the soundness of these rules under a suitable \emph{semantic} restriction, implied by the \emph{syntactic} ones.}
All rules consider closed hyper-assertions (no free state variables).

\paragraph{Scaffold variables}
Intuitively, scaffold variables can be thought of as existentially-quantified logical variables \emph{per state}:
A set of states satisfies a hyper-assertion with scaffold variables iff there exists a way to assign values to the scaffold variables \emph{in each state} such that the hyper-assertion holds for the resulting set of states.
We formalize this notion in \appref{sec:appendix}.
Scaffold variables allow hyper-assertions to be conveniently split into a \emph{non-relational} part, constraining individual states, and a \emph{relational} part, relating multiple states.
For example, the postcondition of the function $\mathit{compute}$  from \secref{sec:tour} can be rewritten
with \emph{scaffold variables} $\delta_o$ and $\delta_h$ as follows
(where $e_1(\sigma_1) \mathop{\Rel} e_2(\sigma_2)$, is shorthand for $\en{n}\sigma_1(e_1=n) \land \sigma_2(n\mathop{\Rel}e_2)$, where $\mathop{\Rel}\in\{=,\neq,<,\ldots\}$):
\begin{align*}
&\fs{\sigma_1} \fs{\sigma_2} \es{\sigma}
\exists u, v \ldotp
\sigma_1(\acc{o} * \pointsto{h}{v})
\land
\sigma(\pointsto{o}{u} * \pointsto{h}{v})
\land
\sigma_2(\pointsto{o}{u} * \acc{h})\\
=
&\underbrace{(\fs{\sigma}\sigma(o\mapsto\greenify{\delta_o}*h\mapsto\greenify{\delta_h}))}_{\text{non-relational part}}
\star
\underbrace{(\fs{\sigma_1}\fs{\sigma_2}\es{\sigma}
\sigma(\greenify{\delta_h}) = \sigma_1(\greenify{\delta_h})
\land
\sigma_2(\greenify{\delta_o}) = \sigma(\greenify{\delta_o}))}_{\text{relational part}}
\end{align*}
This formulation
has the advantage that ownership assertions need not be duplicated for $\sigma$, $\sigma_1$, and $\sigma_2$. 
The \emph{scaffold variables} $\delta_o$ and $\delta_h$ represent the values stored at locations $o$ and $h$, respectively. 
Their values are preserved by the hyper separating conjunction $\star$, which allows the second conjunct to refer to those values to express the relational property.
Not only do scaffold variables allow writing more concise specifications, they also enable stronger reasoning principles,
such as:
$$
\newrule{ReadScf}{ax:read-scaffold}
[\baseok[P]]
[y\notin\const{fv}(P)]
[y \neq x]
{\hoare{(\fs{\sigma}\sigma(\ok:x\mapsto\delta_x))\star P}{y\coloneqq[x]}{(\fs{\sigma}\sigma(\ok:x\mapsto\delta_x\land y=\delta_x))\star P}}
$$

Rule \ruleref{ax:read-scaffold} generalizes \ruleref{ax:read} from \figref{fig:selected-rules} by allowing $x$ to hold an abstract value $\delta_x$, constrained in $P$,
rather than requiring it to point to a specific expression $e$, making it more expressive.
As an example, the triple below
can be derived from 
the consequence rule and
\ruleref{ax:read-scaffold}
(with $P \triangleq \fs{\sigma_1} \es{\sigma_2} \exists n \ldotp \sigma_1(\delta_x = n) \land \sigma_2(\delta_x = n+1)$),
but it cannot be derived from \ruleref{ax:read}:
{
\small
$$
\simpleHoare{\fs{\sigma_1} \es{\sigma_2} \exists n \ldotp \sigma_1(x \mapsto n) \land \sigma_2(x \mapsto n+1)}{y \coloneqq [x]}{\fs{\sigma_1} \es{\sigma_2} \exists n \ldotp \sigma_1(y = n) \land \sigma_2(y = n+1)}
$$%
}

Finally, since scaffold variables play a role solely during the interpretation process---serving only as an auxiliary “scaffold”---their values are not required to be preserved from the precondition to the postcondition.
This motivates the scaffold restriction in the rule \ruleref{rule:frame}, as otherwise:
\small
\[\applyrule{rule:frame}[\hoare{\fs{\sigma}\sigma(x\mapsto\delta_x)}{\Write{x}{42}}{\fs{\sigma}\sigma(x\mapsto\delta_x)}]{\nothoare{(\fs{\sigma}\sigma(x\mapsto\delta_x))\star(\fs{\sigma}\sigma(\delta_x=5))}{\Write{x}{42}}{(\fs{\sigma}\sigma(x\mapsto\delta_x))\star(\fs{\sigma}\sigma(\delta_x=5))}}\]
\normalsize

\subsection{Examples}

To conclude this section, we illustrate the expressiveness of HSL’s rules by showing that the program in \figref{fig:violation-GNI} (in black) violates generalized non-interference (GNI), an $\exists\exists\forall$-hyperproperty,
while the program in \figref{fig:example-GNI} satisfies GNI ($\forall \forall \exists$), both lying beyond the reach of existing separation logics.
We provide further examples in \appref{sec:appendix-main-examples}, including a proof of the existence of an execution with a maximal value ($\exists \forall$): another property that lies beyond the capabilities of existing separation logics.

The heap-independent program in the middle of \figref{fig:violation-GNI} is borrowed from Hyper Hoare Logic~\cite{dardinierHyperHoareLogic2024};
we thus focus on the surrounding heap-manipulating commands.
We assume that the high input is stored at location $h$,
and that we have at least two different possibilities for that input value, as expressed by the precondition (where $\delta_h$ is a scaffold variable).
We first use the rule \ruleref{ax:read-scaffold} to read the value pointed by $h$ into variable $v_h$.
Note that the rule \ruleref{ax:read} cannot be used directly here, as we do not have an expression $e$ to which $h$ points in the precondition.
We then rewrite our hyper-assertion (using the consequence rule) before the deallocation,
to apply the rule \ruleref{ax:free}.
After the heap-independent part,
which can be easily handled with the relevant rules from \appref{sec:appendix-soundness} (as shown by \citet{dardinierHyperHoareLogic2024}),
we apply the rules \ruleref{ax:alloc} and \ruleref{ax:write} in sequence, yielding the negation of GNI:
All executions starting with the same high input as $\sigma_1$ end up with a low output different from that of $\sigma_2$.
In other words, observing $\sigma_2$'s low output leaks information about $\sigma_2$'s high input, namely that it is different from $\sigma_1$'s high input.

\begin{figure}
	\begin{minipage}{1\linewidth} 
		\tiny
		\begin{align*}
			&\hasrt{ (\fs{\sigma}\sigma(\pointsto{h}{\delta_h}) ) \star \big((\es{\sigma_1} \es{\sigma_2} \sigma_1(\delta_h) \neq \sigma_2(\delta_h))\land(\fs{\sigma}\sigma(\ok:\top))\big)} \\
			&\Read{v_h}{h}; \tag{\ruleref{ax:read-scaffold}}\\
			&\hasrt{ (\fs{\sigma}\sigma(\pointsto{h}{\delta_h} \land v_h = \delta_h) ) \star \big((\es{\sigma_1} \es{\sigma_2} \sigma_1(\delta_h) \neq \sigma_2(\delta_h))\land(\fs{\sigma}\sigma(\ok:\top))\big)}\\
			\models &\hasrt{ (\fs{\sigma}\sigma(\pointsto{h}{\sth}) ) \star \big((\es{\sigma_1} \es{\sigma_2} \sigma_1(v_h) \neq \sigma_2(v_h))\land(\fs{\sigma}\sigma(\ok:\top))\big)} \\
			&\Free{h}; \tag{\ruleref{ax:free}}\\
			&\hasrt{ (\fs{\sigma}\sigma(\pointstobot{h}) ) \star \big((\es{\sigma_1} \es{\sigma_2} \sigma_1(v_h) \neq \sigma_2(v_h))\land(\fs{\sigma}\sigma(\ok:\top))\big)}\\
			\models &\hasrt{\es{\sigma_1}\en{k_1}\sigma_1(k_1\le9)\land \es{\sigma_2}\en{k_2}\sigma_2(k_2\le9)\land \fs{\sigma}\fn{k}\sigma(k\nleq9)\lor \sigma(v_h) \neq \sigma_1(v_h) \lor \sigma(v_h+k) \neq \sigma_2(v_h+k_2)} \\
			&\Havoc{k};
			\Assume{k \le 9};
			\Assign{v_l}{v_h + k}; \\
			&\hasrt{ \es{\sigma_1} \es{\sigma_2} \fs{\sigma} \sigma(v_h) \neq \sigma_1(v_h) \lor \sigma(v_l) \neq \sigma_2(v_l)} \\
			&\Alloc{l}; \tag{\ruleref{ax:alloc}}\\
			&\hasrt{ (\fs{\sigma} \sigma(\pointsto{x}{\sth})) \star (\es{\sigma_1} \es{\sigma_2} \fs{\sigma} \sigma(v_h) \neq \sigma_1(v_h) \lor \sigma(v_l) \neq \sigma_2(v_l))}\\
			&\Write{l}{v_l} \tag{\ruleref{ax:write}}\\
			&\hasrt{ (\fs{\sigma} \sigma(\pointsto{x}{v_l})) \star (\es{\sigma_1} \es{\sigma_2} \fs{\sigma} \sigma(v_h) \neq \sigma_1(v_h) \lor \sigma(v_l) \neq \sigma_2(v_l))}
		\end{align*}
		\vspace{-0.5cm}
		\caption{Proof outline showing that the program in black violates GNI, where the high input is stored at location $h$, and the low output is stored at the newly-allocated location $l$.
			A proof outline for the heap-independent part can be seen in \citep{dardinierHyperHoareLogic2024}.
		}
		\label{fig:violation-GNI}
	\end{minipage}
\end{figure}

\begin{figure}
	\begin{minipage}{\linewidth} 
		\tiny
		\begin{align*}
			\hspace{0pt}
			&\hasrt{(\fs{\sigma}\sigma(\List[h][H]))\star(\fs{\sigma_1}\fs{\sigma_2} \sigma_1(\len(H))=\sigma_2(\len(H)))} \\
			\models &\hasrt{(\fs{\sigma}\sigma(\List[0][\delta_L]*\Lseg[h][h][\delta_A]*\List[h][\delta_B]*H={\delta_A}^\frown\delta_B))\star
				I_{rel}} \\
			&\Assign{i}{h};\Assign{l}{0};\Assign{s}{0}; \\
			&\hasrt{I_{own}\star
				I_{rel}} \\ 
			&\resv{while }i\neq0\resv{ do} \\
			&\;\hasrt{(\fs{\sigma}\sigma(i\neq0))\land(I_{own}\star
				I_{rel})} \\
			\models &\;\hasrt{(\fs{\sigma}\sigma(i\mapsto\delta))\star
				(\fs{\sigma}\sigma(\List[l][\delta_L]*\Lseg[h][i][\delta_A]*\delta=\tb{\delta_v,\delta_{i'}}*\List[\delta_{i'}][\delta_B]*H={\delta_A}^\frown[\delta_v]^\frown\delta_B))\star
				I_{rel}} \\
			&\;\Read{p}{i}; \tag{\ruleref{ax:read-scaffold}}\\
			&\;\hasrt{(\fs{\sigma}\sigma(i\mapsto\delta\land p=\delta))\star
				(\fs{\sigma}\sigma(\List[l][\delta_L]*\Lseg[h][i][\delta_A]*\delta=\tb{\delta_v,\delta_{i'}}*\List[\delta_{i'}][\delta_B]*H={\delta_A}^\frown[\delta_v]^\frown\delta_B))\star
				I_{rel}} \\
			\models &\;\hasrt{(\fs{\sigma}\sigma(\List[l][\delta_L]*\Lseg[h][p\ldotp\text{snd}][\delta_A]*\List[p\ldotp\text{snd}][\delta_B]*H={\delta_A}^\frown\delta_B))\star
				I_{rel}} \\
			&\;\Assign{s}{s+p\ldotp\text{fst}};\Assign{i}{p\ldotp\text{snd}}; \\
			&\;\hasrt{(\fs{\sigma}\sigma(\List[l][\delta_L]*\Lseg[h][i][\delta_A]*\List[i][\delta_B]*H={\delta_A}^\frown\delta_B))\star
				I_{rel}} \\ 
			\models &\;\hasrtl{(\fs{\sigma}\fn{k}\sigma(\List[l][\delta_L]*\Lseg[h][i][\delta_A]*\List[i][\delta_B]*H={\delta_A}^\frown\delta_B))\,\star} \\
			&\;\hasrtr{\ \ (\fs{\sigma_1}\fn{k_1}\fs{\sigma_2}\fn{k_2}\es{\sigma}\en{k} \sigma_1(\len[\delta_B]) = \sigma_2(\len[\delta_B]) \land \sigma(H)=\sigma_1(H)\land
				\sigma([s\oplus k]^\frown\delta_L) = \sigma_2([s\oplus k_2]^\frown\delta_L))} \\
			&\;\Havoc{k}; \\
			&\;\hasrt{I_{own}\star
				(\fs{\sigma_1}\fs{\sigma_2}\es{\sigma} \sigma_1(\len[\delta_B]) = \sigma_2(\len[\delta_B]) \land \sigma(H) = \sigma_1(H)\land \sigma([s\oplus k]^\frown\delta_L) = \sigma_2([s\oplus k]^\frown\delta_L))} \\
			&\;\Alloc{p}; \tag{\ruleref{ax:alloc}}\\
			&\;\hasrt{(\fs{\sigma}\sigma(p\mapsto\_))\star I_{own}\star(\fs{\sigma_1}\fs{\sigma_2}\es{\sigma} \sigma_1(\len[\delta_B]) = \sigma_2(\len[\delta_B]) \land \sigma(H) = \sigma_1(H)\land \sigma([s\oplus k]^\frown\delta_L) = \sigma_2([s\oplus k]^\frown\delta_L))} \\
			&\;\Write{p}{\tb{s\oplus k,l}}; \tag{\ruleref{ax:write}}\\
			&\;\hasrt{(\fs{\sigma}\sigma(p\!\mapsto\!\!\tb{s\oplus k,l}))\!\star\! I_{own}\!\star\!
				(\fs{\sigma_1}\!\fs{\sigma_2}\!\es{\sigma} \sigma_1(\len[\delta_B])\!=\!\sigma_2(\len[\delta_B])\!\land\!\sigma(H)\!=\!\sigma_1(H)\!\land\!\sigma([s\!\oplus\! k]^\frown\delta_L)\!=\!\sigma_2([s\!\oplus\! k]^\frown\delta_L))} \\
			\models &\;\hasrt{(\fs{\sigma}\sigma(\List[p][\delta_L]*\Lseg[h][i][\delta_A]*\List[i][\delta_B]*H={\delta_A}^\frown\delta_B))\star
				I_{rel}} \\
			&\;\Assign{l}{p}; \\
			&\;\hasrt{I_{own}\star
				I_{rel}} \\ 
			&\resv{od} \tag{\ruleref{rule:while-sync}}\\
			&\hasrt{\big((\fs{\sigma}\sigma(i=0))\land(I_{own}\star I_{rel})\big)\lor\big((\fs{\sigma}\sigma(\er:\top))\land(I_{own}\star I_{rel})\big)\lor(\fs{\sigma}\bot)} \\
			\models &\hasrt{(\fs{\sigma}\sigma(\List[l][\delta_L]*\List[h][H]))\star(\fs{\sigma_1}\fs{\sigma_2}\es{\sigma} \sigma(H) = \sigma_1(H)\land \sigma(\delta_L) = \sigma_2(\delta_L))} \\
		\end{align*}
		\vspace{-0.8cm}
		\caption{Proof that the program in black satisfies GNI, with the (unaltered) $H$ denoting mathematical sequence, being secret, while its length is public.
				The predicate $\List[h][H]$ is defines as $\Lseg[h][0][H]$,
				where the predeicate $\Lseg[h][h_{int}][H]$ is defined as $(h=h_{int}\,\land\, H=[])\lor(\en{a,h',H'}h\mapsto\la a,h'\ra\,*\,\Lseg[h'][h_{int}][H']\,*\,H=[a]^\frown H')$.
				The loop invariant is given by
				$I_{own}\triangleq\fs{\sigma}\sigma(\List[l][\delta_L]*\Lseg[h][i][\delta_A]*\List[i][\delta_B]*H={\delta_A}^\frown\delta_B)$, separation-conjoined with
				$I_{rel}\triangleq\fs{\sigma_1}\fs{\sigma_2}\es{\sigma} \sigma_1(\len[\delta_B]) = \sigma_2(\len[\delta_B]) \land \sigma(H) = \sigma_1(H)\land \sigma(\delta_L) = \sigma_2(\delta_L)$.
		}
		\label{fig:example-GNI}
	\end{minipage}%
\end{figure}

Similarly to the proof in \figref{fig:violation-GNI}, the proof in \figref{fig:example-GNI} focuses on the heap-manipulating commands,
as the heap-independent aspects are easily established with the relevant rules from \appref{sec:appendix-soundness} (cf. \citep{dardinierHyperHoareLogic2024} for related examples).
We assume that the high input is stored in the contents of the list at $h$ (represented by the mathematical sequence $H$), while its length is treated as a low-security input.
After applying the consequence rule and initializing $\Assign{i}{h}$ as a traverse pointer, $\Assign{l}{0}$ for the new list under construction,
and $\Assign{s}{0}$ to accumulate the sum of traversed secret values, we ensure that the invariant $I_{own}\star I_{rel}$ holds at entry.

The proof of the loop invariant begins with the standard ownership factorization, followed by the application of \ruleref{ax:read-scaffold} (similarly to \figref{fig:violation-GNI}, we cannot apply \ruleref{ax:read}).
We then leverage the fact that scaffold variables are not preserved from pre- to postcondition, renaming $\delta_A$ to include the previous $\delta_A$ together with the iterated element $\delta_v$.
Next, we systematically apply the assign and havoc rules from \appref{sec:appendix-soundness}, after which \ruleref{ax:alloc} is applied.
This produces a hyper-assertion in the required form, allowing a straightforward application of \ruleref{ax:write}.
The invariant proof concludes with a renaming of $\delta_L$ and an application of the assign rule from \appref{sec:appendix-soundness}.

Since the invariant ensures that the loops progress in sync, we can apply the \ruleref{rule:while-sync} rule.
Its erroneous disjunct, $(\fs{\sigma}\sigma(\er:\top))\land(I_{own}\star I_{rel})$, can be dropped as the corresponding invariant is $\ok$-only, while
its third disjunct, $\fs{\sigma}\bot$, can be dropped as the invariant is downward closed.
We conclude the proof with a final application of \ruleref{rule:cons} together with the definitions of $\List$ and $\Lseg$.

\section{Related Work}
\label{sec:related}
As discussed throughout the paper, the closest related works are Hyper Hoare Logic (HHL)~\cite{dardinierHyperHoareLogic2024} and Outcome Separation Logic (OSL)~\cite{zilbersteinOutcomeSeparationLogic2024}.
Our work builds on the ideas from HHL, in particular, tracking sets of reachable states and using universal and existential quantification over initial and reachable states to express and prove hyperproperties. However, HSL goes significantly beyond HHL by supporting separation logic reasoning about heap-manipulating programs, which is enabled by its novel hyper separating conjunction and generalized frame rule. Both are substantial generalizations of their standard SL counterparts.

Outcome Separation Logic (OSL)~\cite{zilbersteinOutcomeSeparationLogic2024} extends Outcome Logic (OL)~\cite{zilbersteinOutcomeLogicUnifying2023} with separation-logic-style correctness and incorrectness reasoning for heap-manipulating programs.
Similarly to HSL, OSL's judgments (in its non-deterministic instantiation) are interpreted over sets of states, and thus both logics face similar challenges.
In particular, OSL can express combinations of unary safety and reachability properties, \ie combinations of $\forall$- and $\exists$-properties.
However, unlike HSL, OSL cannot express or reason about hyperproperties, \ie properties relating multiple executions of a program, such as monotonicity and non-interference ($\forall\forall$), GNI ($\forall\forall\exists$), or non-determinism ($\exists\exists$), since these require more than one state quantifier.
This difference is also reflected in the design of the separating conjunction and frame rule.
A key technical feature of OSL is an \emph{asymmetric} separating conjunction, defined \emph{syntactically}, between an outcome assertion and a standard unary SL assertion.
Correspondingly, the frame in OSL's frame rule is restricted to a standard SL assertion $f$, which intuitively corresponds to the HSL assertion $\forall\langle \sigma \rangle.\ \sigma(\mathit{ok}: f)$.
If adopted in our setting, such an asymmetric syntactic star would restrict framing to unary assertions, and would therefore prevent framing relations between executions.
In contrast, HSL uses a \emph{symmetric} hyper separating conjunction, defined \emph{semantically} over sets of labeled states.
Accordingly, HSL's frame rule allows $\forall^*$-frames expressing relational properties,
which is crucial for compositional reasoning about hyperproperties, and provides an important foundation for future extensions to concurrency (where the postconditions of parallel threads are themselves hyper-assertions).
Finally, OSL models domain violations using a dedicated $\mathit{undef}$ state and a special outcome assertion $\top$, whereas HSL tracks unknown states via a $\uk$ label; in our setting, this design is essential to obtain an associative symmetric hyper separating conjunction.

Similarly to HSL and OSL,
many logics for reachability have been extended from Hoare logic~\cite{floydAssigningMeaningsPrograms1967, hoareAxiomaticBasisComputer1969} versions to separation logic (SL)~\cite{reynoldsSeparationLogicLogic2002} versions, to support local reasoning for heap-manipulating programs, including
(Concurrent) Incorrectness Separation Logic (ISL)~\cite{raadLocalReasoningPresence2020,raadConcurrentIncorrectnessSeparation2022} ($\exists$-properties, based on Incorrectness Logic~\cite{ohearnIncorrectnessLogic2019}),
Separation Sufficient Incorrectness Logic (SSIL)~\cite{ascariSufficientIncorrectnessLogic2024} ($\exists$-properties), and
Exact Separation Logic (ESL)~\cite{maksimovicExactSeparationLogic2023} ($\forall$ and $\exists$-properties).
Unlike HSL, these logics reason only about \emph{unary} properties of programs,
and thus defining the separating conjunction in their settings is straightforward.
Similarly, \emph{InsecSL}~\cite{murrayCompositionalVulnerabilityDetection2023} extends
\emph{Insecurity Logic}~\cite{murrayUnderApproximateRelationalLogic2020},
a program logic for proving violations of $\forall \forall$-properties (\ie $\exists \exists$-properties),
to a separation logic setting.
HSL subsumes SL ($\forall$) and SSIL ($\exists$), in the sense that the triples derivable in these logics are also derivable in HSL.
Moreover, ISL ($\exists$), ESL ($\forall$ and $\exists$), and Insec triples ($\exists \exists$) can be expressed as HSL triples, but not with the standard state-quantifiers ($\forall \langle \sigma \rangle$ and $\exists \langle \sigma \rangle$), as shown in \citet{dardinierHyperHoareLogic2024_extended}.
Because of this discrepancy, we have not explored whether their rules can be derived from HSL rules.
Additionally, HSL can prove more general reachability properties, such as GNI, that involve both existential and universal quantification; it also allows one to compose different kinds of properties in one proof, for instance, to compose non-interference for a deterministic statement with GNI for a non-deterministic one.

In the domain of safety hyperproperties,
Relational Separation Logic (RSL)~\cite{yangRelationalSeparationLogic2007}
extends Relational Hoare Logic~\cite{bentonSimpleRelationalCorrectness2004} to reason about $\forall \forall$-properties of heap-manipulating programs,
and LGTM~\cite{gladshteinMechanisedHypersafetyProofs2024} extends LHC~\cite{dosualdoProvingHypersafetyCompositionally2022} (itself extending Cartesian Hoare Logic~\cite{sousaCartesianHoareLogic2016}) to reason about $\forall^*$ properties of heap-manipulating programs.
These logics allow proving \emph{relational properties}, \ie
properties relating multiple executions from \emph{different} programs, whereas HSL focuses on \emph{hyperproperties}, \ie properties relating multiple executions from the \emph{same} program.
Moreover, while HSL is able to express and reason about $\forall^*$-properties, it does not formally subsume RSL or LGTM, because the former enforces the same termination behavior for both executions (either both terminate, or both diverge) and the latter enforces termination for all executions, while HSL does not allow expressing hyperproperties of non-terminating executions.
Unlike HSL, which tracks a \emph{set} of reachable states,
these logics track $k$-tuples of executions, and thus defining their separating conjunction (in a pointwise manner) is straightforward,
but this limits them to reasoning about $k$-safety properties only,
whereas HSL supports arbitrary quantifier alternation.

Other specialized separation logics have been developed to reason about specific hyperproperties,
for example SecCSL~\cite{ernstSecCSLSecurityConcurrent2019} and CommCSL~\cite{eilersCommCSLProvingInformation2023}
target non-interference (\ie $\forall \forall$) in concurrent programs, while
Simuliris~\cite{gaherSimulirisSeparationLogic2022} and ReLoC~\cite{fruminReLoCMechanisedRelational2018}
target (contextual) refinement (\ie $\forall \exists$).

\section{Conclusion and Future Work}
\label{sec:conclusion}
We presented \emph{Hyper Separation Logic} (HSL), the first separation logic that supports reasoning about hyperproperties with arbitrary quantifier alternation, covering both $\forall^*\exists^*$ properties (e.g., generalized non-interference) and $\exists^*\forall^*$ properties (e.g., existence of a maximum).
At its core is a \emph{hyper separating conjunction} that composes hyper-assertions by preserving their existential lower bounds and universal upper bounds on states.
This connective enables sound, generic rules for heap-manipulating commands and a generalized frame rule for local reasoning, whose soundness follows from embedding all $\forall$-frames into triple validity.
Reachability reasoning is supported by labeled states, including the novel label $\uk$.

Future work includes extending HSL to concurrency, to termination reasoning (both proving and disproving), and to relational properties across multiple programs.

\begin{acks}
	We are deeply grateful to Tinko Tinchev for numerous insightful discussions and significant contributions to the conceptual development and technical proofs presented in this work.
	This research was partially funded by the Ministry of Education and Science of Bulgaria (support for INSAIT, part of the Bulgarian National Roadmap for Research Infrastructure).
\end{acks}

\section*{Data Availability Statement}
All technical results presented in this paper have been formalized and proven in Isabelle/HOL, and our formalization is publicly available \citep{HyperSeparationLogic_artifact}.

\bibliography{references}

\clearpage
\appendix

\section{Technical Definitions}
\label{sec:appendix}
\begin{figure}
	\begingroup
	\newcommand{\scaleabortfig}{0.75}
	\scalebox{\scaleabortfig}{$\prftree[r]{}{}{\la x\coloneqq e,\la s,h\ra\ra\to\la\resv{skip},\la \update{s}{x}{e(s)},h\ra\ra}$}\hfill 
	\scalebox{\scaleabortfig}{$\prftree[r]{}{}{\la x\coloneqq\mathrm{nonDet()},\la s,h\ra\ra\to\la\resv{skip},\la \update{s}{x}{v},h\ra\ra}$}\hfill 
	\scalebox{\scaleabortfig}{$\prftree[r]{}{\lnot b(s)}{\la\Assert{b},\la s,h\ra\ra\to\cEr}$} 
	\\[\bigskipamount]
	\scalebox{\scaleabortfig}{$\prftree[r]{}{l\not=0}{l\notin\mathrm{Dom}(h)}{\la x\coloneqq\mathrm{alloc()},\la s,h\ra\ra\to\la\resv{skip},\la \update{s}{x}{l},\update{h}{l}{v}\ra\ra}$}\hfill 
	\scalebox{\scaleabortfig}{$\prftree[r]{}{b(s)}{\la\resv{assume}\ b,\la s,h\ra\ra\to\la\resv{skip},\la s,h\ra\ra}$}\hfill 
	\scalebox{\scaleabortfig}{$\prftree[r]{}{b(s)}{\la\Assert{b},\la s,h\ra\ra\to\la\resv{skip},\la s,h\ra\ra}$} 
	\\[\bigskipamount]
	\scalebox{\scaleabortfig}{$\prftree[r]{}{s(x)\not=0}{h(s(x))=v}{\la[x]\coloneqq e,\la s,h\ra\ra\to\la\resv{skip},\la s,\update{h}{s(x)}{e(s)}\ra\ra}$} \hfill
	\scalebox{\scaleabortfig}{$\prftree[r]{}{s(x)=0 \vee h(s(x))=\bot}{\la[x]\coloneqq e,\la s,h\ra\ra\to\cEr}$}\hfill 
	\scalebox{\scaleabortfig}{$\prftree[r]{}{s(x)\not=0}{s(x)\notin\mathrm{Dom}(h)}{\la[x]\coloneqq e,\la s,h\ra\ra\to\cDv}$} 
	\\[\bigskipamount]
	\scalebox{\scaleabortfig}{$\prftree[r]{}{s(x)\not=0}{h(s(x))=v}{\la y\coloneqq[x],\la s,h\ra\ra\to\la\resv{skip},\la \update{s}{y}{v},h\ra\ra}$}\hfill 
	\scalebox{\scaleabortfig}{$\prftree[r]{}{s(x)=0 \vee h(s(x))=\bot}{\la y\coloneqq[x],\la s,h\ra\ra\to\cEr}$}\hfill 
	\scalebox{\scaleabortfig}{$\prftree[r]{}{s(x)\not=0}{s(x)\notin\mathrm{Dom}(h)}{\la y\coloneqq[x],\la s,h\ra\ra\to\cDv}$} 
	\\[\bigskipamount]
	\scalebox{\scaleabortfig}{$\prftree[r]{}{s(x)\not=0}{h(s(x))=v}{\la\mathrm{free}(x),\la s,h\ra\ra\to\la\resv{skip},\la s,\update{h}{s(x)}{\bot}\ra\ra}$}\hfill 
	\scalebox{\scaleabortfig}{$\prftree[r]{}{s(x)=0 \vee h(s(x))=\bot}{\la \mathrm{free}(x),\la s,h\ra\ra\to\cEr}$}\hfill 
	\scalebox{\scaleabortfig}{$\prftree[r]{}{s(x)\not=0}{s(x)\notin\mathrm{Dom}(h)}{\la \mathrm{free}(x),\la s,h\ra\ra\to\cDv}$}\hfill 
	\\[\bigskipamount]
	\scalebox{\scaleabortfig}{$\prftree[r]{}{}{\la\resv{skip};C_2,\sigma\ra\to\la C_2,\sigma\ra}$}\hfill 
	\scalebox{\scaleabortfig}{$\prftree[r]{}{\la C_1,\sigma\ra\to\la C_1',\sigma'\ra}{\la C_1;C_2,\sigma\ra\to\la C_1';C_2,\sigma'\ra}$}\hfill 
	\scalebox{\scaleabortfig}{$\prftree[r]{}{\la C_1,\sigma\ra\to\cEr}{\la C_1;C_2,\sigma\ra\to\cEr}$}\hfill 
	\scalebox{\scaleabortfig}{$\prftree[r]{}{\la C_1,\sigma\ra\to\cDv}{\la C_1;C_2,\sigma\ra\to\cDv}$} 
	\\[\bigskipamount]
	\scalebox{\scaleabortfig}{$\prftree[r]{}{}{\la C_1+C_2,\sigma\ra\to\la C_1,\sigma\ra}$}\hfill
	\scalebox{\scaleabortfig}{$\prftree[r]{}{}{\la C_1+C_2,\sigma\ra\to\la C_2,\sigma\ra}$}\hfill
	\scalebox{\scaleabortfig}{$\prftree[r]{}{}{\la C^*,\sigma\ra\to\la C;C^*,\sigma\ra}$}\hfill
	\scalebox{\scaleabortfig}{$\prftree[r]{}{}{\la C^*,\sigma\ra\to\la\resv{skip},\sigma\ra}$}
	\caption{Small-step semantics of program commands, where $v\in\mathbb{N}$ and $\update{\_}{x}{v}$ denotes a function update.} \label{fig:small-step-semantics-def}
	\endgroup
\end{figure}
The small-step semantics for the program commands introduced in \secref{subsec:formal:semantics} are shown in \figref{fig:small-step-semantics-def} and $\md[C]$ is defined in \figref{fig:md-definition}.

We now formalize the interpretation of the syntax from \secref{subsec:formal:syntax}, which, as hinted there, is defined in two stages.
The first stage operates on \emph{intermediate states} $\hat{\omega}$, consisting of a pair of a \emph{scaffold store} \Deleatur\ (a total function from scaffold variables in $\GVars$ to $\mathbb{N}$)
and an extended state $\omega$, allowing us to bind pointed-to values and reference them throughout this stage.
The second stage simply projects \emph{intermediate assertions} $\hat{p}$---sets of intermediate states---into standard assertions $p$, i.e., sets of states.
We begin by defining stage one:
\[\begin{aligned}
	\hat{S},I,s_0 &\models_0 \top \triangleq \mathrm{true} \\
	\hat{S},I,s_0 &\models_0 \bot \triangleq \mathrm{false} \\
	\hat{S},I,s_0 &\models_0 \sigma(\hat{p}) \triangleq \sigma\in\mathrm{Dom}(I)\land\la\text{\Deleatur},\la\Lambda,\la s\triangleleft s_0,h\ra\el\ra\ra\in\hat{p}\text{, where }I(\sigma)=\la\text{\Deleatur},\la\Lambda,\la s,h\ra\el\ra\ra \\
	\hat{S},I,s_0 &\models_0 \en{n}P \triangleq \exists v\ldotp\hat{S},I,s_0[n\mapsto v]\models_0 P \\
	\hat{S},I,s_0 &\models_0 \fn{n}P \triangleq \forall v\ldotp\hat{S},I,s_0[n\mapsto v]\models_0 P \\
	\hat{S},I,s_0 &\models_0 \es{\sigma}P \triangleq \exists\hat{\omega}\in\hat{S}\ldotp\hat{S},I[\sigma\mapsto\hat{\omega}],s_0\models_0 P \\
	\hat{S},I,s_0 &\models_0 \fs{\sigma}P \triangleq \forall\hat{\omega}\in\hat{S}\ldotp\hat{S},I[\sigma\mapsto\hat{\omega}],s_0\models_0 P \\
	\hat{S},I,s_0 &\models_0 P\land Q \triangleq \hat{S},I,s_0\models_0 P\land\hat{S},I,s_0\models_0 Q \\
	\hat{S},I,s_0 &\models_0 P\lor Q \triangleq \hat{S},I,s_0\models_0 P\lor\hat{S},I,s_0\models_0 Q \\
	\hat{S},I,s_0 &\models_0 P\otimes Q \triangleq I=\emptyset\land\exists\hat{S_P},\hat{S_Q}\ldotp \hat{S}=\hat{S_P}\cup\hat{S_Q}\land\hat{S_P},I,s_0\models_0 P\land\hat{S_Q},I,s_0\models_0 Q \\
	\hat{S},I,s_0 &\models_0 P\star Q \triangleq \exists\hat{S_P},\hat{S_Q},I_P,I_Q\ldotp\hat{S}\subseteq\hat{S_P}*\hat{S_Q}\land\plw[\hat{S}][\hat{S_P}][\hat{S_Q}]\land\prw[\hat{S}][\hat{S_P}][\hat{S_Q}]\\ 
	&\qquad\qquad\, \land \mathrm{Dom}(I)=\mathrm{Dom}(I_P)=\mathrm{Dom}(I_Q)\land\mathrm{Ran}(I_P)\subseteq\hat{S_P}\land\mathrm{Ran}(I_Q)\subseteq\hat{S_Q}\\
	&\qquad\qquad\, \land (\forall\sigma\in\mathrm{Dom}(I)\ldotp\addStates[I(\sigma)][I_P(\sigma)][I_Q(\sigma)])\land \hat{S_P},I_P,s_0\models_0 P\land \hat{S_Q},I_Q,s_0\models_0 Q
\end{aligned}\]
where
$\hat{S}$ is an intermediate assertion, $I$ is a partial function from $\SVars$ to the set of intermediate assertions, $s_0$ is a partial function from $\PVars$ to $\mathbb{N}$ and
\[(f\triangleleft f_0)(x)\triangleq
\begin{cases}
	f_0(x), & x\in\mathrm{Dom}(f_0) \\
	f(x), & x\notin\mathrm{Dom}(f_0)
\end{cases}\] is function overriding: it coincides with $f$ except on the domain of $f_0$, \mbox{where it takes the values of $f_0$}.

\begin{figure}
	\centering
	\begin{equation*}
		\begin{aligned}
			&\text{md}(\Skip) = \emptyset && \text{md}(\Assign{x}{e}) = \{x\} && \text{md}(\Havoc{x}) = \{x\} \\
			&\text{md}(\Assume{b}) = \emptyset && \text{md}(\Assert{b}) = \emptyset && \text{md}(\Alloc{x}) = \{x\} \\
			&\text{md}(\Write{x}{e}) = \emptyset && \text{md}(\Read{y}{x}) = \{y\} && \text{md}(\Free{x}) = \emptyset \\
			&\text{md}(C_1; C_2) = \text{md}(C_1) \cup \text{md}(C_2) && \text{md}(C_1+C_2) = \text{md}(C_1) \cup \text{md}(C_2) && \text{md}(C^*) = \text{md}(C)
		\end{aligned}
	\end{equation*}
	\caption{Definition of modified variables ($\text{md}$) for program commands.}
	\label{fig:md-definition}
\end{figure}

The partial functions $I$ and $s_0$ record the bindings introduced during interpretation.
$I$ maps each quantified state variable $\sigma$ to the intermediate state assigned to it at the point of quantification, while $s_0$ maps program variables $x$ to their assigned values.
That is, when $I(\sigma)$ (resp. $s_0(x)$) is defined, it indicates that the state (resp. program) variable has been previously quantified and is associated with that particular intermediate state (resp. value).
When it is undefined, the variable lies outside the scope of any quantifier.

The two most noteworthy cases are the base case, $\sigma(\hat{p})$, and the hyper separating conjunction.
The base case $\sigma(\hat{p})$ is satisfied by those $\hat{S}$, $I$ and $s_0$ for which $\sigma$ has already been instantiated via an intermediate state $I(\sigma)=\la\text{\Deleatur},\la\Lambda,\la s,h\ra\el\ra\ra$,
which, once the quantified program variables $s_0$ are accounted for (i.e., used to override $s$), belongs to $\hat{p}$.
For example, 
\scriptsize
\begin{align*}
	\hat{S},\emptyset,\emptyset\models_0\en{n}\fs{\sigma}\sigma(\ok:x=n) & \Longleftrightarrow\exists n_0\in\mathbb{N}\ldotp\hat{S},\emptyset,[n\mapsto n_0]\models_0\fs{\sigma}\sigma(\ok:x=n) \\
																	   & \Longleftrightarrow\exists n_0\in\mathbb{N}\ldotp\forall\la\text{\Deleatur},\la\Lambda,\la s,h\ra\el\ra\ra\in \hat{S}\ldotp \hat{S},[\sigma\mapsto\la\text{\Deleatur},\la\Lambda,\la s,h\ra\el\ra\ra],[n\mapsto n_0]\models_0\sigma(\ok:x=n) \\
																	   & \Longleftrightarrow \exists n_0\in\mathbb{N}\ldotp\forall\la\text{\Deleatur},\la\Lambda,\la s,h\ra\el\ra\ra\in \hat{S}\ldotp\la\text{\Deleatur},\la\Lambda,\la s[n\mapsto n_0],h\ra\el\ra\in\llbracket\ok:x=n\rrbracket \\
																	   & \Longleftrightarrow \exists n_0\in\mathbb{N}\ldotp\forall\la\text{\Deleatur},\la\Lambda,\la s,h\ra\el\ra\ra\in \hat{S}\ldotp \epsilon=\ok\land s(x)=n_0
\end{align*}
\normalsize
Starting with no quantified variables, $I=s=\emptyset$, we first record in $s_0$ the natural number $n_0$ obtained when quantifying the program variable $n$.
After that, we record in $I$ the intermediate state $\hat{\omega}=\la\text{\Deleatur},\la\Lambda,\la s,h\ra\el\ra\ra\in\hat{S}$ obtained when quantifying the state variable $\sigma$ and finally 
we check whether the overriden $\hat{\omega}$ satisfies $\ok:x=n$\footnote{Technically, we use semantic intermediate assertions, and when we write $\ok:x=n$, we actually mean the set of all intermediate states that satisfy it, denoted $\llbracket\ok:x=n\rrbracket$. For readability, we write it in the simpler syntactic form.}.

In our mechanization, quantification over program variables is implemented using De Bruijn indices and hence no overriding, $\triangleleft$, is occurring.
Instead, the type of $\hat{p}$ is more involved and takes both $s$ and the "De Bruijn store" $\Delta$ (corresponding to $s_0$):
\[\hat{S},I,\Delta\models_0 \sigma(\hat{p}) \triangleq \sigma\in\mathrm{Dom}(I)\land\langle I(\sigma),\Delta\rangle\in\hat{p}\]

We now turn to the hyper separating conjunction.
As with the semantic $\star$, the syntactic version splits $\hat{S}$ into $\hat{S_P}\models_0 P$ and $\hat{S_Q}\models_0 Q$ satisfying
\[\hat{S}\subseteq\hat{S_P}*\hat{S_Q}\land\plw[\hat{S}][\hat{S_P}][\hat{S_Q}]\land\prw[\hat{S}][\hat{S_P}][\hat{S_Q}]\]
In addition, however, we must ensure that the quantified states are split appropriately.
Specifically, we require two partial functions $I_P$ and $I_Q$ with the same domains as $I$, such that for all $\sigma\in\mathrm{Dom}(I)$,
\[\addStates[I(\sigma)][I_P(\sigma)][I_Q(\sigma)]\]
Finally, the set $\hat{S}$ and the partial function $I$ splits must be coherent with each other, meaning
\[\mathrm{Ran}(I_P)\subseteq\hat{S_P}\quad\text{and}\quad\mathrm{Ran}(I_Q)\subseteq\hat{S_Q}\]

We next examine the role of scaffold variables.
Binding the value stored at a pointer to a term that is visible to the surrounding context is essential. Otherwise, we end up with an expression such as $x\mapsto\sth*p\equiv(\en{n}x\mapsto n)*p$,
which asserts the existence of some memory cell at address $x$ but does not expose its contents to $p$.
At first glance, one might try to avoid this by simply writing $x\mapsto n*p$, so that $n$ is visible to $p$.
The problem, however, is that this treats $n$ as a free variable (with fixed value), rather than as the value stored in the heap at address $x$.
The usual solution is to quantify $n$ explicitly, yielding $\en{n}x\mapsto n*p$, which makes the stored value visible to $p$,
with $n$ remaining free and serving simply as a name for the stored value rather than a pre-existing variable.

Similarly, in the hyper setting, $\big(\fs{\sigma}\sigma(x\mapsto\sth)\big)\star P\equiv\big(\fs{\sigma}\sigma(\en{n}x\mapsto n)\big)\star P$
asserts that each state has a memory allocated at $x$, but the value stored there is not made visible to $P$.
Following the unary approach and removing the quantifier, we obtain $\big(\fs{\sigma}\sigma(x\mapsto n)\big)\star P$, which makes the memory values visible in $P$.
However, unlike in the unary setting, globally quantifying $n$ does not simply free it as intended;
it also forces all memory cells at $x$ across states to share the same value: $\en{n}\big(\fs{\sigma}\sigma(x\mapsto n)\big)\star P$.
Scaffold variables address this issue by making the memory values accessible in $P$ while effectively acting as globally quantified names that are instantiated independently in each state:
$\big(\fs{\sigma}\sigma(x\mapsto\delta_x)\big)\star P$.

The second stage of the interpretation implements precisely this notion of implicit, per-state global quantification.
However, rather than allowing all possible values as in the standard existential case, we instead eliminate any trace of the scaffold variables entirely---essentially projecting them away:
\[S\models P\defiff\exists\hat{S}\ldotp S=\{\omega\mid\la\text{\Deleatur},\omega\ra\in\hat{S}\}\land\hat{S},\emptyset,\emptyset\models_0 P\]

Entailment, $P\models Q$, is defined as $\forall\hat{S},I,s_0\ldotp\hat{S},I,s_0\models P\land\mathrm{Ran}(I)\subseteq\hat{S}\Longrightarrow\hat{S},I,s_0\models Q$
and equivalence, $P\equiv Q$, is defined as $P\models Q\land Q\models P$.
We denote with $\llbracket P\rrbracket$\footnote{As done repeatedly in the main text, brackets are sometimes omitted when context suffices.} the hyper-assertion $\{S\mid S\models P\}$.

We have established in Isabelle/HOL the following properties:
\begin{lemma}\label{lem:syntax-properties}
	The following hold
	\begin{enumerate}
		\setcounter{enumi}{-1} 
		\item $\sigma(\hat{p})\star\sigma(\hat{q})\equiv\sigma(\hat{p}*\hat{q})$
		\item $(\fs{\sigma}P)\star(\fs{\sigma}Q)\models\fs{\sigma}P\star Q$
		\item $(\es{\sigma}P)\star(\fs{\sigma}Q)\models\es{\sigma}P\star Q$
		\item $\llbracket(\es{\sigma}\es{\sigma'}P)\star(\fs{\sigma}\fs{\sigma'}Q)\rrbracket\subseteq\llbracket\es{\sigma}\es{\sigma'}P\star Q\rrbracket$
		\item $\llbracket(\es{\sigma}P\star Q)\otimes\top\rrbracket\subseteq\llbracket((\es{\sigma}P)\star(\fs{\sigma}Q))\otimes\top\rrbracket$, no $\exists\la\bm\cdot\ra$ in $P,Q$
		\item $\llbracket P\rrbracket=\llbracket P\otimes\top\rrbracket$, no $\fs{\bm\cdot}$ in $P$
		\item $\llbracket P\star Q\rrbracket\subseteq\llbracket P\rrbracket\star\llbracket Q\rrbracket$, $P,Q$ - closed (no free state variables)
		\item\label{testtest} $\llbracket P\rrbracket\star\llbracket Q\rrbracket\subseteq\llbracket P\star Q\rrbracket$, no intersecting scaffold variables in $P$ and $Q$
	\end{enumerate}
\end{lemma}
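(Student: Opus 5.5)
We prove each item by unfolding the two-stage semantics: the stage-one relation $\models_0$ on intermediate sets $\hat S$ with bindings $I,s_0$, and the projection of scaffold stores. Items (0)--(5) are entailments. For these we fix $\hat S,I,s_0$ with $\mathrm{Ran}(I)\subseteq\hat S$, take the witnesses $\hat S_P,\hat S_Q,I_P,I_Q$ provided by the $\star$ clause, and rebuild witnesses for the right-hand side. Items (6)--(7) relate the syntactic $\star$ to the semantic $\star$ of \secref{subsec:formal:state-model}, so there the projection layer must be handled explicitly.

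For (0), note that $\sigma$ is bound in $I$. The split $I(\sigma)=I_P(\sigma)\oplus I_Q(\sigma)$, together with $I_P(\sigma)\in\hat p$ and $I_Q(\sigma)\in\hat q$, gives $I(\sigma)\in\hat p*\hat q$. Overriding by $s_0$ commutes with $\oplus$ because stores agree. Conversely, from a decomposition $I(\sigma)=\hat\omega_p\oplus\hat\omega_q$ we choose $\hat S_P=\{\hat\omega_p\}\cup\{\text{units}\}$ and $\hat S_Q$ analogously, so that every element of $\hat S$ decomposes. The entailment convention $\mathrm{Ran}(I)\subseteq\hat S$ is what makes plw/prw satisfiable here: each element of $\hat S$ is paired with its empty-heap unit.

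For (1), given $\hat\omega\in\hat S$, we use $\hat S\subseteq\hat S_P*\hat S_Q$ to pick $\hat\omega=\hat\omega_P\oplus\hat\omega_Q$. We then extend $I,I_P,I_Q$ at $\sigma$ by these three states; coherence $\mathrm{Ran}(I_P)\subseteq\hat S_P$ holds, and the universal hypotheses instantiate at $\hat\omega_P,\hat\omega_Q$. For (2), the existential witness $\hat\omega_P\in\hat S_P$ is pushed through plw to obtain $\hat\omega\in\hat S$ and $\hat\omega_Q\in\hat S_Q$ with $\hat\omega=\hat\omega_P\oplus\hat\omega_Q$, and the universal side is instantiated at $\hat\omega_Q$. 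Item (3) iterates (2) twice, which is why it is stated only at the level of $\llbracket\cdot\rrbracket$, where $I=\emptyset$. Plw for the first witness yields one decomposition; the second witness $\sigma'$ is treated the same way, since the universal side ranges over all of $\hat S_Q$.

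For (4), we have $\hat S=\hat S_1\cup\hat S_2$ with $\hat S_1\models \es{\sigma}P\star Q$ at the witness. We restrict to the witness $\hat\omega$ and its decomposition, and recombine using $\otimes\top$ to discard states. The hypothesis ``no $\exists$'' makes $P$ and $Q$ downward closed in the set argument, so we may shrink $\hat S_P$ and $\hat S_Q$ to the singleton pieces plus units without breaking satisfaction, and then $\forall$ over $\hat S_Q$ holds. For (5), by induction on $P$ with no $\forall$, satisfaction is upward closed in $\hat S$. This gives $\llbracket P\otimes\top\rrbracket\subseteq\llbracket P\rrbracket$; the other direction takes $\top$'s part to be $\emptyset$. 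Upward closure for the $\star$ case uses units again.

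For (6), we project the syntactic witnesses $\hat S_P,\hat S_Q$ to $S_P,S_Q$. Since $P$ and $Q$ are closed we have $I_P=I_Q=\emptyset$, and $\subseteq$, plw and prw survive projection because $\oplus$ on intermediate states refines $\oplus$ on states. Item (7) is the \textbf{main obstacle}. From $S_P=\pi(\hat S_P)$ and $S_Q=\pi(\hat S_Q)$ we must construct one intermediate $\hat S$ over $S$ whose scaffold stores are simultaneously compatible with $\hat S_P$ and $\hat S_Q$, since $\oplus$ forces equal scaffold stores. The plan is to take disjoint scaffold supports $D_P, D_Q$ and assume semantic independence: satisfaction of $P$ depends only on scaffold variables in $D_P$, proved by induction. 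For each triple $\omega=\omega_P\oplus\omega_Q$ with lifts $\hat\omega_P,\hat\omega_Q$, we form the merged store that agrees with $\hat\omega_P$ on $D_P$ and with $\hat\omega_Q$ elsewhere. We then replace $\hat S_P$ by its re-storing on the complement of $D_P$ (and symmetrically for $\hat S_Q$), and show via independence that $P$ and $Q$ remain satisfied. Since a state may admit several decompositions, we include one merged lift per decomposition; this keeps plw and prw valid and still projects to exactly $S$.
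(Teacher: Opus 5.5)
There is a genuine gap in (7), plus a few smaller slips elsewhere; items (1)--(3), (5) and (6) go through as you describe.

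\textbf{The gap in (7).} Including ``one merged lift per decomposition'' discards scaffold lifts that $P$ may need as distinct witnesses. Take $P\triangleq\es{\sigma_1}\es{\sigma_2}\,\sigma_1(\delta=1)\land\sigma_2(\delta=2)$ and $Q\triangleq\fs{\sigma}\sigma(\emp)$. Take an $\ok$-state $\omega$ and its unit $u$.
\begin{itemize}
\item $\{\omega\}\in\llbracket P\rrbracket$ holds only via lifts that contain both a $\delta=1$ copy and a $\delta=2$ copy of $\omega$.
\item $\{\omega\}\in\llbracket P\rrbracket\star\llbracket Q\rrbracket$ holds through the unique decomposition $\omega=\omega\oplus u$.
\end{itemize}
Your $\hat S_P'$ is therefore a singleton, and it violates $P$.

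\textbf{How to repair (7).}
\begin{itemize}
\item Merge every decomposition with every pair of lifts $(\Delta_P,\omega_P)\in\hat S_P$ and $(\Delta_Q,\omega_Q)\in\hat S_Q$. State-level plw/prw then give every element of $\hat S_P$ (resp.\ $\hat S_Q$) a merged copy. That is, $\rho_{D_P}(\hat S_P')=\rho_{D_P}(\hat S_P)$, where $\rho_D(\Delta,\omega)\triangleq(\Delta|_D,\omega)$.
\item The independence lemma then cannot be about a pointwise re-storing, because one element of $\hat S_P$ may acquire several copies that differ outside $D_P$. It must say that $\models_0$ is invariant under equality of the $D$-projections of the sets.
\item Its induction must carry bindings $I,J$ with $\rho_D\circ I=\rho_D\circ J$, $\mathrm{Ran}(I)\subseteq\hat S$ and $\mathrm{Ran}(J)\subseteq\hat T$.
\item In the $\star$ case, rebuild the witness sets by giving each component of a decomposition of $\hat\omega\in\hat S$ the scaffold store of a $\rho_D$-twin $\hat\tau\in\hat T$.
\item The range hypotheses are what put $J_P(\sigma)$ into the rebuilt $\hat T_P$ with a plw partner. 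That partner is $J_Q(\sigma)$, which combines to $J(\sigma)\in\hat T$.
\end{itemize}

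\textbf{Smaller slips.}
\begin{itemize}
\item In (0), the sets $\{\hat\omega_p\}\cup\{\text{units}\}$ cover neither $\hat S$ nor the other bound states. Take instead $\hat S_P=\hat S\cup\{\hat\omega_p\}$ and $\hat S_Q=\{\text{units of }\hat S\}\cup\{\hat\omega_q\}$. For $\sigma'\neq\sigma$, set $I_P(\sigma')=I(\sigma')$ and $I_Q(\sigma')$ to its unit. The hypothesis $\mathrm{Ran}(I)\subseteq\hat S$ is needed because plw for $\hat\omega_p$ is witnessed by $I(\sigma)\in\hat S$, not because of the units.
\item In (4), do not add units. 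Otherwise $\fs{\sigma}Q$ would have to hold at the units, which fails e.g.\ for $Q=\sigma(x\mapsto 5)$. The singletons $\{\hat\omega\}$, $\{\hat\omega_P\}$ and $\{\hat\omega_Q\}$ suffice.
\item Also in (4), downward closure of $\exists$-free formulas holds only when the bound states stay in the smaller set. For instance, $\sigma(\top)\star\sigma(\top)$ holds for $\{I(\sigma)\}$ but not for $\emptyset$. Your singletons do respect this.
\item Your two-step argument in (3) works for arbitrary $I$. So it is not the reason the item is phrased with $\llbracket\cdot\rrbracket$, though this does no harm.
\end{itemize}
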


Note that we distinguish between $P\models Q$ and $\llbracket P\rrbracket\subseteq\llbracket Q\rrbracket$:
the former applies to arbitrary formulae, while the latter is only meaningful for closed formulae, as it is trivially satisfied otherwise.
In particular, $P\models Q$ implies $\llbracket P\rrbracket\subseteq\llbracket Q\rrbracket$.

On a separate note, since scaffold variable work as a global (per-state) quantification, property \eqref{testtest} cannot generally be expected to hold:
$\llbracket\fs{\sigma}\sigma(\delta=5)\rrbracket\star\llbracket\fs{\sigma}\sigma(\delta=6)\rrbracket=\UNIV\okl\star\UNIV\okl=\UNIV\okl=\Pow[\{\la\Lambda,\sigma\el\ra\mid\epsilon=\ok\}]$,
whereas $\llbracket\big(\fs{\sigma}\sigma(\delta=5)\big)\star\big(\fs{\sigma}\sigma(\delta=6)\big)\rrbracket=\emptyset$.
This is not problematic, as one can simply rename any intersecting scaffold variables.

\section{Rules and Proof Sketches}
\label{sec:appendix-soundness}
\begin{figure}
	\footnotesize
	\begin{mathpar}
		\newrule{TrivPost}{ax:triv-post}{\hoare{P}{C}{\top}}
		\and
		\newrule{IdxUnion}{rule:idx-union}[\forall i\in I\ldotp\hoare{P_i}{C}{Q_i}]{\hoare{\mcup_{i\in I}P_i}{C}{\mcup_{i\in I}Q_i}}
		\and
		\newrule{IdxJoin}{rule:idx-join}[\forall i\in I\ldotp\hoare{P_i}{C}{Q_i}]{\hoare{\motimes_{i\in I}P_i}{C}{\motimes_{i\in I}Q_i}}
		\and
		\newrule{Iter}{rule:iter}[\forall n\in\mathbb{N}\ldotp\hoare{I_n}{C}{I_{n+1}}]{\hoare{I_0}{C^*}{\motimes_{n\in\mathbb{N}}I_n}}
		\and
		\newrule{Skip}{ax:skip}{\hoare{P}{\Skip}{P}}
		\and
		\newrule{If}{rule:if}[\hoare{P}{C_1}{Q_1}][\hoare{P}{C_2}{Q_2}]{\hoare{P}{C_1+C_2}{Q_1\otimes Q_2}}
		\and
		\newrule{Cons+}{rule:cons-plus}[\hoare{P}{C}{Q}][P'\subseteq P][\forall f\in\F[\md[C]]\ldotp\R[Q\star\Pow[f]]\subseteq\R[Q'\star\Pow[f]]]{\hoare{P'}{C}{Q'}}
		\and
		\newrule{Assign}{ax:assign}{\hoare{\{S\mid\{\la\Lambda,\la s[x\mapsto e(s)], h\ra\okl\ra\mid\la\Lambda,\la s,h\ra\okl\ra\in S\}\cup\{\la\Lambda,\sigma\el\ra\in S\mid\epsilon\not=\ok\}\in Q\}}{\Assign{x}{e}}{Q}}
		\and
		\newrule{Havoc}{ax:havoc}{\hoare{\{S\mid\{\la\Lambda,\la s[x\mapsto n], h\ra\okl\ra\mid\la\Lambda,\la s,h\ra\okl\ra\in S\}\cup\{\la\Lambda,\sigma\el\ra\in S\mid\epsilon\not=\ok\}\in Q\}}{\Havoc{x}}{Q}}
		\and
		\newrule{Assume}{ax:assume}{\hoare{\{S\mid\{\la\Lambda,\la s,h\ra\el\ra\in S\mid \epsilon=\ok\Rightarrow b(s)\}\in Q\}}{\Assume{b}}{Q}}
		\and
		\newrule{Assert}{ax:assert}{\hoare{\{S\mid\{\la\Lambda,\la s,h\ra\el\ra\in S\mid \epsilon=\ok\Rightarrow b(s)\}\cup\{\la\Lambda,\la s,h\ra\erl\ra\mid\la\Lambda,\la s,h\ra\okl\ra\in S\land\lnot b(s)\}\in Q\}}{\Assert{b}}{Q}}
		\and
		\newrule{WriteNull}{ax:write-null}{\hoare{(\fs{\sigma}\sigma(\ok:x=0))\star P}{\Write{x}{e}}{(\fs{\sigma}\sigma(\er:x=0))\star P}}
		\and
		\newrule{WriteFrd}{ax:write-frd}{\hoare{(\fs{\sigma}\sigma(\ok:x\mapsto\bot))\star P}{\Write{x}{e}}{(\fs{\sigma}\sigma(\er:x\mapsto\bot))\star P}}
		\and
		\newrule{WriteEr}{ax:write-er}{\hoare{(\fs{\sigma}\sigma(\ok:x=0\lor x\mapsto\bot))\star P}{\Write{x}{e}}{(\fs{\sigma}\sigma(\er:x=0\lor x\mapsto\bot))\star P}}
		\and
	\end{mathpar}
	\caption{Additional rules of Hyper Separation Logic.}
	\label{fig:additional-rules}
\end{figure}

In this appendix, we present additional selected rules of Hyper Separation Logic (\figref{fig:additional-rules}), illustrating cases not covered in the main text.
Analogues of \ruleref{ax:write-null}, \ruleref{ax:write-frd}, and \ruleref{ax:write-er} have been similarly established as sound for the read and free operations.

As noted in the main text, we have formally verified \emph{semantic} soundness\footnote{That is, implications are proven between valid triples $\hoare{\bm\cdot}{\!\!\!\bm\cdot\!\!\!}{\bm\cdot}\Rightarrow\hoare{\bm\cdot}{\!\!\!\bm\cdot\!\!\!}{\bm\cdot}$, rather than derived ones $\shoare{\bm\cdot}{\!\!\!\bm\cdot\!\!\!}{\bm\cdot}\Rightarrow\shoare{\bm\cdot}{\!\!\!\bm\cdot\!\!\!}{\bm\cdot}$.} in Isabelle/HOL.
While the rules in the paper are mostly presented syntactically, our formal development establishes stronger, semantically grounded\footnote{That is, the triples $\simpleHoare{P}{C}{Q}$ are formed using semantic hyper-assertions $P$ and $Q$ rather than syntactic ones.} versions.
For instance, the semantic counterpart of the side condition $\fv[F]\cap\md[C]=\emptyset$ used in the \ruleref{rule:frame} rule is given by
\[\forall S,S'\ldotp S\approx_{\md[C]}S'\Longrightarrow(S\in F\Longleftrightarrow S'\in F)\]
where
$S\approx_{\mathrm{vars}} S'$ holds whenever
\begin{enumerate}
	\item $\forall\la\Lambda,\la s,h\ra\el\ra\in S\ldotp\exists s'\ldotp(\forall x\notin\mathrm{vars}\ldotp s'(x)=s(x))\land\la\Lambda,\la s',h\ra\el\ra\in S'$; and\hfill \textit{//denoted $S\precsim_{\mathrm{vars}}S'$}
	\item $\forall\la\Lambda,\la s',h\ra\el\ra\in S'\ldotp\exists s\ldotp(\forall x\notin\mathrm{vars}\ldotp s(x)=s'(x))\land\la\Lambda,\la s,h\ra\el\ra\in S$.\hfill \textit{//denoted $S'\precsim_{\mathrm{vars}}S$}
\end{enumerate}
Below, we give two proof sketches---one for \ruleref{rule:frame} and one for \ruleref{ax:read}---of the more general semantic variants, which in turn entail the syntactic rules presented in \figref{fig:selected-rules}.

We begin by demonstrating the soundness of the \ruleref{rule:frame} rule, which forms a cornerstone of separation logic.
While this rule is typically among the most challenging, embedding all admissible $\forall$-frames into the definition of validity significantly eases its proof.
However, even though, as discussed in \secref{subsec:key:embedding-fr}, $\forall^+$-frames can be trivially expressed via an infinite disjunction of $\forall$-frames,
it is not immediately obvious that these $\forall$-frames are actually admissible. We proceed to establish that this is indeed the case%
\footnote{For readability, we write $\fv[F]\cap\mathrm{vars}=\emptyset$ and $\fv[f]\cap\mathrm{vars}=\emptyset$, though we formally mean their semantic counterparts.}:
\begin{lemma}\label{lem:split-downward-closed-to-union-of-powersets}
	Let $F$ be downward closed and be such that $\fv[F]\cap\mathrm{vars}=\emptyset$.
	Then there exists $F_{\mathrm{max}}\subseteq F$ such that $F=\bigcup_{f\in F_{\mathrm{max}}}\mathcal{P}(f)$ and $\forall f\in F_{\mathrm{max}}\ldotp\fv[f]\cap\mathrm{vars}=\emptyset$.
\end{lemma}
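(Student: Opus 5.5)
The plan is to construct $F_{\mathrm{max}}$ explicitly as the members of $F$ that are already closed under changing the variables in $\mathrm{vars}$. For a set of states $S$, define its \emph{$\mathrm{vars}$-closure}
\[
\mathrm{cl}(S)\triangleq\{\la\Lambda,\la s',h\ra\el\ra\mid \la\Lambda,\la s,h\ra\el\ra\in S\land \forall x\notin\mathrm{vars}\ldotp s'(x)=s(x)\}.
\]
Then set $F_{\mathrm{max}}\triangleq\{S\in F\mid \mathrm{cl}(S)=S\}$. By construction $F_{\mathrm{max}}\subseteq F$.

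I would first establish three elementary facts about $\mathrm{cl}$, all following from the fact that ``agreeing outside $\mathrm{vars}$'' is an equivalence relation on stores.
\begin{enumerate}
\item $S\subseteq\mathrm{cl}(S)$, by taking $s'=s$.
\item $\mathrm{cl}(\mathrm{cl}(S))=\mathrm{cl}(S)$, by transitivity.
\item $S\approx_{\mathrm{vars}}\mathrm{cl}(S)$. Direction $S\precsim_{\mathrm{vars}}\mathrm{cl}(S)$ follows from (1), taking $s'=s$. Direction $\mathrm{cl}(S)\precsim_{\mathrm{vars}}S$ holds by the definition of $\mathrm{cl}$, using symmetry of the agreement relation to match the quantifier order in the definition of $\precsim$.
\end{enumerate}
Next I would check that a set $f$ with $\mathrm{cl}(f)=f$ satisfies the semantic condition $\fv[f]\cap\mathrm{vars}=\emptyset$. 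This is the assertion-level condition from the footnote in \secref{subsec:formal:hyper-triples}: membership of $\la\Lambda,\la s,h\ra\el\ra$ in $f$ is invariant under replacing $s$ by any $s'$ agreeing outside $\mathrm{vars}$. The forward implication is exactly $\mathrm{cl}(f)\subseteq f$. The backward implication is the same statement with $s$ and $s'$ swapped, again by symmetry.

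For the equation $F=\bigcup_{f\in F_{\mathrm{max}}}\mathcal{P}(f)$, I would argue both inclusions.
\begin{itemize}
\item[$(\supseteq)$] If $S\subseteq f$ with $f\in F_{\mathrm{max}}\subseteq F$, then $S\in F$ because $F$ is downward closed.
\item[$(\subseteq)$] Given $S\in F$, fact (3) together with the hypothesis $\fv[F]\cap\mathrm{vars}=\emptyset$ (semantically, invariance of membership in $F$ under $\approx_{\mathrm{vars}}$) yields $\mathrm{cl}(S)\in F$. By (2), $\mathrm{cl}(S)\in F_{\mathrm{max}}$, and by (1), $S\in\mathcal{P}(\mathrm{cl}(S))$.
\end{itemize}

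I do not expect a serious obstacle; the argument is essentially a closure construction. The step needing the most care is (3), the verification that $S\approx_{\mathrm{vars}}\mathrm{cl}(S)$ with the quantifier structure exactly as in the definition of $\precsim_{\mathrm{vars}}$. A secondary point is making sure that the assertion-level notion of $\fv[f]\cap\mathrm{vars}=\emptyset$ used for admissibility in $\F[\mathrm{vars}]$ is the biconditional one, so that symmetry is genuinely needed and available. Note that the labels and logical stores are untouched by $\mathrm{cl}$, so the construction preserves $\ok$-only-ness of $F$ automatically. This matters when the lemma is later used to obtain admissible frames.
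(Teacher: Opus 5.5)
Your proposal is correct and takes essentially the paper's approach. Your $\mathrm{cl}(S)$ is exactly the greatest element of the $\approx_{\mathrm{vars}}$-class of $S$ that the paper invokes, and your $F_{\mathrm{max}}=\{S\in F\mid \mathrm{cl}(S)=S\}$ coincides with the paper's $\{S'\in F\mid \forall S\approx_{\mathrm{vars}}S'\ldotp S\subseteq S'\}$; the two inclusions and the check that fixed points of $\mathrm{cl}$ satisfy the pointwise free-variable condition follow the paper's sketch, with more of the details spelled out.
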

\begin{proof}[Proof sketch]
	Let $F_{\mathrm{max}}\leftrightharpoons\{S'\in F\mid\forall S\approx_{\mathrm{vars}}S'\ldotp S\subseteq S'\}\subseteq F$.
	Using the fact that $\approx_{\mathrm{vars}}$ is an equivalence relation, the fact that for each $S$, the set $\{S'\mid S'\approx_{\mathrm{vars}}S\}$ contains the greatest element w.r.t. $\subseteq$
	and that $\fv[F]\cap\mathrm{vars}=\emptyset$, we obtain that $F\subseteq\bigcup_{f\in F_{\mathrm{max}}}\mathcal{P}(f)$.
	The converse, $\bigcup_{f\in F_{\mathrm{max}}}\mathcal{P}(f)\subseteq F$, follows directly from the fact that $F$ is downward closed.
	Lastly, since for each $S$, the greatest element of $\{S'\mid S'\approx_{\mathrm{vars}}S\}$, $S_{max}$, is easily verified to satisfy $\fv[S_{max}]\cap\mathrm{vars}=\emptyset$, we conclude that
	$\forall f\in F_{\mathrm{max}}\ldotp\fv[f]\cap\mathrm{vars}=\emptyset$.
\end{proof}

With this key lemma, required for the soundness of the \ruleref{rule:frame}, now established, we proceed to sketch the proofs of the two rule:
\begin{proof}[Proof sketch (\ruleref{rule:frame})]
	Using Lemma \ref{lem:split-downward-closed-to-union-of-powersets}, let $F_{\mathrm{max}}\subseteq F$ be such that $F=\bigcup_{f\in F_{\mathrm{max}}}\mathcal{P}(f)$ and $\forall f\in F_{\mathrm{max}}\ldotp\fv[f]\cap\mathrm{vars}=\emptyset$.
	Now, since $F$ is $\ok$-only, and hence $\forall f\in F_{\mathrm{max}}\ldotp f$ is $\ok$-only, we obtain that $\forall f\in F_{\mathrm{max}}\ldotp f\in\F[\md[C]]$.
	Therefore \mbox{$\forall f\in F_{\mathrm{max}}\ldotp\hoare{P\star\mathcal{P}(f)}{C}{Q\star\mathcal{P}(f)}$} by the definition of validity 
	and the fact that $\mathcal{F}$ is closed under $*$, together with associativity of $\star$.
	By \ruleref{rule:idx-union} we obtain that $\hoare{\bigcup_{f\in F_{\mathrm{max}}}P\star\mathcal{P}(f)}{C}{\bigcup_{f\in F_{\mathrm{max}}}Q\star\mathcal{P}(f)}$.
	Now, since $\star$ distributes over (infinite) union, is commutative and the fact that $F=\bigcup_{f\in F_{\mathrm{max}}}\mathcal{P}(f)$, we conclude that $\hoare{P\star F}{C}{Q\star F}$.
\end{proof}
\begin{proof}[Proof sketch (\ruleref{ax:read})]
	Let $S\models(\fs{\sigma}\sigma(\ok:x\mapsto e))\star P,\baseok[P]$ and $y\notin\fv[P]\cup\const{pvars}(e)\cup\{x\}$.
	For the purposes of this proof sketch, we can safely ignore the embedded frame rule as the conjunct $\fs{\sigma}\sigma(\ok:x\mapsto e)$ already supplies the necessary resources for $\Read{x}{y}$;
	in the full proof this is handled more rigorously but is immaterial to the high-level argument.
	Consider the partitioning $S=S\okl\cup S\erl\cup S\ukl$, with $S\el$ defined as the subset of states in $S$ whose label is exactly $\epsilon$.
	Now, since $\er$ and $\uk$ states don't change under $\Sem$, we've that \[\Sem[\Read{x}{y}][S]=\Sem[\Read{x}{y}][S\okl]\cup S\erl\cup S\ukl\]
	We claim that $\Sem[\Read{x}{y}][S\okl]\cup S\erl\cup S\ukl\models\R[(\fs{\sigma}\sigma(\ok:x\mapsto e\land y=e))\star P]$.
	Indeed, consider $S'\leftrightharpoons\Sem[\Read{x}{y}][S\okl]\cup S'\ukl$, where $S'\ukl$ is obtained from $S\erl\cup S\ukl$ by updating the value of $y$ to $e$ and setting all labels to $\uk$.
	It is clear that \[\Res[S'][\Sem[\Read{x}{y}][S\okl]\cup S\erl\cup S\ukl]\] and all that remains is to show that $S'\models(\fs{\sigma}\sigma(\ok:x\mapsto e\land y=e))\star P$.
	In order to do that, recall that $S\models(\fs{\sigma}\sigma(\ok:x\mapsto e))\star P$.
	Now, consider the auxiliary set of states $S_{aux}$, obtained by replacing all $\er$ labels in $S$ to $\uk$ ones. Note that, unlike $S'\ukl$, we haven't (yet) change the values of $y$.
	Since we've that $\baseok[P]$, we have that $S_{aux}\models(\fs{\sigma}\sigma(\ok:x\mapsto e))\star P$.
	Finally, note that $S'$ is exactly $S_{aux}$ with $y$ values changed to $e$ and since $y\notin\const{pvars}(e)\cup\{x\}$, it follows that $x\mapsto e\land y=e$ holds for all states of $S'$.
	Therefore, since $y\notin\fv[P]$, we conclude that \[S'\models(\fs{\sigma}\sigma(\ok:x\mapsto e\land y=e))\star P\]
	The formal proof considers two certificates of $S\models(\fs{\sigma}\sigma(\ok:x\mapsto e))\star P$, namely $S_x\models\fs{\sigma}\sigma(\ok:x\mapsto e)$ and $S_P\models P$ and performs analogous steps as the outlined in the sketch.
	The modified $S_x$ and $S_P$ are then used as the certificates for $S'\models(\fs{\sigma}\sigma(\ok:x\mapsto e\land y=e))\star P$.
\end{proof}

We conclude this appendix with a discussion about the two consequence rules: \ruleref{rule:cons} and \ruleref{rule:cons-plus}.
\ruleref{rule:cons} is standard, while \ruleref{rule:cons-plus} is slightly more involved, particularly its third assumption. This assumption captures the weakest form of entailment required for a sound consequence rule.
The subtlety is that this entailment depends on the program variables modified, $\md[C]$: the more variables $C$ modifies, the weaker the required entailment%
\footnote{The reason for this is that $\F$ is reverse monotonic, i.e., $X\subseteq Y\Longrightarrow\F[X]\supseteq\F[Y]$.}.
For instance, $\fs{\sigma}\sigma(\ok:x=5)$ entails $\fs{\sigma}\sigma(\uk:x=6)$ if $x \in \md[C]$, but not otherwise.
\ruleref{rule:cons-plus} is strictly stronger than \ruleref{rule:cons}: $\fs{\sigma}\sigma(\ok:x=5)$ entails $\fs{\sigma}\sigma(\uk:x=5)$ independently of $C$, yet it is not a subset.

This stronger rule is mainly of interest for the development of the logic itself rather than for end users, but we include it for completeness.
It is worth noting that this is not the strongest possible consequence rule. To illustrate that, consider the following, formally proven in Isabelle/HOL, equivalence
$$
\hoare{P}{C}{Q}
\Longleftrightarrow
\forall f\in\F[\md[C]]\ldotp
\forall S \in \R[P \star\Pow[f]] \ldotp
\Sem[C][S] \in \R[Q \star\Pow[f]]
$$
The reverse direction holds trivially, since $P \subseteq \R[P]$ for any $P$.
The forward direction, on the other hand, holds because $\R$ only overapproximates unknown states and
these unknown states already have corresponding elements in the postcondition that overapproximate them (since $\hoare{P}{C}{Q}$), and those elements in turn overapproximate the set $S$.
It is now easy to see that the same weakening can be done to the second assumption of \ruleref{rule:cons-plus} as was done to its third.
For virtually all practical purposes, aside from perhaps considerations of completeness, \ruleref{rule:cons-plus} is sufficient.

Finally, we do not address completeness, as the definition of validity admits certain ill-behaved triples.
Such triples establish validity by exploiting unknown states, yet provide no meaningful information;
for example, $\hoare{\fs{\sigma}\sigma(\ok:x=5)}{\Assign{x}{x}}{\fs{\sigma}\sigma(\uk:x=6)}$.
To properly explore completeness, we would first need to "tighten" the overapproximation performed by $\R$.
Although ill-behaved valid triples exist, \thmref{th:adequacy} demonstrates that the definition is adequate for a broad class of triples.
We call such triples ill-behaved because they violate the following natural “sanity check” for the definition of validity:
\[\llbracket C\rrbracket=\llbracket C'\rrbracket\Longrightarrow(\hoare{P}{C}{Q}\Longleftrightarrow\hoare{P}{C'}{Q})\]
For example, $\llbracket\Assign{x}{x}\rrbracket=\llbracket\Skip\rrbracket$, but $\not\hoare{\fs{\sigma}\sigma(\ok:x=5)}{\Skip}{\fs{\sigma}\sigma(\uk:x=6)}$.
This may appear problematic at first, but in practice it only arises in contrived cases that rely on unknown states and does not affect the behavior of well-formed, meaningful specifications.

\section{Examples}
\label{sec:appendix-main-examples}
In this appendix we provide additional examples that highlight the expressiveness of Hyper Separation Logic.
All examples in the appendix assume affine interpretation of $\mapsto$. Moreover, conjunction has precedence over star.

\subsection{Reachability of Errors}

\begin{figure}
	\begin{minipage}{\linewidth} 
		\small
		\begin{align*}
			\hspace{0pt}
			&\hasrt{\es{\sigma}\sigma(\ok:\top)}\\
			&x\coloneqq\mathrm{alloc}()\tag{\ruleref{ax:alloc}}\\
			&\hasrt{(\fs{\sigma}\sigma(\ok:x\mapsto\_))\star(\es{\sigma}\sigma(\ok:\top))}\\
	\models &\hasrt{\es{\sigma}\sigma(\ok:x=x*x\mapsto\_)}\\
			&\Assign{y}{x}\tag{\ruleref{ax:assign}}\\
			&\hasrt{\es{\sigma}\sigma(\ok:x=y*x\mapsto\_)}\\
	\models &\hasrt{((\fs{\sigma}\sigma(\ok:y\mapsto\_))\star(\es{\sigma}\sigma(\ok:x=y)))\otimes\top}\\
			&\mathrm{free}(y)\tag{\ruleref{ax:free},\ \ruleref{rule:join-true}}\\
			&\hasrt{((\fs{\sigma}\sigma(\ok:y\mapsto\bot))\star(\es{\sigma}\sigma(\ok:x=y)))\otimes\top}\\
	\models &\hasrt{((\fs{\sigma}\sigma(\ok:x\mapsto\bot))\star(\es{\sigma}\sigma(\ok:x=y)))\otimes\top}\\
			&\Write{x}{5}\tag{\ruleref{ax:write-frd},\ \ruleref{rule:join-true}}\\
			&\hasrt{((\fs{\sigma}\sigma(\er:x\mapsto\bot))\star(\es{\sigma}\sigma(\ok:x=y)))\otimes\top}\\
			\models &\hasrt{\es{\sigma}\sigma(\er:x\mapsto\bot)}\\
		\end{align*}
		\caption{Proof that the program in black encounters an error caused by freeing aliased memory.}
		\label{fig:example-error}
	\end{minipage}%
\end{figure}

The first example (\figref{fig:example-error}) illustrates an error-reachability scenario, an $\exists$-hyperproperty.
The program allocates memory at $x$, then creates an alias by assigning $x$ to $y$.
Freeing the memory through $y$ therefore also frees the location referenced by $x$, making the subsequent deallocation at $x$ erroneous.

\subsection{Monotonicity}

\begin{figure}
	\begin{minipage}{\linewidth}
		\footnotesize
		\begin{align*}
			\hspace{10pt}
			&\hasrt{\Mono[x][t][\delta_x]\star\Mono[y][t][\delta_y]}\\
			\models &\hasrt{(\fs{\sigma}\sigma(x\mapsto\delta_x))\!\star\!(\fs{\sigma}\sigma(y\mapsto\delta_y))\!\star\!(\fs{\sigma_1}\!\fs{\sigma_2}\!\en{n}\sigma_1(t\!=\!0\Rightarrow\delta_x\!+\!\delta_y=n)\!\land\!\sigma_2(t\!\neq\!0\Rightarrow\delta_x\!+\!\delta_y>n))}\\
			&\Read{v_x}{x}\tag{\ruleref{ax:read-scaffold}}\\
			&\hasrt{(\fs{\sigma}\sigma(x\mapsto\delta_x\land v_x=\delta_x))\star(\fs{\sigma}\sigma(y\mapsto\delta_y))\star(\fs{\sigma_1}\fs{\sigma_2}\en{n}\cdots)}\\
			\models &\hasrt{(\fs{\sigma}\sigma(y\mapsto\delta_y))\star(\fs{\sigma}\sigma(x\mapsto\delta_x\land v_x=\delta_x))\star(\fs{\sigma_1}\fs{\sigma_2}\en{n}\cdots)}\\
			&\Read{v_y}{y}\tag{\ruleref{ax:read-scaffold}}\\
			&\hasrt{(\fs{\sigma}\sigma(y\mapsto\delta_y\land v_y=\delta_y))\star(\fs{\sigma}\sigma(x\mapsto\delta_x\land v_x=\delta_x))\star\cdots}\\
			&\Alloc{z}\tag{\ruleref{ax:alloc}}\\
			&\hasrt{(\fs{\sigma}\sigma(z\mapsto\_))\star(\fs{\sigma}\sigma(y\mapsto\delta_y\land v_y=\delta_y))\star(\fs{\sigma}\sigma(x\mapsto\delta_x\land v_x=\delta_x))\star\cdots}\\
			&\Write{z}{v_x+v_y}\tag{\ruleref{ax:write}}\\
			&\hasrt{(\fs{\sigma}\sigma(z\mapsto v_x+v_y))\star(\fs{\sigma}\sigma(y\mapsto\delta_y\land v_y=\delta_y))\star(\fs{\sigma}\sigma(x\mapsto\delta_x\land v_x=\delta_x))\star\cdots}\\	
			\models &\hasrt{(\fs{\sigma}\sigma(z\mapsto v_x+v_y)\star\sigma(y\mapsto\delta_y\land v_y=\delta_y)\star\sigma(x\mapsto\delta_x\land v_x=\delta_x))\star\cdots}\\
			\models &\hasrt{(\fs{\sigma}\sigma(z\mapsto v_x+v_y*y\mapsto\delta_y\land v_y=\delta_y*x\mapsto\delta_x\land v_x=\delta_x))\star\cdots}\\
			\models &\hasrt{(\fs{\sigma}\sigma(z\mapsto\delta_x+\delta_y)\star(\fs{\sigma_1}\fs{\sigma_2}\en{n}\sigma_1(t=0\Rightarrow\delta_x+\delta_y=n)\land\sigma_2(t\neq0\Rightarrow\delta_x+\delta_y>n))}\\
			\models &\hasrt{\Mono[z][t][\delta_z]}
		\end{align*}
		\caption{Proof that the program in black satisfies monotonicity.}
		\label{fig:mono-example}
	\end{minipage}
\end{figure}

The second example (\figref{fig:mono-example}) illustrates monotonicity, a $\forall\forall$-hyperproperty.
The initial precondition asserts that both $x$ and $y$ are allocated, with the values they point to always larger when  $t\neq0$ than when $t=0$:
\[\Mono[x][t][\delta_x]\triangleq(\fs{\sigma}\sigma(x\mapsto\delta_x))\star(\fs{\sigma_1}\fs{\sigma_2}\en{n}\sigma_1(t=0\Rightarrow\delta_x=n)\land\sigma_2(t\neq0\Rightarrow\delta_x>n))\]
We then read the pointed-to values into $v_x$ and $v_y$ and allocate a new heap location $z$ to store their sum.
This demonstrates that, within our logic, monotonicity can be formally proven to be preserved under addition.

\subsection{Non-interference}

\begin{figure}
	\begin{minipage}{\linewidth}
		\footnotesize
		\begin{align*}
			\hspace{0pt}
			&\hasrt{\lowa[x][\delta_x]}\\
			&v\coloneqq[x]\tag{\ruleref{ax:read-scaffold}}\\
			&\hasrt{(\fs{\sigma}\sigma(x\mapsto\delta_x\land v=\delta_x))\star(\fs{\sigma_1}\fs{\sigma_2}\en{n}\sigma_1(\delta_x=n)\land\sigma_2(\delta_x=n))}\\
	\models &\hasrt{(\fs{\sigma}\sigma(x\mapsto\sth))\star(\fs{\sigma_1}\fs{\sigma_2}\en{n}\sigma_1(v=n)\land\sigma_2(v=n))}\\
			&\mathrm{free}(x)\tag{\ruleref{ax:free}}\\
			&\hasrt{(\fs{\sigma}\sigma(x\mapsto\bot))\star(\fs{\sigma_1}\fs{\sigma_2}\en{n}\sigma_1(v=n)\land\sigma_2(v=n))}\\
	\models &\hasrt{\fs{\sigma_1}\fs{\sigma_2}\en{n}\sigma_1(v=n)\land\sigma_2(v=n)}\\
			&y\coloneqq\mathrm{alloc}()\tag{\ruleref{ax:alloc}}\\
			&\hasrt{(\fs{\sigma}\sigma(y\mapsto\_))\star(\fs{\sigma_1}\fs{\sigma_2}\en{n}\sigma_1(v=n)\land\sigma_2(v=n))}\\
			&[y]\coloneqq2\cdot v+1\tag{\ruleref{ax:write}}\\
			&\hasrt{(\fs{\sigma}\sigma(y\mapsto2v+1))\star(\fs{\sigma_1}\fs{\sigma_2}\en{n}\sigma_1(v=n)\land\sigma_2(v=n))}\\
	\models &\hasrt{(\fs{\sigma}\sigma(y\mapsto2v+1))\star(\fs{\sigma_1}\fs{\sigma_2}\en{n}\sigma_1(2v+1=n)\land\sigma_2(2v+1=n))}\\
	\models &\hasrt{\lowa[y][\delta_y]}
		\end{align*}
		\caption{Proof that the program in black satisfies non-interference.}
		\label{fig:non-interference-example}
	\end{minipage}
\end{figure}

The third example (\figref{fig:non-interference-example}) illustrates non-interference, a $\forall\forall$-hyperproperty.
The initial precondition
\[\lowa[x][\delta_x]\triangleq(\fs{\sigma}\sigma(x\mapsto\delta_x))\star(\fs{\sigma}\fs{\sigma'}\en{n}\sigma(\delta_x=n)\land\sigma'(\delta_x=n))\]
asserts that in all starting states, $x$ is allocated with the same value. First, we read this value into $v$ and free it.
Then, we allocate a new heap location $y$ and store a deterministically modified version of the original value.
Since the original value was identical across all states, the deterministic transformation preserves equality, maintaining non-interference.

\subsection{Non-determinism}

\begin{figure}
	\begin{minipage}{\linewidth}
		\scriptsize
		\begin{align*}
			\hspace{0pt}
			&\hasrt{\es{\sigma}\sigma(x\mapsto\delta_x)}\\
	\models &\hasrt{((\fs{\sigma}\sigma(x\mapsto\delta_x))\star(\es{\sigma}\sigma(\top)))\otimes\top}\\
			&v\coloneqq[x]\tag{\ruleref{ax:read-scaffold},\ \ruleref{rule:join-true}}\\
			&\hasrt{((\fs{\sigma}\sigma(x\mapsto\delta_x\land v=\delta_x))\star(\es{\sigma}\sigma(\top)))\otimes\top}\\
	\models &\hasrt{\es{\sigma}\sigma(\top)}\\
	\models &\hasrt{\es{\sigma_1}\en{t_1}\es{\sigma_2}\en{t_2}\en{v_0}\en{n}\sigma_1(v=v_0\land t_1=n)\land\sigma_2(v=v_0\land t_2\not=n)}\\
			&t\coloneqq\mathrm{nonDet}()\tag{\ruleref{ax:havoc}}\\
			&\hasrt{\es{\sigma_1}\es{\sigma_2}\en{v_0}\en{n}\sigma_1(v=v_0\land t=n)\land\sigma_2(v=v_0\land t\not=n)}\\
			&z\coloneqq\mathrm{alloc}()\tag{\ruleref{ax:alloc}}\\
			&\hasrt{(\fs{\sigma}\sigma(z\mapsto\_))\star(\es{\sigma_1}\es{\sigma_2}\en{v_0}\en{n}\sigma_1(v=v_0\land t=n)\land\sigma_2(v=v_0\land t\not=n))}\\
			&[z]\coloneqq v+t\tag{\ruleref{ax:write}}\\
			&\hasrt{(\fs{\sigma}\sigma(z\mapsto v+t))\star(\es{\sigma_1}\es{\sigma_2}\en{v_0}\en{n}\sigma_1(v=v_0\land t=n)\land\sigma_2(v=v_0\land t\not=n))}\\
	\models &\hasrt{(\fs{\sigma}\sigma(z\mapsto\delta_z))\star(\es{\sigma_1}\es{\sigma_2}\en{n}\sigma_1(\delta_z=n)\land\sigma_2(\delta_z\neq n))}
		\end{align*}
		\caption{Proof that the program in black exhibits non-determinism.}
		\label{fig:non-determinism-example}
	\end{minipage}
\end{figure}

The next example (\figref{fig:non-determinism-example}) illustrates non-determinism, a $\exists\exists$-hyperproperty.
We begin with an initial set containing a state with $x$ allocated.
First, we read the value pointed to by $x$ into $v$, then allocate a new location $z$ and write $v+t$, where $t$ is chosen non-deterministically.
This example demonstrates that we can formally prove the existence of multiple distinct executions.

\subsection{Existence of Maximum}

\begin{figure}
	\begin{minipage}{\linewidth} 
		\tiny
		\begin{align*}
			\hspace{10pt}
			&\hasrt{(\es{\sigma_0}\fs{\sigma}\sigma(\ok:\top)} \\
			&\Alloc{x}; \tag{\ruleref{ax:alloc}}\\
			&\hasrt{(\fs{\sigma}\sigma(\ok:x\mapsto\_))\star(\es{\sigma_0}\fs{\sigma}\sigma(\ok:\top)} \\		
	\models &\hasrt{(\fs{\sigma}\fn{t}\sigma(\ok:t\not\le9)\lor\sigma(\ok:x\mapsto\_))\star(\es{\sigma_0}\en{t_0}\sigma_0(\ok:t_0\le9)\land\fs{\sigma}\fn{t}\sigma(ok:t\not\le9)\lor\en{n}\sigma_0(ok:t_0=n)\land\sigma(\ok:t\le n))} \\
			&\Havoc{t}; \tag{\ruleref{ax:havoc}}\\
			&\hasrt{(\fs{\sigma}\sigma(\ok:t\not\le9)\lor\sigma(\ok:x\mapsto\_))\star(\es{\sigma_0}\sigma_0(\ok:t\le9)\land\fs{\sigma}\sigma(\ok:t\not\le9)\lor\en{n}\sigma_0(\ok:t=n)\land\sigma(\ok:t\le n))} \\
			&\Assume{t\le9}; \tag{\ruleref{ax:assume}}\\
			&\hasrt{(\fs{\sigma}\sigma(\ok:x\mapsto\_))\star(\es{\sigma_0}\fs{\sigma}\en{n}\sigma_0(\ok:t=n)\land\sigma(\ok:t\le n))} \\
			&\Read{x}{t} \tag{\ruleref{ax:write}}\\
			&\hasrt{(\fs{\sigma}\sigma(\ok:x\mapsto t))\star(\es{\sigma_0}\fs{\sigma}\en{n}\sigma_0(\ok:t=n)\land\sigma(\ok:t\le n))} \\
		\end{align*}
		\caption{Proof that, when the program in black is executed from a non-empty $\ok$-only set of initial states, the set of final states (all with $x$ allocated) contains at least one with a maximal allocated value.}
		\label{fig:example-max}
	\end{minipage}%
\end{figure}

The next example (\figref{fig:example-max}) illustrates existence of a maximal pointed-to value, a $\exists\forall$-hyperproperty.
We begin with a non-empty set of initial states, allocate memory at $x$, and then non-deterministically choose a natural number $t\le9$ to write at $x$.
Since the chosen non-deterministic value is bounded by $9$ from above, then there exists a maximal value of the pointed-to value, namely $9$.

Note the explicit use of labels here. This is necessary because one might involuntary write $\lnot\sigma(\ok:t\le9)$, i.e., $\sigma(\ok:t\not\le9)\lor\sigma(\er:\top)\lor\sigma(\uk:\top)$;
however, the intention is to negate only the $\ok$ part, i.e., $\sigma(\ok:t\not\le9)$.

\end{document}